\documentclass[12pt]{article}

\usepackage{latexsym,amsfonts,amsmath,
epsfig,
tabularx,amsthm,dsfont,mathrsfs}

\usepackage{xcolor}
\usepackage{graphicx}

\usepackage{enumitem}

\usepackage{tikz}
\newcommand\copyrighttext{%
  \footnotesize Link to formal publication: https://doi.org/10.1016/j.spa.2019.06.015 \\
  \textcopyright 2019. This manuscript version is made available under the CC-BY-NC-ND 4.0 license http://creativecommons.org/licenses/by-nc-nd/4.0/ }
\newcommand\copyrightnotice{%
\begin{tikzpicture}[remember picture,overlay]
\node[anchor=south,yshift=10pt] at (current page.south) {\fbox{\parbox{\dimexpr\textwidth-\fboxsep-\fboxrule\relax}{\copyrighttext}}};
\end{tikzpicture}%
}





\theoremstyle{definition}

\newtheorem{thm}{Theorem} 
\newtheorem{rem}[thm]{Remark}
\newtheorem{alg}{Algorithm}
\newtheorem{lem}[thm]{Lemma}
\newtheorem{prop}[thm]{Proposition}
\newtheorem{cor}[thm]{Corollary}

\newtheorem{ass}[thm]{Assumption}

\newcommand\N{\mathbb{N}}

\newcommand{\wt}{\widetilde}
\newcommand{\widebar}[1]{\mbox{\kern1.5pt\hbox{\vbox{\hrule height 0.6pt \kern0.35ex
        \hbox{\kern-0.15em \ensuremath{#1 }\kern0.0em}}}}\kern-0.1pt}

\newcommand{\E}{\mathbb{E}}

\newcommand{\dint}{{\rm d}}

\newcommand{\abs}[1]{\left\vert #1 \right\vert}
\newcommand{\norm}[1]{\left\Vert #1 \right\Vert}

\newlength{\fixboxwidth}
\setlength{\fixboxwidth}{\marginparwidth}
\addtolength{\fixboxwidth}{-7pt}

\title{\vspace{-50pt}
Perturbation Bounds for\\ Monte Carlo within Metropolis\\ via Restricted Approximations
} 

\date{}

\author{
Felipe Medina-Aguayo\thanks{Department of Mathematics and Statistics, University of Reading Whiteknights, PO Box 220, Reading RG6 6AX, United Kingdom, email: f.j.medinaaguayo@reading.ac.uk} 
\and  
Daniel Rudolf\thanks{Institute for Mathematical Stochastics,  Universit\"at G\"ottingen \& Felix-Bernstein-Institute for Mathematical Statistics, Goldschmidtstra\ss e 7, 37077 G\"ottingen,  Germany,  email: daniel-rudolf@uni-goettingen.de} 
\and  
Nikolaus Schweizer\thanks{Department of Econometrics and OR, Tilburg University, PO Box 90153, 5000 LE Tilburg, The Netherlands, email: n.f.f.schweizer@uvt.nl}
}

\begin{document}
\maketitle
\copyrightnotice

\begin{abstract}
\noindent
The Monte Carlo within Metropolis (MCwM) algorithm, interpreted as a perturbed Metropolis-Hastings (MH) algorithm, provides an approach for approximate sampling when the target distribution is intractable. Assuming the unperturbed Markov chain is geometrically ergodic, we show explicit estimates of the difference between the $n$-th step distributions of the perturbed MCwM and the unperturbed MH chains. These bounds are based on novel perturbation results for Markov chains which are of interest beyond the MCwM setting. To apply the bounds, we need to control the difference between the transition probabilities of the two chains and to verify stability of the perturbed chain.
\end{abstract}

\section{Introduction}
The \emph{Metropolis--Hastings} (MH) algorithm is a classical method for sampling approximately 
from a distribution of interest
relying only on point-wise evaluations of an \emph{unnormalized} 
density. However, when even this 
unnormalized
density depends on unknown integrals and cannot easily
be evaluated, then this approach is not feasible. A possible solution
is to replace the required density evaluations in the MH acceptance
ratio with suitable approximations. This idea is implemented in \emph{Monte Carlo within Metropolis}
(MCwM) algorithms which substitute the unnormalized density evaluations by Monte Carlo estimates 
 for the intractable integrals.

Yet in general, replacing the exact MH acceptance ratio by an approximation
leads to inexact algorithms in the sense that a stationary distribution of the transition kernel of the resulting 
	Markov chain (if it exists) is not the distribution of interest.
	Moreover, convergence to a distribution is not at all clear.
Nonetheless, these approximate, perturbed,
or noisy methods, see e.g. \cite{AFEB14,JoMa17B,Joetal15}, have recently
gained increased attention due to their applicability in certain intractable
sampling problems. 
In this work we attempt to answer the following questions
about the MCwM algorithm:
\begin{itemize}
\item Can one quantify the quality of MCwM algorithms?
\item When might the MCwM algorithm fail and what can one do in such situations?
\end{itemize}

Regarding the first question, 
by using bounds on the difference of the $n$-th step distributions of {a MH and a MCwM algorithm based Markov chain}
we give a positive answer. 
For the second question, we suggest a modification for stabilizing the MCwM approach by restricting
the Markov chain to a suitably chosen set that contains the ``essential part'', 
{which we also call the ``center''}
of the state space. 
We provide examples where this restricted version of MCwM converges towards the 
distribution of interest
while 
the unrestricted version does not. Note also that in practical implementations of Markov chain Monte Carlo on a computer,
simulated chains are effectively restricted to compact state spaces due to memory limitations. 
Our results on restricted approximations can also be read in this spirit. 

\noindent\textbf{{Perturbation theory}.}
Our overall approach is based on {perturbation theory} for Markov chains. Let $(X_n)_{n\in \N_0}$ be a Markov chain  
with transition kernel $P$ and $(\widetilde X_n)_{n\in\mathbb{N}_0}$ be a Markov chain
with transition kernel $\widetilde P$ on a common Polish space $(G,\mathcal{B}(G))$. 
We think of $P$ and $\widetilde P$ as ``close'' to each other in a suitable sense and consider $\widetilde P$ as a perturbation of $P$.
In order to quantify the difference of the distributions of $X_n$ and $\widetilde X_n$,
denoted by $p_n$ and $\widetilde p_n$ respectively, we work with
\begin{equation}\label{eq:norm.pns}
\norm{ p_n - \wt p_n }_{\rm tv},
\end{equation}
where $\norm{\cdot}_{\rm tv}$ 
denotes the total variation 
distance. 
The Markov chain $(X_n)_{n\in\mathbb{N}_0}$ 
can be interpreted as the unavailable, unperturbed, or
ideal chain; while $(\wt X_n)_{n\in \mathbb{N}_0}$ is a perturbation that is available for simulation. 
We focus on the case where the ideal Markov chain is \emph{geometrically ergodic}, more
precisely \emph{$V$-uniformly ergodic}, {implying that its transition kernel $P$} satisfies a \emph{Lyapunov condition} of the form
\begin{align*}
P V(x) \leq \delta  V(x) + L, \qquad x\in G,  
\end{align*}
for some function $V\colon G\to [1,\infty)$ and numbers 
$\delta\in [0,1), L\in [1,\infty)$. 

To obtain estimates of  \eqref{eq:norm.pns} we need two 
assumptions which can be informally 
explained
as follows:
\begin{enumerate}
 \item \emph{Closeness of $\widetilde P$ and $P$:}
{
The difference of $\widetilde P$ and $P$ is measured by controlling either
	a weighted total variation distance or a weighted 
	$V$-norm of $P(x,\cdot)-\widetilde P(x,\cdot)$ uniformly. Here, uniformity either refers to the entire state space or, at least, to the ``essential'' part of it.}
 \item \emph{Stability of $\widetilde P$:} {
 A stability condition on $\widetilde P$ is satisfied either in the form of a Lyapunov condition  or by restriction to the center of the state space determined by $V$.} 

\end{enumerate}
{Under these assumptions, explicit bounds on \eqref{eq:norm.pns} are provided in Section~\ref{sec: quant_est}. More precisely, in Proposition~\ref{thm: est1} and Theorem~\ref{thm: est2} stability is guaranteed through a Lyapunov condition for $\widetilde P$, whereas in Theorem~\ref{thm: restricted_approx} a \emph{restricted approximation} $\wt P$ is considered.}

\noindent\textbf{{Monte Carlo within Metropolis}. }
 {In Section~\ref{sec: approx_MH}
we apply our perturbation bounds in the context of approximate sampling via 
MCwM. In the following we briefly introduce the setting.} 
The goal is to (approximately) sample from a target distribution $\pi$ 
on $G$, which is determined by an unnormalized density function $\pi_u\colon G\to [0,\infty)$ w.r.t
a reference measure $\mu$, that is,
\[
 \pi(A) = \frac{\int_A \pi_u(x) \,\dint\mu(x)}{\int_G \pi_u(x) \,\dint \mu(x)}, \quad A\in\mathcal{B}(G).
\]
Classically
the method of choice is to construct a Markov chain $(X_n)_{n\in\mathbb{N}_0}$
based on the MH algorithm for approximate sampling of $\pi$. This algorithm crucially relies
on knowing (at least) the ratio $\pi_u(y)/\pi_u(x)$ 
for arbitrary $(x,y)\in G^2$, e.g., because $\pi_u(x)$ and $\pi_u(y)$ can readily be computed.
However, in some scenarios, only approximations of $\pi_u(x)$ and $\pi_u(y)$ are available. 
Replacing the true unnormalized density $\pi_u$ in the MH algorithm by an approximation yields a perturbed, 
``inexact'' Markov chain $(\wt X_n)_{n\in\mathbb{N}_0}$. 
If the approximation is based on a Monte Carlo method, 
the perturbed chain is called MCwM chain.

Two particular settings where approximations of $\pi_u$ may 
rely on Monte Carlo estimates are \emph{doubly-intractable distributions} and \emph{latent variables}.
Examples of the former occur in Markov or Gibbs random fields, where the {function values $\pi_u(x)$} of the
unnormalized density itself are only known up to a factor $Z(x)$. This means that 
\begin{equation}  \label{eq: doubly_intractable_INTRO}
 \pi_u(x) = \rho(x)/ Z(x), \qquad x\in G,
 \end{equation}
where only values of $\rho(x)$ can easily be computed while the computational problem lies in evaluating
\[
Z(x) = \int_\mathcal{Y} \widebar{\rho}(x,y) r_x(\dint y),
\]
where $\mathcal{Y}$ denotes an auxiliary variable space, $\widebar{\rho}\colon G\times \mathcal{Y} \to [0,\infty)$ 
and $r_x$ {is a 
	probability distribution on $\mathcal{Y}$}.
We {investigate a} MCwM algorithm, which in every transition 
uses an iid sequence of random variables $(Y_i^{(x)})_{1\leq i\leq N}$,
with $Y_1^{(x)}\sim r_x$, to approximate $Z(x)$ by 
$ \widehat Z_N(x) := \frac{1}{N} \sum_{i=1}^N \widebar{\rho}(x,Y^{(x)}_i)$ 
(and $Z(y)$ by $\widehat Z_N(y)$, respectively). 
The second setting we study arises from \emph{latent variables}. Here, $\pi_u(x)$ cannot be evaluated since it takes the form
\begin{equation} 
\pi_u(x) = \int_{\mathcal{Y}} \widebar{\rho}(x,y)\, r_x(\dint y),
\end{equation}
where $r_x$ is a probability distribution 
on a measurable space $\mathcal{Y}$ of 
latent variables $y$, and $\widebar \rho \colon G\times \mathcal{Y} \to [0,\infty)$ 
is a non-negative density function. 
In general, no explicit computable expression 
of the above integral is at hand and the MCwM 
idea is to substitute $\pi_u(x)$ in the MH algorithm
by a Monte Carlo estimate based on 
iid sequences of random variables $(Y^{(x)}_i)_{1\leq i\leq N}$ 
and $(Y^{(y)}_i)_{1\leq i \leq N}$ with $Y_1^{(x)}\sim r_x$, $Y_1^{(y)}\sim r_y$. 
The resulting MCwM algorithm has been studied 
before in \cite{AnRo09,MeLeRo15}. 
{Let us note here that
this
MCwM approach should not be confused 
with the pseudo-marginal method, see \cite{AnRo09}.}
The pseudo-marginal method constructs a Markov chain on the extended space  $G\times \mathcal{Y}$
that targets a distribution with $\pi$ as its marginal on $G$.

\noindent\textbf{{Perturbation bounds for MCwM.}}
In both intractability settings,
the corresponding MCwM  Markov chains depend 
on the parameter $N\in \N$ which denotes the number of samples used within the Monte Carlo
estimates. As a consequence, any bound on 
\eqref{eq:norm.pns} is $N$-dependent, which allows us to control 
the dissimilarity to the ideal MH based Markov chain. 
In Corollary~\ref{cor: infbounds_doubly_intr} and the application of Corollary~\ref{cor: doub_restr_MCwM}
to the examples considered in Section~\ref{sec: approx_MH}
we provide
informative rates of convergence as $N \rightarrow \infty$.
Note that with those estimates we relax the requirement of uniform bounds on the approximation error
introduced by the estimator for $\pi_u$, which is essentially
imposed in \cite{MeLeRo15,AFEB14}.
In contrast to this requirement, we use (if available) the Lyapunov function as a counterweight 
for a second as well as inverse second moment
and can therefore handle situations where uniform bounds on the approximation error are not available.  
If we do not have access to a Lyapunov function for 
the MCwM transition kernel we suggest to restrict it to a subset of the state space, i.e., use restricted approximations.
This subset is determined by $V$ and usually corresponds to a ball with some radius {$R(N)$}
 that increases as the approximation quality improves, that is, $R(N)\rightarrow \infty$ as
$N\rightarrow \infty$.  

Our analysis of the MCwM algorithm is guided by some 
facts we observe in simple illustrations, 
in particular, 
we consider a log-normal example 
discussed in Section~\ref{sec: doubly_intractable}. 
In this example, 
we encounter a situation where the mean squared error of the Monte Carlo approximation 
grows exponentially in the tail of the target distribution. 
We observe \emph{empirically} that (unrestricted) MCwM works well 
whenever the growth behavior 
is dominated by the decay of the (Gaussian) target density in the tail. 
The application of Corollary~\ref{cor: doub_restr_MCwM} to the log-normal example shows
that the restricted approximation converges towards the true target density in the number of samples $N$ at least 
like $(\log N)^{-1}$ independent of \emph{any} growth of the error.
However, the convergence is better, at least like $\frac{\log N}{N}$, 
if the growth
is dominated by the decay of the target density. 

\section{Preliminaries}\label{sec: prelim}
Let $G$ be a Polish space, where $\mathcal{B}(G)$
denotes its Borel $\sigma$-algebra. Assume that $P$ is a transition kernel with stationary distribution $\pi$
on $G$. For a signed measure $q$ on $G$ 
and a measurable function
$f\colon G\to \mathbb{R}$ we define
\begin{align*}
 q P (A) & := \int_G P(x,A)\, \dint q(x),
 \quad 
 Pf(x) := \int_G f(y)\,P(x,\dint y),\quad x\in G, A\in \mathcal{B}(G).
\end{align*}
For a distribution $\mu$ on $G$ we use the notation
$
 \mu(f) := \int_G f(x)\,\dint \mu(x).
$
For a measurable function $V\colon G \to [1,\infty)$ 
and two probability measures $\mu,\nu$ on 
$G$
define
\[
 \norm{\mu-\nu}_{V} 
 := \sup_{\abs{f}\leq V} \abs{
\mu(f)-\nu(f)
 }.
\] 
For the constant function
$V=1$ this is the total variation distance, i.e.,
\[
 \norm{\mu-\nu}_{\rm tv} 
 := \sup_{\abs{f}\leq 1} 
\abs{
\mu(f)-\nu(f)
}.
\]
The next, well-known theorem defines geometric ergodicity
and states a useful equivalent condition. The proof follows by  \cite[Proposition~2.1]{RoRo97}
and \cite[Theorem~16.0.1]{MeTw09}.
\begin{thm}  \label{thm: equi}
 For a $\phi$-irreducible and aperiodic transition kernel $P$
 with stationary distribution $\pi$ defined on $G$
 the following statements are equivalent:
 \begin{itemize}
  \item The transition kernel $P$ is \emph{geometrically ergodic}, that is, 
 there exists
 a number $\bar{\alpha}\in [0,1)$ and 
 a measurable function
$C\colon G\to [1,\infty)$ such that for $\pi$-a.e.
$x\in G$ we have
\begin{equation}
 \norm{P^n(x,\cdot)-\pi}_{\rm tv} \leq C(x) \bar{\alpha}^n, \quad n\in \mathbb{N}.
\end{equation}
  \item There is a $\pi$-a.e. finite measurable function $V\colon G\to [1,\infty]$
 with finite moments with respect to $\pi$ and 
 there 
 are constants $\alpha\in[0,1)$ 
 and 
 $C\in[1,\infty)$ such that 
 \begin{equation}  \label{eq: V-unif}
  \norm{P^n(x,\cdot)-\pi}_{V}
  \leq C V(x) \alpha^n,\quad x\in G,\; n\in \mathbb{N}.
 \end{equation} 
 \end{itemize}
 In particular, the function $V$ can be chosen such that a \emph{Lyapunov condition}
 of the form
 \begin{equation}  \label{eq: def_Lyap}
    P V(x) \leq \delta  V(x) + L, \qquad x\in G,  
 \end{equation}
 for some $\delta\in [0,1)$ and $L\in (0,\infty)$, is satisfied.
\end{thm}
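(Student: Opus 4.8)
The plan is to establish the two implications separately; one direction is immediate while the other rests on the drift-condition machinery of Meyn and Tweedie. Throughout I would exploit that, since $V\geq 1$, the $V$-norm dominates the total variation distance: for any probability measures $\mu,\nu$ on $G$,
\[
\norm{\mu-\nu}_{\rm tv}=\sup_{\abs{f}\leq 1}\abs{\mu(f)-\nu(f)}\leq \sup_{\abs{f}\leq V}\abs{\mu(f)-\nu(f)}=\norm{\mu-\nu}_V.
\]

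For the implication from $V$-uniform ergodicity to geometric ergodicity, I would simply combine \eqref{eq: V-unif} with this norm comparison to get
\[
\norm{P^n(x,\cdot)-\pi}_{\rm tv}\leq \norm{P^n(x,\cdot)-\pi}_V\leq C\,V(x)\,\alpha^n, \qquad n\in\N,
\]
so that $\bar\alpha:=\alpha$ and $C(x):=C\,V(x)$ give the claimed bound. Because $C\geq 1$ and $V\geq 1$ we have $C(x)\geq 1$, and since $V$ is $\pi$-a.e.\ finite so is $C$, as required.

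The reverse implication is the substantive one. Starting from geometric ergodicity of the $\phi$-irreducible and aperiodic kernel $P$, I would first invoke the equivalence between a geometric total-variation rate and a \emph{geometric drift condition}: by \cite[Proposition~2.1]{RoRo97} there exist a measurable function $V\colon G\to[1,\infty]$ that is finite $\pi$-a.e., a petite set $S$, and constants $\lambda\in[0,1)$, $b\in(0,\infty)$ with
\[
PV(x)\leq \lambda\,V(x)+b\,\ind{S}(x), \qquad x\in G.
\]
Bounding $\ind{S}\leq 1$ turns this into the Lyapunov condition \eqref{eq: def_Lyap} with $\delta=\lambda$ and $L=b$, which already proves the ``in particular'' assertion. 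Feeding the same drift condition into \cite[Theorem~16.0.1]{MeTw09} then yields constants $C\in[1,\infty)$ and $\alpha\in[0,1)$ for which the $V$-uniform bound \eqref{eq: V-unif} holds, closing the equivalence.

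It remains to verify the finite-moment claim $\pi(V)<\infty$. I would integrate the Lyapunov inequality against the stationary measure $\pi$ and use $\pi P=\pi$ to obtain $\pi(V)\leq \lambda\,\pi(V)+L$; since $\lambda<1$ this rearranges to $\pi(V)\leq L/(1-\lambda)$. The one subtlety is that finiteness of $\pi(V)$ must be secured before rearranging, which I would handle in the usual way by applying the drift inequality to the truncations $V\wedge M$, integrating, and passing $M\to\infty$ by monotone convergence. The genuine obstacle across the whole argument is the step from a total-variation geometric rate to an explicit drift condition with Lyapunov function $V$; this is exactly where the cited results of Roberts--Rosenthal and Meyn--Tweedie carry the analytic weight, relying on petite sets and regeneration.
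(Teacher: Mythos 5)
Your overall route is exactly the one the paper intends: the paper offers no argument beyond citing \cite[Proposition~2.1]{RoRo97} for the passage from a geometric total-variation rate to a drift condition towards a petite set, and \cite[Theorem~16.0.1]{MeTw09} for upgrading that drift condition to $V$-uniform ergodicity; your easy direction via $\norm{\cdot}_{\rm tv}\leq\norm{\cdot}_V$ and your extraction of \eqref{eq: def_Lyap} by bounding $\ind{S}\leq 1$ are both correct.

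The one step that would fail as written is your truncation argument for $\pi(V)<\infty$. Applying the drift inequality to $V\wedge M$ only yields $P(V\wedge M)(x)\leq(\lambda V(x)+b)\wedge M$, and the pointwise bound $(\lambda V+b)\wedge M\leq \lambda\,(V\wedge M)+b$ is false once $M>b/(1-\lambda)$ on the event $\{V>M\}$ (e.g.\ $V=10$, $M=5$, $\lambda=1/2$, $b=1$ gives $5\not\leq 3.5$), so integrating against $\pi$ does not close to $\pi(V\wedge M)\leq\lambda\,\pi(V\wedge M)+b$. The standard repair is to iterate the drift condition to get $P^nV(x)\leq \lambda^nV(x)+b/(1-\lambda)$, fix any $x$ with $V(x)<\infty$, and use that $V\wedge M$ is bounded together with $\norm{P^n(x,\cdot)-\pi}_{\rm tv}\to 0$ to conclude $\pi(V\wedge M)=\lim_n P^n(V\wedge M)(x)\leq b/(1-\lambda)$, after which monotone convergence in $M$ gives $\pi(V)\leq b/(1-\lambda)$; alternatively one can simply invoke \cite[Theorem~14.3.7]{MeTw09}. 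With that replacement your proof is complete.
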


\begin{rem}
We call a transition kernel \emph{$V$-uniformly ergodic}
if it satisfies \eqref{eq: V-unif} and note that this condition 
can be
be rewritten as
\begin{equation}  \label{eq: V_unif_erg_ratio}
 \sup_{x\in G}\frac{\norm{P^n(x,\cdot)-\pi}_{V}}{V(x)} \leq C \alpha^n.
\end{equation}
\end{rem}

\section{Quantitative perturbation bounds}
\label{sec: quant_est}

Assume that $(X_n)_{n\in \mathbb{N}_0}$ is a Markov chain with transition kernel
$P$ and initial distribution $p_0$ on $G$.
We define
$p_n:= p_0 P^n$, i.e., $p_n$ is the distribution of $X_n$. 
The distribution $p_n$ is approximated by using another Markov chain 
$(\wt X_n)_{n\in \mathbb{N}_0}$ with transition kernel
$\wt P$ and initial distribution $\wt p_0$.
We define
$\wt p_n := \wt p_0 \wt P^n$, i.e., $\wt p_n$ is the distribution of $\wt X_n$.
The idea throughout the paper is to interpret $(X_n)_{n\in \mathbb{N}_0}$
as some ideal, unperturbed chain and $(\wt X_n)_{n\in \mathbb{N}_0}$
as an approximating, perturbed Markov chain. 

In the spirit of the doubly-intractable distribution
and latent variable case considered in Section~\ref{sec: approx_MH} 
we think of the unperturbed Markov chain as ``nice'', where convergence properties
are readily available.
Unfortunately since we cannot simulate the ``nice'' chain we try to approximate it with 
a perturbed Markov chain, which is, because of the perturbation, difficult to analyze directly.
With this in mind, we make the following 
standing assumption on the unperturbed Markov chain.

\begin{ass} \label{ass: erg}
  Let $V\colon G \to [1,\infty)$ be a measurable function and assume that $P$ is $V$-uniformly
 {ergodic, that is, \eqref{eq: V-unif} holds} for some constants $C\in [1,\infty)$ and $\alpha\in [0,1)$.
\end{ass}

We start with an auxiliary estimate of $\norm{p_n - \wt p_n}_{\rm tv}$ 
which is interesting on its own and is proved in \ref{sec: proof_sec3}.
\begin{lem} \label{lem: aux}
 Let Assumption~\ref{ass: erg} be satisfied and
 for a measurable function $W\colon G\to [1,\infty)$ define 
  \begin{align*}
    \varepsilon_{{\rm tv}, W}  &:= \sup_{x\in G} 
    \frac{\norm{P(x,\cdot)-\wt P(x,\cdot)}_{\rm tv}}{W(x)},\\    
   \varepsilon_{V, W} 
    & :=  \sup_{x\in G} \frac{\norm{P(x,\cdot)-\wt P(x,\cdot)}_{V}}{W(x)}.
 \end{align*} 
Then, for any $r\in (0,1]$,
\begin{equation}  \label{eq: aux_est}
 \norm{p_n-\wt p_n}_{\rm tv}
 \leq C\alpha^n \norm{p_0-\wt p_0}_V + \varepsilon_{{\rm tv},W}^{1-r}\; \varepsilon_{V,W}^r\;	 C^r
 \sum_{i=0}^{n-1} \wt p_i(W) \alpha^{(n-i-1)r}.
\end{equation}
\end{lem}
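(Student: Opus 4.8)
The plan is to split the error into an initial-condition part and a transition-kernel part through the identity
\begin{equation*}
p_n - \wt p_n = p_0 P^n - \wt p_0 \wt P^n = (p_0 - \wt p_0) P^n + \wt p_0 (P^n - \wt P^n),
\end{equation*}
and to estimate each part separately in total variation. The two structural facts I would lean on throughout are that $p_0 - \wt p_0$ has total mass zero and that the signed kernel $P(x,\cdot) - \wt P(x,\cdot)$ annihilates constants, both being differences of probability measures.

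For the first part I would invoke Assumption~\ref{ass: erg}. Testing against $\abs{f} \le 1$ and subtracting the constant $\pi(f)$ (permissible since $p_0 - \wt p_0$ integrates to zero), I would rewrite $(p_0 - \wt p_0)(P^n f) = (p_0 - \wt p_0)(P^n f - \pi(f))$. Because $\abs{P^n f(x) - \pi(f)} \le \norm{P^n(x,\cdot) - \pi}_V \le C\alpha^n V(x)$ by \eqref{eq: V-unif}, the recentered test function is pointwise dominated by $C\alpha^n V$, so the definition of $\norm{\cdot}_V$ delivers $\norm{(p_0 - \wt p_0)P^n}_{\rm tv} \le C\alpha^n \norm{p_0 - \wt p_0}_V$, precisely the first summand in \eqref{eq: aux_est}.

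For the second part I would telescope via $P^n - \wt P^n = \sum_{i=0}^{n-1} \wt P^i (P - \wt P) P^{n-1-i}$, chosen so that the ideal kernel $P$ always stands to the right of $P - \wt P$ and its ergodicity is available; recognizing $\wt p_0 \wt P^i = \wt p_i$ gives $\wt p_0(P^n - \wt P^n) = \sum_{i=0}^{n-1} \wt p_i (P - \wt P) P^{n-1-i}$. Fixing a summand, writing $m := n-1-i$ and testing against $\abs{f} \le 1$ reduces the estimate to the inner quantity $I(x) := \int (P^m f)(y)\,\bigl(P(x,\dint y) - \wt P(x,\dint y)\bigr)$, for which I would record two competing bounds: the crude $\abs{I(x)} \le \norm{P(x,\cdot) - \wt P(x,\cdot)}_{\rm tv}$ from $\norm{P^m f}_\infty \le 1$, and, after recentering $I(x) = \int (P^m f(y) - \pi(f))\,\bigl(P(x,\dint y) - \wt P(x,\dint y)\bigr)$ and using $\abs{P^m f - \pi(f)} \le C\alpha^m V$, the sharper $\abs{I(x)} \le C\alpha^m \norm{P(x,\cdot) - \wt P(x,\cdot)}_V$.

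The decisive step is the interpolation: writing $\abs{I(x)} = \abs{I(x)}^{1-r}\abs{I(x)}^{r}$ and feeding the two estimates into the two factors yields $\abs{I(x)} \le (C\alpha^m)^r \norm{P(x,\cdot)-\wt P(x,\cdot)}_{\rm tv}^{1-r}\,\norm{P(x,\cdot)-\wt P(x,\cdot)}_V^{r} \le (C\alpha^m)^r \eps_{{\rm tv},W}^{1-r}\eps_{V,W}^{r}\,W(x)$, the cancellation $W(x)^{1-r}W(x)^{r} = W(x)$ being what lets a single weight $W$ survive. Since this bound is uniform in $f$, integrating against $\wt p_i$ gives $\norm{\wt p_i(P-\wt P)P^{m}}_{\rm tv} \le C^r\alpha^{mr}\eps_{{\rm tv},W}^{1-r}\eps_{V,W}^{r}\,\wt p_i(W)$, and summing over $i$ with $m = n-1-i$ reproduces the second summand in \eqref{eq: aux_est}. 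I expect the main obstacle to be the careful bookkeeping around the recentering by $\pi(f)$, which must be arranged so that the ergodic decay $\alpha^m$ attaches to the $V$-norm factor alone; once both one-sided estimates are in place, the interpolation and the subsequent summation are routine.
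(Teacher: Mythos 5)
Your proof is correct, and its skeleton coincides with the paper's: the same telescoping identity $\wt p_n-p_n=(\wt p_0-p_0)P^n+\sum_{i=0}^{n-1}\wt p_i(\wt P-P)P^{n-i-1}$, the same pair of competing bounds (one in total variation, one carrying the factor $C\alpha^{n-i-1}$ through the $V$-norm), and the same geometric interpolation with exponent $r$. Where you genuinely diverge is in how the contraction of $P^m$ is established. The paper routes both the initial term and the summands through the $V$-weighted ergodicity coefficient $\tau_V$, invoking Lemma~\ref{lem: contr_coeff} ($\tau_V(P^n)\leq C\alpha^n$, which rests on the Hairer--Mattingly-type argument) together with the submultiplicativity and contractivity of Proposition~\ref{prop: prop_erg_coeff}. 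You instead recenter the test function by $\pi(f)$ — legitimate because $p_0-\wt p_0$ and $P(x,\cdot)-\wt P(x,\cdot)$ annihilate constants — and read off $\abs{P^mf-\pi(f)}\leq C\alpha^m V$ directly from \eqref{eq: V-unif}; this yields exactly the estimates $\norm{(p_0-\wt p_0)P^n}_{\rm tv}\leq C\alpha^n\norm{p_0-\wt p_0}_V$ and $\abs{I(x)}\leq C\alpha^m\norm{P(x,\cdot)-\wt P(x,\cdot)}_V$ without any of the $\tau_V$ machinery, so your argument is more self-contained and elementary, at the cost of not producing the reusable coefficient bounds that the paper exploits elsewhere. (Performing the interpolation pointwise in $x$ before integrating against $\wt p_i$, rather than on the integrated norms as the paper does, makes no difference.) One small point to tidy: for the summand with $i=n-1$, i.e.\ $m=0$, the bound $\abs{P^mf-\pi(f)}\leq C\alpha^mV$ does not follow from \eqref{eq: V-unif}, which is stated for $n\geq 1$; but there the recentering is unnecessary, since $\abs{f}\leq 1\leq V$ already gives $\abs{I(x)}\leq\norm{P(x,\cdot)-\wt P(x,\cdot)}_V\leq C\alpha^0\norm{P(x,\cdot)-\wt P(x,\cdot)}_V$, so the estimate survives unchanged.
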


\begin{rem}
 The quantities $\varepsilon_{{\rm tv},W}$ and $\varepsilon_{V,W}$ 
 measure the difference between
 $P$ and $\wt P$. Note that we can interpret them as operator norms 
 \begin{align*}
  \varepsilon_{{\rm tv},W} & = \norm{P-\wt P}_{B^{(1)} \to B^{(W)}}
  \quad
  \text{and}
  \quad
  \varepsilon_{V,W} = \norm{P-\wt P}_{B^{(V)} \to B^{(W)}},
 \end{align*}
 where
  \begin{equation}  \label{eq: fct_class_norm}
     B^{(W)} = \left\{ f\colon G \to \mathbb{R}
  \mid \norm{f}_{\infty,W} := \sup_{x\in G}\frac{\abs{f(x)}}{W(x)} < \infty \right\}.
  \end{equation}
 {It is also easily seen that $\varepsilon_{{\rm tv},W} \leq \min\{2,\varepsilon_{V,W}\}$
 which implies that 
 a small number $\varepsilon_{V,W}$ leads also to a small number $\varepsilon_{{\rm tv},W}$.}
 In \eqref{eq: aux_est} an additional parameter $r$ appears which can be
 used to tune the estimate. Namely, if one is not able to bound $\varepsilon_{V,W}$
 sufficiently well but has a good estimate of $\varepsilon_{{\rm tv},W}$ 
 one can optimize over $r$.
 On the other hand, if there is a satisfying estimate of $\varepsilon_{V,W}$ 
 one can just set $r=1$.
\end{rem}

In the previous lemma we proved an upper bound of $\norm{p_n-\wt p_n}_{\rm tv}$ 
which still contains an unknown quantity given by
\[
 \sum_{i=0}^{n-1} \wt p_i(W) \alpha^{(n-i-1)r}
\]
which measures, in a sense, stability of the perturbed chain through a weighted sum of expectations of the Lyapunov function $W$ under $\wt p_i$. To control this term, 
we impose additional assumptions on the perturbed chain. In the following, we consider two assumptions of this type, a Lyapunov condition and a bounded support assumption. 

\subsection{Lyapunov condition}
We start with a simple version of our 
main estimate which illustrates already some key aspects of
the approach via the Lyapunov condition. 
Here the intuition is as follows: By Theorem~\ref{thm: equi} we know
that the function $V$ of Assumption~\ref{ass: erg}
can be chosen such that a Lyapunov condition for $P$ is satisfied.
Since we think of $\wt P$ as being close to $P$, it might be possible to show also a Lyapunov
condition with $V$ of $\wt P$. If this is the case, the following proposition is applicable.
\begin{prop}   \label{thm: est1}
 Let Assumption~\ref{ass: erg} be satisfied.
 Additionally, {let $\widetilde{\delta}\in[0,1)$ and $\widetilde{L}\in (0,\infty)$ be such
 that
 \begin{align}  
  \label{eq: Lyap}
  \wt P V(x) & \leq \wt \delta \,V(x) + \wt L,\qquad x\in G.
 \end{align}
Assume that $p_0 = \wt p_0$
and define
$
 \kappa := \max\left\{
\wt p_0(V)
 , \frac{\wt L}{1-\wt \delta} \right\},
$}
as well as 
(for simplicity)
 \begin{align*}
    \varepsilon_{\rm tv} := \varepsilon_{{\rm tv},V},
\qquad
\varepsilon_V := \varepsilon_{V,V}. 
\end{align*}
 Then, for any $r\in(0,1]$, 
\begin{equation}  \label{eq: tv_est_simple}
 \norm{p_n-\wt p_n}_{\text{tv}}
\leq 
 \varepsilon_{\text{tv}}^{1-r} \varepsilon_V^{r}\;   \frac{C^r \kappa}{(1-\alpha)r}.
\end{equation}
\end{prop}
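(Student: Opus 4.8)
The plan is to derive the bound directly from Lemma~\ref{lem: aux} by specializing the auxiliary function to $W=V$ and then controlling the remaining weighted sum via the Lyapunov condition \eqref{eq: Lyap}. With $W=V$ the two error quantities in the lemma become exactly $\varepsilon_{{\rm tv},V}=\varepsilon_{\rm tv}$ and $\varepsilon_{V,V}=\varepsilon_V$. Since by assumption $p_0=\wt p_0$, the initial term $C\alpha^n\norm{p_0-\wt p_0}_V$ vanishes, so Lemma~\ref{lem: aux} reduces to
\[
 \norm{p_n-\wt p_n}_{\rm tv}
 \leq \varepsilon_{\rm tv}^{1-r}\,\varepsilon_V^{r}\,C^r
 \sum_{i=0}^{n-1} \wt p_i(V)\,\alpha^{(n-i-1)r}.
\]
It then remains to bound $\wt p_i(V)$ uniformly in $i$ and to estimate the resulting geometric-type sum.

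The first key step is to show that $\wt p_i(V)\leq \kappa$ for every $i\in\N_0$, which I would prove by induction. The base case $\wt p_0(V)\leq\kappa$ holds by the definition of $\kappa$. For the inductive step I would write $\wt p_{i+1}(V)=\wt p_i(\wt P V)$ and apply the Lyapunov condition \eqref{eq: Lyap} inside the integral to get $\wt p_{i+1}(V)\leq \wt\delta\,\wt p_i(V)+\wt L$. Using the induction hypothesis $\wt p_i(V)\leq\kappa$, this is at most $\wt\delta\kappa+\wt L$, and the choice $\kappa\geq \wt L/(1-\wt\delta)$ is precisely what guarantees $\wt\delta\kappa+\wt L\leq\kappa$, closing the induction. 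This is the step that makes the stability of the perturbed chain quantitative, and choosing $\kappa$ as the maximum of the two quantities is exactly what keeps the constant clean (no stray factor of two).

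With $\wt p_i(V)\leq\kappa$ in hand, I would factor $\kappa$ out of the sum and reindex via $j=n-i-1$ to obtain a genuine geometric series,
\[
 \sum_{i=0}^{n-1}\wt p_i(V)\,\alpha^{(n-i-1)r}
 \leq \kappa\sum_{j=0}^{n-1}(\alpha^r)^j
 \leq \frac{\kappa}{1-\alpha^r}.
\]
The final, and arguably most delicate, step is to convert $1/(1-\alpha^r)$ into the stated form $1/\big((1-\alpha)r\big)$. For this I would use convexity of $r\mapsto\alpha^r=\eu^{r\ln\alpha}$ on $[0,1]$: since this function equals $1$ at $r=0$ and $\alpha$ at $r=1$, convexity gives $\alpha^r\leq 1-r(1-\alpha)$, hence $1-\alpha^r\geq r(1-\alpha)$ for all $r\in(0,1]$ and $\alpha\in[0,1)$. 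Substituting this estimate yields the asserted bound
\[
 \norm{p_n-\wt p_n}_{\rm tv}
 \leq \varepsilon_{\rm tv}^{1-r}\,\varepsilon_V^{r}\,\frac{C^r\kappa}{(1-\alpha)r}.
\]
I expect the main subtlety to lie not in any single computation but in recognizing that the geometric sum must be bounded independently of $n$ and that the convexity inequality is what produces the explicit $r$ in the denominator, which is what later permits optimizing over $r$.
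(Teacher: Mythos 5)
Your proposal is correct and follows essentially the same route as the paper: apply Lemma~\ref{lem: aux} with $W=V$, bound $\wt p_i(V)\leq\kappa$ via the Lyapunov condition (the paper writes the iterated bound $\wt p_i(V)\leq\wt\delta^i\wt p_0(V)+(1-\wt\delta^i)\wt L/(1-\wt\delta)$ explicitly, which is equivalent to your induction), then sum the geometric series and use $1-\alpha^r\geq r(1-\alpha)$.
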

\begin{proof}
We use Lemma~\ref{lem: aux} with $W=V$.
{By \eqref{eq: Lyap}, it
 follows
that
 \begin{equation} 
\wt p_i(V)
  = \int_G \wt P^i {V}(x)\, \wt p_0(\dint x)
  \leq \wt \delta^i \wt p_0( V)
  + (1-\wt \delta^i) \frac{\wt L}{1-\wt \delta} \leq \kappa.
 \end{equation}
 The} final estimate is obtained by a geometric series and $1-\alpha^r \geq r(1-\alpha)$.
\end{proof}

Now we state a more general theorem.
In particular, in this estimate the dependence 
on the initial distribution
can be weakened. In the perturbation bound of the previous estimate, 
the initial distribution is only forgotten if 
{$\wt p_0(V) < \wt L/(1-\wt\delta)$}. Yet, intuitively, for long-term stability results
$\wt p_0(V)$ should not matter at all. This intuition is confirmed by the theorem. 

\begin{thm}  \label{thm: est2}
Let Assumption~\ref{ass: erg} be satisfied.
 Assume also that $W\colon G \to [1,\infty)$ is a measurable function which satisfies 
 {with $\wt\delta\in[0,1)$ and $\wt L\in (0,\infty)$
 the Lyapunov condition 
 \begin{align}  \label{eq: Lyup}
  \wt P W (x) & \leq \wt\delta W(x) + \wt L,\qquad x\in G.
 \end{align}
Define 
$\varepsilon_{{\rm tv},W}$, $\varepsilon_{V,W}$ as in Lemma~\ref{lem: aux}
and 
 $\gamma  := \frac{\wt L}{1-\wt\delta}$.
 Then, for any $r\in(0,1]$ with 
 \begin{equation*}
  \beta_{n,r}(\wt\delta,\alpha) := 
  \begin{cases}
     n \alpha^{(n-1)r}, & \alpha^r = \wt\delta,\\
     \frac{\abs{\alpha^{rn} - \wt\delta^n}}{\abs{\alpha^r - \delta}},& \alpha^r \not =\wt\delta,
  \end{cases}
 \end{equation*}
 we have
 \begin{equation}  \label{eq: fin_est}
  \norm{\wt p_n- p_n}_{\rm tv} 
  \leq C  \alpha^n \norm{\wt p_0 - p_0}_{V} 
  +\varepsilon_{{\rm tv},W}^{1-r}\;\varepsilon_{V,W}^{r}\; C^r
    \left[ \wt p_0(W)\, 
\beta_{n,r}(\wt\delta,\alpha)
    + \frac{\gamma}{(1-\alpha)r}  \right].
 \end{equation}}
\end{thm}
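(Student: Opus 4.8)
The plan is to start from the auxiliary estimate of Lemma~\ref{lem: aux}, applied with the present function $W$, and to control the residual weighted sum
\[
S_n := \sum_{i=0}^{n-1} \wt p_i(W)\, \alpha^{(n-i-1)r}
\]
by exploiting the Lyapunov condition \eqref{eq: Lyup} for $\wt P$. Since $\norm{p_n-\wt p_n}_{\rm tv}=\norm{\wt p_n-p_n}_{\rm tv}$ and the first term of \eqref{eq: fin_est} is inherited verbatim from \eqref{eq: aux_est} (using $\norm{p_0-\wt p_0}_V=\norm{\wt p_0-p_0}_V$), it suffices to prove
\[
S_n \leq \wt p_0(W)\,\beta_{n,r}(\wt\delta,\alpha) + \frac{\gamma}{(1-\alpha)r}.
\]

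First I would iterate \eqref{eq: Lyup} exactly as in the proof of Proposition~\ref{thm: est1}: writing $\wt p_i(W)=\int_G \wt P^i W(x)\,\wt p_0(\dint x)$ and applying the Lyapunov inequality $i$ times yields
\[
\wt p_i(W) \leq \wt\delta^i\,\wt p_0(W) + (1-\wt\delta^i)\,\gamma \leq \wt\delta^i\,\wt p_0(W) + \gamma,
\]
with $\gamma=\wt L/(1-\wt\delta)$. The crucial improvement over Proposition~\ref{thm: est1} is that I keep the decaying factor $\wt\delta^i$ rather than crudely bounding $\wt p_i(W)$ by a constant $\kappa$; this is precisely what allows the dependence on $\wt p_0(W)$ to fade in $n$ and thus weakens the role of the initial distribution, as advertised. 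Substituting and splitting gives
\[
S_n \leq \wt p_0(W)\sum_{i=0}^{n-1}\wt\delta^i\,\alpha^{(n-i-1)r} + \gamma\sum_{i=0}^{n-1}\alpha^{(n-i-1)r}.
\]

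For the second sum I reindex and sum the finite geometric series, $\sum_{j=0}^{n-1}\alpha^{jr}=(1-\alpha^{rn})/(1-\alpha^r)\leq 1/(1-\alpha^r)$, and then invoke the elementary inequality $1-\alpha^r\geq r(1-\alpha)$ for $r\in(0,1]$ (already used in Proposition~\ref{thm: est1}) to bound this by $1/((1-\alpha)r)$, producing the term $\gamma/((1-\alpha)r)$. The one step requiring care is the first sum $\sum_{i=0}^{n-1}\wt\delta^i\,\alpha^{(n-i-1)r}$, a convolution-type geometric sum which is exactly where the case distinction in $\beta_{n,r}$ originates. Setting $a=\wt\delta$ and $b=\alpha^r$, it equals $\sum_{i=0}^{n-1}a^i b^{n-1-i}$: when $a=b$ every summand is $a^{n-1}$, giving $n\,\wt\delta^{n-1}=n\,\alpha^{(n-1)r}$, the first branch; when $a\neq b$ the telescoping identity $(a-b)\sum_{i=0}^{n-1}a^i b^{n-1-i}=a^n-b^n$ gives $(\wt\delta^n-\alpha^{rn})/(\wt\delta-\alpha^r)$, and since numerator and denominator share the same sign this equals $\abs{\alpha^{rn}-\wt\delta^n}/\abs{\alpha^r-\wt\delta}$, the second branch of $\beta_{n,r}$. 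Collecting both contributions to $S_n$ and inserting into \eqref{eq: aux_est} yields \eqref{eq: fin_est}.

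I expect no genuine obstacle: the result is essentially a sharpened bookkeeping of the geometric sum already handled in Proposition~\ref{thm: est1}, with the only subtlety being the sign tracking in the $a\neq b$ case that justifies the absolute values in $\beta_{n,r}$. (The denominator $\abs{\alpha^r-\delta}$ in the statement should read $\abs{\alpha^r-\wt\delta}$, consistent with the computation above.)
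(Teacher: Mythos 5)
Your proposal is correct and follows exactly the paper's own argument: apply Lemma~\ref{lem: aux} with the given $W$, iterate the Lyapunov condition \eqref{eq: Lyup} to get $\wt p_i(W)\leq \wt\delta^i\,\wt p_0(W)+\gamma$, identify $\sum_{i=0}^{n-1}\wt\delta^i\alpha^{(n-i-1)r}$ with $\beta_{n,r}(\wt\delta,\alpha)$, and bound the remaining geometric series via $1-\alpha^r\geq r(1-\alpha)$. You are also right that the denominator in the displayed definition of $\beta_{n,r}$ should read $\abs{\alpha^r-\wt\delta}$ rather than $\abs{\alpha^r-\delta}$.
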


\begin{proof}
Here we use Lemma~\ref{lem: aux} with possibly different $W$ and $V$.
{By \eqref{eq: Lyup}
we have
$
\wt p_i(W)
  \leq \wt\delta^i \wt p_0( W)
  + 
  \gamma
$
and by
\[
 \sum_{i=0}^{n-1} \wt\delta^i \alpha^{(n-i-1)r} 
 = \beta_{n,r}(\wt\delta,\alpha)
\]
we obtain} the assertion by a geometric series and $1-\alpha^r \geq r(1-\alpha)$.
\end{proof}

\begin{rem}
\label{rem: beta}
We consider an illustrating example where Theorem~\ref{thm: est2} leads to a considerably sharper bound than 
Proposition~\ref{thm: est1}. This improvement is due to the combination of two novel properties of the bound of Theorem~\ref{thm: est2}:
\begin{enumerate}
 \item In the Lyapunov condition \eqref{eq: Lyup} the function
 $W$ can be chosen differently from $V$.
 \item {Note that
 $\beta_{n,r}(\wt\delta,\alpha)$ is bounded from above by $n\cdot \max\{\wt\delta,\alpha^r\}^{n-1}$. Thus $\beta_{n,r}(\wt\delta,\alpha) $
 converges almost exponentially} fast to zero in $n$. This implies that 
 for $n$ sufficiently large the dependence of $\wt p_0(W)$ vanishes. Nevertheless, the leading factor $n$ can capture situations
in which the perturbation error is increasing in $n$ for small $n$. 
\end{enumerate}
\end{rem}

\noindent
{\bf Illustrating example.}
Let $G=\{0,1\}$ and assume $p_0=\wt p_0 = (0,1)$. Here state ``$1$'' can be interpreted 
{as ``transitional'' while state ``$0$'' as ``essential'' part of the state space.} 
Define
\[
P= \left(\begin{array}{cc} 
1 & 0\\ 1 & 0
\end{array}\right)  
\qquad
\text{and}
\qquad
\wt P= \left(\begin{array}{cc} 
1 & 0\\ \frac12 & \frac12
\end{array}\right)  .
\]
Thus, the unperturbed Markov chain $(X_n)_{n\in \mathbb{N}_0}$ 
moves from ``$1$'' to ``$0$'' right away, 
while the perturbed one $(\wt X_n)_{n\in \mathbb{N}_0}$
takes longer. 
Both transition matrices have the same
stationary distribution $\pi=(1,0)$. 
Obviously, $\|p_0 - \wt p_0 \|_{\rm tv}=0$ and for $n\in \mathbb{N}$ it holds that
\[
\|p_n - \wt p_n \|_{\rm tv} =2 \mathbb{P}(X_n \neq \wt X_n) =\frac{1}{2^{n-1}}.
\]
The unperturbed Markov chain is uniformly ergodic, such that
we can choose $V=1$ and \eqref{eq: V-unif} is satisfied with $C=1$ and $\alpha=0$.
In particular, in this setting $\varepsilon_{\rm tv}$ and $\varepsilon_V$
from Proposition~\ref{thm: est1} coincide, we have $\varepsilon_{\rm tv}=1$.
Thus, the estimate of Proposition~\ref{thm: est1} gives
\[
\|p_n - \wt p_n \|_{\rm tv} \leq \varepsilon_{\rm tv} = 1.
\]
This bound is optimal in the sense that 
it is best possible for $n = 1$. 
But for increasing $n$ it is getting worse. 
Notice also that a different choice of $V$ cannot really remedy this situation: 
The chains differ most strongly at $n=1$ and 
 the bound of Proposition~\ref{thm: est1} is constant over time.  
Now choose the function 
$W(x) = 1 + v \cdot \mathbf{1}_{\{x=1\}} $ 
for some $v \geq 0$. 
The transition matrix
$\wt P$ satisfies the Lyapunov condition 
\[
\wt P W(x) \leq \frac{1}{2}\, W(x) + \frac{1}{2}, 
\]
i.e., {$\wt\delta = \wt L = \frac{1}{2}$}. 
Moreover, we have $\wt p_0(W)=1+v$ and 
$\varepsilon_{V,W} = \varepsilon_{{\rm tv},W} = 1/(1+ v)$. 
Thus, in the bound from Theorem~\ref{thm: est2} 
we can set $r=1$ and $\gamma = 1$ such that 
\[
\|p_n - \wt p_n \|_{\rm tv} \leq  \frac1{ v+1 } + \frac{1}{2^{n-1}}.
\]
Since $v$ can be chosen arbitrarily large, it follows that
\[
\|p_n - \wt p_n \|_{\rm tv} \leq  \frac{1}{2^{n-1}},
\]
which is best possible for all $n\in \mathbb{N}$.

The previous example can be seen as a toy model of 
a situation where the transition probabilities of a perturbed
and unperturbed Markov chain are very similar in the {``essential'' part of the state space, 
but differ considerably in the ``tail'', seen as the ``transitional'' part.} 
When the chains
start both at the same point in the ``tail'', 
considerable differences between distributions can
build up along the initial transient and then vanish again. 
Earlier perturbation bounds as for example in \cite{Mi05, PS14, RuSc15} 
take only an initial error and a remaining error into account. Thus, those 
are worse for situations where this transient error captured by $\beta_{n,r}$ dominates. 
A very similar term also appears in the very recent error bounds due to \cite{JoMa17B}.
{In any case, the example also illustrates that}
a function
$W$ different from $V$ is advantageous.

\subsection{Restricted approximation}
\label{sec: restr_approx}

In the previous section, we have seen that a Lyapunov condition 
of the perturbation helps to control 
the long-term stability of approximating
a $V$-uniformly ergodic Markov chain.
In this section we assume that the perturbed chain 
is restricted to a ``large'' subset
of the state space. In this setting a sufficiently good approximation of the unperturbed Markov chain
on this subset leads 
to a perturbation estimate.

For the unperturbed Markov chain we assume that transition
kernel $P$ is $V$-uniformly ergodic. 	
Then, for $R\geq 1$ define the ``large subset'' of the state space as 
\[
B_R=\{x \in G \mid V(x) \leq R \}.
\]
If $V$ is chosen as a monotonic transformation of a norm on $G$, $B_R$
is simply a ball around $0$. 
The \emph{restriction of $P$} to the set $B_R$, given as $P_R$, is defined as
\[
 P_R(x,A) = P(x,A\cap B_R) + \mathbf{1}_A(x) P(x,B_R^c), \quad A\in \mathcal{B}(G),\,x\in G.
\]
In other words, whenever $P$ would make a transition from $x\in B_R$ to $G\setminus B_R$, 
$P_R$ remains in $x$. Otherwise, $P_R$ is the same as $P$. 
We obtain the following perturbation bound for approximations whose stability is guaranteed through a restriction to the 
set $B_R$.

\begin{thm}  \label{thm: restricted_approx}
 Under the $V$-uniform ergodicity of Assumption~\ref{ass: erg}
 let 
 $\delta \in [0,1)$ and $L\in [1,\infty)$ be chosen in such a way that
 \[
    PV(x) \leq \delta\,V(x) +L, \quad x\in G.
  \]
For the perturbed transition kernel $\wt P$ assume that it is restricted to $B_R$, i.e.,
$\wt P(x,B_R) = 1$ for all $x\in G$, and that $R\cdot\Delta(R)\leq (1-\delta)/2$ with
 \[
\Delta(R) := \sup_{x \in B_R} \frac{\norm{P_R(x,\cdot) - \wt{P}(x,\cdot)}_{\rm tv}}{V(x)}.
\]
Then, with $p_0=\wt p_0$ and
\[
\kappa:= \max\left \{ \wt p_0(V), \frac{L}{1-\delta} \right\}
\]
we have for $R\geq \exp(1)$ that
\begin{equation}
\label{eq: restr_approx_final_est}
  \norm{p_n-\wt p_n}_{\rm tv} \leq \frac{33C(L+1) \kappa}{1-\alpha} \cdot
  \frac{\log R}{R}.
\end{equation}
\end{thm}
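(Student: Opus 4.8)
The plan is to apply Lemma~\ref{lem: aux} directly with the weight $W=V$, exploiting that the restriction of $\wt P$ to $B_R$ simultaneously controls the size of the perturbation and guarantees stability. Since $p_0=\wt p_0$, the initial term $C\alpha^n\norm{p_0-\wt p_0}_V$ vanishes, so the whole task reduces to bounding three ingredients: $\varepsilon_{\rm tv}:=\varepsilon_{{\rm tv},V}$, $\varepsilon_V:=\varepsilon_{V,V}$, and the stability sum $\sum_{i=0}^{n-1}\wt p_i(V)\,\alpha^{(n-i-1)r}$, followed by an optimization over $r$.

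First I would estimate the perturbation sizes through the intermediate kernel $P_R$, writing for $x\in B_R$ that $P(x,\cdot)-\wt P(x,\cdot)=[P(x,\cdot)-P_R(x,\cdot)]+[P_R(x,\cdot)-\wt P(x,\cdot)]$. The first bracket only moves the mass $P(x,B_R^c)$ from $B_R^c$ back to the atom at $x$, so $\norm{P(x,\cdot)-P_R(x,\cdot)}_{\rm tv}=2P(x,B_R^c)$, and Markov's inequality with the Lyapunov bound gives $P(x,B_R^c)\le PV(x)/R\le(\delta V(x)+L)/R$. The second bracket is controlled by $\Delta(R)$, and since $R\,\Delta(R)\le(1-\delta)/2$ we get $\norm{P_R(x,\cdot)-\wt P(x,\cdot)}_{\rm tv}\le\Delta(R)V(x)\le\tfrac{1-\delta}{2R}V(x)$. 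Dividing by $V(x)\ge1$ yields $\varepsilon_{\rm tv}\le 2(L+1)/R$. For the $V$-norm I would use $\norm{P(x,\cdot)-P_R(x,\cdot)}_V\le PV(x)+V(x)P(x,B_R^c)\le2(\delta V(x)+L)$ on $B_R$, while for the second bracket both measures live on $B_R$ (where $V\le R$), so $\norm{P_R(x,\cdot)-\wt P(x,\cdot)}_V\le R\,\Delta(R)V(x)\le\tfrac{1-\delta}{2}V(x)$; together this gives $\varepsilon_V\le\tfrac52(L+1)$, of order one in $R$. For $x\notin B_R$ the large denominator $V(x)>R$ makes both ratios of the same order, so the global suprema are unchanged. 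The point to keep in mind is that the $V$-norm difference is genuinely $O(1)$, not $O(1/R)$, which is exactly why the interpolation parameter $r$ must be used.

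The crucial step is the stability estimate $\wt p_i(V)\le 2\kappa$ for all $i$. Here I would first observe that restriction does not worsen the drift: for $x\in B_R$ one has $P_R V(x)=PV(x)-\int_{B_R^c}V(y)\,P(x,\d y)+V(x)P(x,B_R^c)\le PV(x)\le\delta V(x)+L$, because the mass removed from $B_R^c$ (where $V>R$) is reinserted at $x$ (where $V(x)\le R$). Combining this with $\abs{\int V(y)\,(\wt P-P_R)(x,\d y)}\le R\,\Delta(R)V(x)\le\tfrac{1-\delta}{2}V(x)$ yields the inherited Lyapunov condition $\wt P V(x)\le\tfrac{1+\delta}{2}V(x)+L$ for $x\in B_R$, with contraction coefficient $\tfrac{1+\delta}{2}<1$ and fixed point $2L/(1-\delta)\le 2\kappa$. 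Since $\wt P$ is supported on $B_R$, the measures $\wt p_i$ are supported on $B_R$ for $i\ge1$, so this drift may be iterated to give $\wt p_i(V)\le2\kappa$ for every $i$; the first step, where $\wt p_0=p_0$ may charge $B_R^c$, is absorbed using $\wt P V\le V+L$ everywhere and $L\le\kappa$. This is the heart of the argument: a naive bound would only give $\wt p_i(V)\le R$ and hence cost an entire factor $R$, whereas the hypothesis $R\,\Delta(R)\le(1-\delta)/2$ is precisely what keeps the drift coefficient below one and replaces $R$ by $\kappa$.

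It remains to assemble and optimize. Feeding the three estimates into Lemma~\ref{lem: aux} and using $\sum_{i=0}^{n-1}\alpha^{(n-i-1)r}\le1/(1-\alpha^r)\le1/(r(1-\alpha))$ gives
\[
 \norm{p_n-\wt p_n}_{\rm tv}\le \left(\tfrac{2(L+1)}{R}\right)^{1-r}\left(\tfrac52(L+1)\right)^{r}C^{r}\,\frac{2\kappa}{r(1-\alpha)}\le \tfrac52(L+1)\,C\,\frac{2\kappa}{1-\alpha}\cdot\frac{R^{\,r-1}}{r}
\]
for any $r\in(0,1]$, using $C^r\le C$ and bounding the weighted geometric mean of the constants by $\tfrac52$. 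Finally, since $R\ge\exp(1)$ the choice $r=1/\log R\in(0,1]$ is admissible and gives $R^{\,r-1}/r=(e/R)\log R$, so the right-hand side becomes $5e\cdot C(L+1)\kappa(1-\alpha)^{-1}\cdot(\log R)/R$, and $5e\le33$ yields the claim. The main obstacle is the stability step of the previous paragraph — showing that $\wt P$ inherits the Lyapunov drift with the same $V$ and a coefficient bounded away from one, together with the bookkeeping for mass outside $B_R$ and for the initial distribution; everything else is the routine interpolation over $r$ that manufactures the $\log R$ factor.
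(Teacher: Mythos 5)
Your proposal is correct and follows essentially the same route as the paper: apply Lemma~\ref{lem: aux} with $W=V$, derive the inherited drift $\wt P V \le (\delta + R\Delta(R))V + L$ on $B_R$ and $\wt PV\le V$ on $B_R^c$ to bound $\wt p_i(V)$ by a multiple of $\kappa$, estimate $\varepsilon_{{\rm tv},V}=O((L+1)/R)$ and $\varepsilon_{V,V}=O(L+1)$ via the decomposition through $P_R$ and Markov's inequality, and take $r=1/\log R$. Your bookkeeping is in fact slightly tighter (stability constant $2\kappa$ versus the paper's $6\kappa$, giving $5e$ in place of $12e$, both below $33$), so the stated constant is comfortably recovered.
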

\noindent
The proof of the result is stated in \ref{sec: proof_sec3}. 
Notice that while the perturbed chain is restricted to the set $B_R$, 
we do not place a similar restriction on the unperturbed chain. 
The estimate \eqref{eq: restr_approx_final_est} 
compares the restricted, perturbed chain to the unrestricted, unperturbed one. 
\begin{rem}
 In the special case where $\wt P(x,\cdot) = P_R(x,\cdot)$ for $x\in B_R$ we have $\Delta(R)=0$. 
 For example
 \[
  \wt P(x,A) = \mathbf{1}_{B_R}(x) P_R(x,A) + \mathbf{1}_{B_R^c}(x) \delta_{x_0}(A), \quad A\in \mathcal{B}(G),
 \]
 with $x_0\in B_R$ satisfies this condition. The resulting perturbed Markov chain
 is simply a restriction of the unperturbed Markov chain to $B_R$
 and Theorem~\ref{thm: restricted_approx} provides a quantitative bound on the difference 
 of the distributions.
\end{rem}
{
\subsection{Relationship to earlier perturbation bounds}
\label{sec: comments_lit}
In contrast to the $V$-uniform ergodicity assumption we impose
on the ideal Markov chain, the results in \cite{AFEB14,Joetal15,Mi05} 
only cover perturbations of uniformly ergodic Markov chains. 
Nonetheless, perturbation theoretical questions 
for geometrically ergodic Markov chains have 
been studied before, see e.g. \cite{BrRoRo01,FHL13,MeLeRo15,NeRo17,RoRoSch98,RuSc15,ShSt00} 
and the references therein. A crucial aspect where those papers differ from each other is how
one measures the closeness of the transitions of the unperturbed and perturbed Markov chains to
have applicable estimates, see the discussion about this in \cite{ShSt00,FHL13,RuSc15}.
Our Theorem~\ref{thm: est1} and Theorem~\ref{thm: est2} refine and extend the results of \cite[Theorem~3.2]{RuSc15}.
In particular, in Theorem~\ref{thm: est2} we take a restriction to 
the center of the state space into account.
Let us also mention here that \cite{PS14,RuSc15} 
contain related results under Wasserstein ergodicity assumptions.
More recently, \cite{JoMa17A} studies approximate chains using notions 
of maximal couplings, \cite{NeRo17} extends 
the uniformly ergodic setting from \cite{Joetal15} to 
using $L_2$ norms instead of total variation, 
and \cite{JoMa17B} explores bounds on the approximation error of time averages.} 

{
The usefulness of restricted approximations in the study of Markov chains has been observed before. } 
{For example in \cite{RuSp16}, 
	in an infinite-dimensional setting, spectral gap properties of a Markov operator based on a restricted approximation are investigated. 
Also recently in \cite{YaRo17} it is proposed to consider a subset of the state space termed ``large set'' in which a certain Lyapunov condition holds. This is in contrast to a Lyapunov function defined on the entire space, which might deteriorate as the dimension of the state space or the number of observations increases. This new Lyapunov condition from \cite{YaRo17} is particularly useful for obtaining explicit bounds on the number of iterations to get close to the stationary distribution in high-dimensional settings. 
}


\section{Monte Carlo within Metropolis}
\label{sec: approx_MH}

In Bayesian statistics it is of interest to sample
with respect to a distribution $\pi$ 
on $(G,\mathcal{B}(G))$.
We assume that $\pi$ admits a possibly \emph{unnormalized density} $\pi_u\colon G \to [0,\infty)$
with respect to a reference
measure $\mu$, for example the counting, Lebesgue or some Gaussian measure.
The Metropolis-Hastings (MH) algorithm is often the method of choice to
draw approximate samples according to $\pi$: 

\begin{alg} \label{alg: Metro_Hast}
For a \emph{proposal transition kernel $Q$} a transition from $x$
to $y$ of the MH algorithm works as follows.
\begin{enumerate}
 \item Draw $U\sim \text{Unif}[0,1]$ and a proposal $Z\sim Q(x,\cdot)$ independently, 
 call the result $u$ and $z$, respectively.
 \item Compute the {\emph{acceptance ratio}}
 \begin{equation}  \label{eq: acc_ratio}
  r(x,z):=\frac{\pi(\dint z)Q(z,\dint x)}{\pi(\dint x)Q(x,\dint z)} 
	 = \frac{\pi_u(z)}{\pi_u(x)}\frac{\mu(\dint z)Q(z,\dint x)}{\mu(\dint x)Q(x,\dint z)},
 \end{equation}
 which is the density of the measure $\pi(\dint z)Q(z,\dint x)$ w.r.t. $\pi(\dint x)Q(x,\dint z)$, see \cite{Ti98}.
 \item If {$u<r(x,z)$}, then accept the proposal, and return $y:=z$, 
 otherwise reject the proposal and return $y:=x$. 
\end{enumerate}
\end{alg}
The transition kernel of the MH algorithm with 
proposal $Q$, stationary distribution $\pi$ and acceptance probability 
 \[
  a(x,z) := \min\left\{ 1,  r(x,z) \right\}
 \]
is given by
\begin{equation}  \label{eq: MH_type}
 M_a(x,\dint z) := a(x,z) Q(x,\dint z) + \delta_x(\dint z) \left(1 - \int_{G} a(x,y) Q(x,\dint y)\right).
\end{equation}
For the MH algorithm in the computation of $r(x,z)$ one uses $\pi_u(z)/\pi_u(x)$, which
might be 
known from having access to 
function evaluations of the unnormalized density $\pi_u$.
However, when it is expensive or even impossible to compute function values of $\pi_u$, then it may not be feasible to sample from $\pi$ using the MH algorithm. 
Here are two typical examples of such scenarios:
\begin{itemize}
\item {\bf Doubly-intractable distribution:} 
For models such as \emph{Markov or Gibbs random fields}, the 
unnormalized density $\pi_u(x)$ itself is typically only known
 up to a factor $Z(x)$, that is, 
\begin{equation}  \label{eq: doubly_intractable}
 \pi_u(x) = \rho(x)/ Z(x), \qquad x\in G
 \end{equation}
where functions values of $\rho$ can be computed, but function values of $Z$ cannot. 
For instance, $Z$ might be given in the form
\[
 Z(x) = \int_\mathcal{Y} \widebar{\rho}(x,y)\, r_x(\dint y),
\]
where $\mathcal{Y}$ denotes an auxiliary variable space, 
$\widebar{\rho}\colon G\times \mathcal{Y} \to [0,\infty)$ and $r_x$ 
is a {probability distribution on $\mathcal{Y}$.}
 \item {\bf Latent variables:}  
 Here 
 $\pi_u(x)$ cannot be evaluated, since it takes the form
\begin{equation}  \label{eq: latent}
 \pi_u(x) = \int_{\mathcal{Y}} \widebar{\rho}(x,y)\, r_x(\dint y)
 \end{equation}
with a probability
 distribution $r_x$ on a 
measurable space $\mathcal{Y}$ of \emph{latent variables} $y$ and
a non-negative function $\widebar \rho \colon G\times \mathcal{Y} \to [0,\infty)$.
\end{itemize}

In the next sections, we study in both of these settings the perturbation error
of an approximating MH algorithm. A fair assumption in both scenarios, {which holds for a large family of target distributions using random-walk type proposals, see, e.g., \cite{MeTw96,RoTw96,JaHa00}}, is that the infeasible, unperturbed 
MH algorithm is $V$-uniformly ergodic:

\begin{ass}  \label{ass: metro_V_unif}
 For some function $V\colon G\to [1,\infty)$ 
 let the transition kernel $M_a$ of the MH algorithm be 
 $V$-uniformly ergodic, that is,
 \[
  \norm{M_a^n(x,\cdot) - \pi}_{V} \leq C V(x) \alpha^n
 \]
 with $C\in [1,\infty)$ and $\alpha\in [0,1)$, and additionally, assume that
 the Lyapunov condition
 \[
  M_aV(x) \leq \delta V(x) + L, 
 \]
 for some $\delta\in [0,1)$ and $L\in[1,\infty)$ is satisfied.
\end{ass}

We have the 
following standard proposition (see e.g. \cite[Lemma~4.1]{RuSc15} or \cite{AFEB14,BDH14,JoMa17B,MeLeRo17,PS14}) 
which leads to upper bounds on 
$\varepsilon_{{\rm tv}}$, $\varepsilon_{V}$ and $\Delta(R)$ (see Lemma~\ref{lem: aux} 
and Theorem~\ref{thm: restricted_approx})
for two MH type algorithms $M_b$ and $M_c$
with {common proposal distribution but different acceptance probability functions $b,c\colon G\times G\to [0,1]$, respectively}.

\begin{prop}
\label{prop: accept_err}
 Let $b,c \colon G\times G\to [0,1]$ and let $V\colon G\to [1,\infty)$ 
 be such that  ${\sup\limits_{x\in G}}\frac{M_b V(x)}{V(x)} \leq T$ for a constant $T\geq 1$.
 Assume that
 there are 
 functions $\eta, \xi \colon G\to [0,\infty)$ and a set $B\subseteq G$ such that, 
 {either
 \begin{equation} \label{eq: accept_err1case}
 \begin{aligned}
   \abs{b(x,y)-c(x,y)} &\leq \mathbf{1}_B(y)   (\eta(x)+\eta(y))b(x,y) \xi(x), \quad\text{or}\\
   \abs{b(x,y)-c(x,y)} &\leq \mathbf{1}_B(y)   (\eta(x)+\eta(y))b(x,y) \xi(y)
 \end{aligned}
 \end{equation}
 for all $x,y\in G$.
}
 Then we have
 \begin{equation*}
    \sup_{x\in B} \frac{\norm{M_b(x,\cdot)-M_c(x,\cdot)}_{V}}{V(x)}  \leq 4 T
  \norm{\eta\cdot \mathbf{1}_B}_{\infty}\norm{\xi \cdot \mathbf{1}_B}_{\infty},
 \end{equation*}
and, {with the definition of $\norm{\cdot }_{\infty,W}$ provided in \eqref{eq: fct_class_norm}}, for any $\beta\in (0,1)$, 
 \begin{align*}
  \sup_{x\in B} \frac{\norm{M_b(x,\cdot)-M_c(x,\cdot)}_{\rm tv}}{V(x)} & \leq 4 T 
  \norm{\eta\cdot \mathbf{1}_B}_{\infty,V^\beta}\norm{\xi\cdot \mathbf{1}_B}_{\infty,V^{1-\beta}}.
 \end{align*}
\end{prop}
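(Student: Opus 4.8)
The plan is to reduce both bounds to a single pointwise identity for the signed measure $M_b(x,\cdot)-M_c(x,\cdot)$ and then to redistribute the weight function $V$ appropriately. First I would record that, from the definition \eqref{eq: MH_type}, the rejection terms of $M_b$ and $M_c$ share the same $\delta_x$-structure, so for any bounded measurable $f$ and any $x\in G$ they combine to give the cancellation
\[
 (M_b-M_c)f(x) = \int_G (b(x,z)-c(x,z))\,(f(z)-f(x))\,Q(x,\dint z).
\]
This identity is the crucial structural observation: the difference only sees the increment $f(z)-f(x)$, which is bounded by $V(z)+V(x)$ when $\abs{f}\le V$ and by $2$ when $\abs{f}\le 1$. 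I would then insert hypothesis \eqref{eq: accept_err1case} to replace $\abs{b-c}$ by $\mathbf{1}_B(z)(\eta(x)+\eta(z))b(x,z)$ times the relevant $\xi$-factor, and split the resulting integral into an $\eta(x)$-part and an $\eta(z)$-part.

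The two recurring quantities to control are $\int_G b(x,z)\,Q(x,\dint z)\le 1$ and $\int_G b(x,z)V(z)\,Q(x,\dint z)$. For the latter I would observe that the rejection term in $M_bV(x)$ is nonnegative (since $V\ge 1$), hence $\int_G b(x,z)V(z)\,Q(x,\dint z)\le M_bV(x)\le T\,V(x)$; this is the only point where the assumption on $T$ enters. For the $V$-norm bound this suffices: restricting to $x\in B$ and using $\mathbf{1}_B(z)$, I bound $\eta$ and $\xi$ by their suprema $\norm{\eta\mathbf{1}_B}_\infty$, $\norm{\xi\mathbf{1}_B}_\infty$, use $\abs{f(z)-f(x)}\le V(z)+V(x)$, and combine the two integral estimates to obtain the factor $2\norm{\eta\mathbf{1}_B}_\infty\norm{\xi\mathbf{1}_B}_\infty(T+1)V(x)\le 4T\,\norm{\eta\mathbf{1}_B}_\infty\norm{\xi\mathbf{1}_B}_\infty V(x)$, where $T+1\le 2T$ since $T\ge 1$. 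Taking the supremum over $\abs{f}\le V$ and dividing by $V(x)$ gives the first claim, and both variants of \eqref{eq: accept_err1case} are handled identically here because $\eta(x),\eta(z)$ and $\xi$ are all bounded by suprema over $B$.

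The more delicate part is the total variation bound, where the single power $V(x)$ must be split into $V(x)^\beta$ (carried by $\eta$) and $V(x)^{1-\beta}$ (carried by $\xi$). I would bound $\eta$ and $\xi$ via the weighted suprema, e.g. $\eta(z)\mathbf{1}_B(z)\le\norm{\eta\mathbf{1}_B}_{\infty,V^\beta}V(z)^\beta$ and $\xi(z)\mathbf{1}_B(z)\le\norm{\xi\mathbf{1}_B}_{\infty,V^{1-\beta}}V(z)^{1-\beta}$, and use $\abs{f(z)-f(x)}\le 2$. The main obstacle is then the fractional moment $\int_G V(z)^{s}b(x,z)\,Q(x,\dint z)$ for $s\in(0,1)$, which I expect to be the one genuinely nonroutine step: I would control it by Hölder's inequality with exponents $1/s$ and $1/(1-s)$ applied to the subprobability measure $b(x,z)Q(x,\dint z)$, which together with the moment bound above yields $\int_G V(z)^{s}b(x,z)\,Q(x,\dint z)\le (T\,V(x))^{s}$.

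Assembling the $\eta(x)$- and $\eta(z)$-parts then produces a constant of the form $2(T^{s}+T^{s'})$ with exponents $s,s'\in[0,1]$, which is at most $4T$ because $T\ge 1$ forces $T^{s}\le T$. This is where the two cases of \eqref{eq: accept_err1case} diverge only cosmetically: in the $\xi(x)$ variant the Hölder step is needed for the $\eta(z)$-term (giving $2(1+T^{\beta})$), while in the $\xi(y)$ variant the extra factor $V(z)^{1-\beta}$ from $\xi(z)$ moves the fractional power into the $\eta(x)$-term (giving $2(T^{1-\beta}+T)$); in both cases the powers of $T$ stay in $[0,1]$, so the clean constant $4T$ survives. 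Taking the supremum over $\abs{f}\le 1$ and dividing by $V(x)$ completes the proof.
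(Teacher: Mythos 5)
Your proof is correct and follows essentially the same route as the paper: the same decomposition of $M_b-M_c$ (you merge the two terms into one integral against $f(z)-f(x)$, which is equivalent), the same use of $\int_G b(x,z)V(z)\,Q(x,\dint z)\le M_bV(x)\le TV(x)$, and the same case analysis ending in $2(1+T^{\beta})\le 4T$ resp. $2(T^{1-\beta}+T)\le 4T$. The only cosmetic difference is that you control the fractional moment $\int_G V(z)^{s}b(x,z)\,Q(x,\dint z)\le (TV(x))^{s}$ by H\"older on the subprobability measure $b(x,z)Q(x,\dint z)$, whereas the paper applies Jensen's inequality to get $M_bV^{\beta}(x)\le (M_bV(x))^{\beta}\le T^{\beta}V(x)^{\beta}$ --- the same estimate in different clothing.
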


The proposition provides 
a tool 
for controlling the distance between the transition kernels 
of two MH type algorithms with identical proposal and different acceptance 
probabilities. The specific functional form for the dependence of the upper bound in \eqref{eq: accept_err1case} on $x$ and $y$ is motivated by 
the applications below. 
The set $B$ indicates the ``essential'' part of $G$ where the 
difference of the acceptance probabilities matter.
The parameter $\beta$ is used 
to shift weight between the two components $\xi$ 
and $\eta$ of the approximation error. 
For the proof of the proposition, we refer to \ref{sec: proof_Sec4}.

\subsection{Doubly-intractable distributions}
\label{sec: doubly_intractable}

In the case where $\pi_u$ takes the form \eqref{eq: doubly_intractable}, we can 
approximate $Z(x)$ by a Monte Carlo estimate
\[
 \widehat Z_N(x) := \frac{1}{N} \sum_{i=1}^N \widebar{\rho}(x,Y^{(x)}_i),
\]
under the assumption that we have access to an iid 
sequence of random variables $(Y^{(x)}_i)_{1\leq i\leq N}$ 
where each $Y^{(x)}_i$ is distributed according to
$r_x$. Then, the idea is to substitute the unknown quantity $Z(x)$ by the approximation $\widehat Z_N(x) $ 
within the acceptance ratio. {Defining 
$W_N(x):= \frac{\widehat Z_N(x)}{Z(x)}$, 
the acceptance ratio can be written as} {
\begin{align*}
 \widetilde r(x,z,W_N(x),W_N(z)) &:= 
 \frac{\mu({\rm d} z)Q(z,{\rm d} x)}{\mu({\rm d} x)Q(x,{\rm d} z)} \cdot \frac{\widehat Z_N(x)}{\widehat Z_N(z)} =
 r(x,z) \cdot \frac{W_N(x)}{W_N(z)},
\end{align*} 
where} the random variables $W_N(x)$, $W_N(z)$ are assumed to be independent from each other. {Notice that the quantities $W_N$ only appear in the theoretical analysis of the algorithm.  For the implementation, it is sufficient to be able to compute $\wt r$.}
This leads to a \emph{Monte Carlo within Metropolis} (MCwM) algorithm:
\begin{alg} \label{alg: MCwM_doubly_intract}
For a given proposal transition kernel $Q$, a transition from $x$
to $y$ of the MCwM algorithm works as follows.
\begin{enumerate}
 \item Draw $U\sim \text{Unif}[0,1]$ and a proposal $Z\sim Q(x,\cdot)$ independently, 
 call the result $u$ and $z$, respectively.
 \item {Calculate $\widetilde r(x,z,W_N(x),W_N(z))$ based on independent samples for $W_N(x)$, $W_N(z)$, which are also independent from previous iterations.}
 \item If {$u<\widetilde  r(x,z,W_N(x),W_N(z))$}, then accept the proposal, and return $y:=z$, 
 otherwise reject the proposal and return $y:=x$. 
\end{enumerate}
\end{alg}
Given the current state $x\in G$ and a proposed state $z\in G$ the overall acceptance probability is{
\begin{align} \label{eq: doubl_intr_a_N}
a_N(x,z):=\mathbb{E}[ \min\left\{ 1, \widetilde r(x,z,W_N(x),W_N(z)) \right\} ],
\end{align}}
which leads to
the corresponding transition kernel of the form $M_{a_N}$, see \eqref{eq: MH_type}.
{
\begin{rem}
	Let us 
	emphasize that the doubly-intractable case can also be approached algorithmically from various other perspectives. For instance, instead of estimating the normalizing constant $Z(x)$ 
	one could estimate unbiasedly $(Z(x))^{-1}$ whenever exact simulation from the Markov or Gibbs 
	random field is possible.
	In this case, $\pi_u(x)$ turns into a Monte Carlo estimate which can formally be analyzed with exactly the same techniques as the latent variable scenario described below. 
	Yet another algorithmic possibility is explored in the \emph{noisy exchange} algorithm of  \cite{AFEB14}, where ratios of the form
	$Z(x)/Z(y)$ are approximated by a single Monte Carlo estimate. Their algorithm is motivated by 
	the \emph{exchange algorithm} \cite{MuGhMa06} which, 
	perhaps surprisingly, can avoid the need for evaluating the ratio $Z(x)/Z(y)$ and targets the distribution $\pi$ exactly, see e.g. \cite{EJRE16,PaHa18} for an overview of these and related methods.
	However, in some cases the exchange algorithm performs poorly, see \cite{AFEB14}. Then approximate 
	sampling
	methods for distributions of the form \eqref{eq: doubly_intractable_INTRO} might prove useful as long as the introduced bias is not too large. As a final remark in this direction, the recent work 
	\cite{ADYC_2018}  considers a correction of the noisy exchange algorithm which produces a 
	Markov chain with stationary distribution $\pi$.
\end{rem}
}
The quality of the MCwM algorithm depends on the error of the approximation of $Z(x)$.
The root mean squared error of this approximation can be quantified by the use of $W_N$, that is,
\begin{equation}  \label{eq: MSE}
 (\mathbb{E} \abs{W_{N}(x) - 1}^2)^{1/2} = \frac{s(x)}{\sqrt{N}}
 \qquad x\in G,\, N\in \mathbb{N},
\end{equation}
where 
\[
 s(x) := (\mathbb{E}\abs{W_{1}(x)-1}^2)^{1/2}
\]
is determined by the second moment of $W_1(x)$.
In addition, due to the appearance of the estimator $W_{N}(z)$ in the denominator of {$\widetilde r$}, 
we need some control of its distribution near zero. 
To this end, we define, for $z\in G$ and $p>0$, the inverse moment function
\[
i_{p,N}(z) :=  \left(\mathbb{E} W_{N}(z)^{-p}\right)^{\frac1p}.
\] 
With this notation we obtain the following estimate, which is proved in \ref{sec: proof_Sec4}.
\begin{lem} \label{lem: doubly_approx_err}
Assume that there exists $k\in \mathbb{N}$ such that $i_{2,k}(x)$ and $s(x)$ are finite for all $x\in G$.
Then, for all $x,z \in G$ and $N\geq k$ we have
\[
 \abs{a(x,z)-a_N(x,z)} \leq a(x,z) \frac{1}{\sqrt{N}} i_{2,k}(z)(s(x)+s(z)).
\]
\end{lem}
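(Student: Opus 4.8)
The plan is to reduce the difference of the acceptance probabilities to a first-moment estimate for the ratio $W_N(x)/W_N(z)$, and then to control that moment using the hypotheses on $s$ and $i_{2,k}$ together with a monotonicity property of the inverse moments. Recall that $a(x,z)=\min\{1,r(x,z)\}$ is deterministic while, by \eqref{eq: doubl_intr_a_N} and the identity $\widetilde r=r(x,z)\,W_N(x)/W_N(z)$, one has $a_N(x,z)=\mathbb{E}[\min\{1,r(x,z)\,W_N(x)/W_N(z)\}]$. Writing the difference as a single expectation and pulling the absolute value inside, the argument is driven by the elementary inequality that for all $c\geq 0$ and $u\geq 0$,
\[
 \abs{\min\{1,c\}-\min\{1,cu\}}\leq \min\{1,c\}\,\abs{u-1},
\]
which follows from the two-sided bound $\min\{1,c\}\min\{1,u\}\leq \min\{1,cu\}\leq \min\{1,c\}\max\{1,u\}$ by a short case distinction. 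Applying it with $c=r(x,z)$ and $u=W_N(x)/W_N(z)$ factors out $a(x,z)$ and leaves
\[
 \abs{a(x,z)-a_N(x,z)}\leq a(x,z)\,\mathbb{E}\,\abs{\frac{W_N(x)}{W_N(z)}-1}.
\]

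To bound the remaining expectation I would write $W_N(x)/W_N(z)-1=(W_N(x)-W_N(z))/W_N(z)$, use the triangle inequality $\abs{W_N(x)-W_N(z)}\leq \abs{W_N(x)-1}+\abs{W_N(z)-1}$, the independence of $W_N(x)$ and $W_N(z)$, and the Cauchy--Schwarz inequality, to get
\[
 \mathbb{E}\,\abs{\frac{W_N(x)}{W_N(z)}-1}
 \leq \mathbb{E}\abs{W_N(x)-1}\,\mathbb{E}\!\left[W_N(z)^{-1}\right]
 + \left(\mathbb{E}\abs{W_N(z)-1}^2\right)^{1/2}\!\left(\mathbb{E}\,W_N(z)^{-2}\right)^{1/2}.
\]
Invoking \eqref{eq: MSE}, i.e. $(\mathbb{E}\abs{W_N(\cdot)-1}^2)^{1/2}=s(\cdot)/\sqrt{N}$, together with $\mathbb{E}[W_N(z)^{-1}]\leq i_{2,N}(z)$ (Cauchy--Schwarz once more), this yields the bound $\tfrac{i_{2,N}(z)}{\sqrt{N}}(s(x)+s(z))$.

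The main obstacle is to replace $i_{2,N}(z)$ by $i_{2,k}(z)$, that is, to show that $N\mapsto i_{p,N}(z)$ is non-increasing. I would establish this by a reverse-martingale argument: writing $S_N=\sum_{i=1}^N \widebar{\rho}(z,Y_i^{(z)})/Z(z)$, so that $W_N(z)=S_N/N$ is an average of iid mean-one random variables, and setting $\mathcal{F}_N=\sigma(S_N,S_{N+1},\dots)$, exchangeability gives $W_{N+1}(z)=\mathbb{E}[W_N(z)\mid \mathcal{F}_{N+1}]$. Since $w\mapsto w^{-p}$ is convex on $(0,\infty)$, conditional Jensen then yields $\mathbb{E}[W_{N+1}(z)^{-p}]\leq \mathbb{E}[W_N(z)^{-p}]$, hence $i_{2,N}(z)\leq i_{2,k}(z)$ for every $N\geq k$; this is exactly where the restriction $N\geq k$ and the finiteness of $i_{2,k}$ enter, ensuring all moments involved are finite. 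Chaining the three displays gives the claimed estimate $\abs{a(x,z)-a_N(x,z)}\leq a(x,z)\,N^{-1/2}\,i_{2,k}(z)(s(x)+s(z))$.
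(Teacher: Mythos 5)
Your proof is correct and follows essentially the same route as the paper: both arguments reduce $\abs{a(x,z)-a_N(x,z)}$ to $a(x,z)\,\mathbb{E}\abs{W_N(x)/W_N(z)-1}$, bound this expectation via Cauchy--Schwarz together with the definitions of $s$ and $i_{2,N}$, and conclude with the monotonicity $i_{2,N}\le i_{2,k}$ for $N\ge k$ (Lemma~\ref{lemI}). The only differences are cosmetic: you obtain the first reduction from a clean pointwise inequality $\abs{\min\{1,c\}-\min\{1,cu\}}\le \min\{1,c\}\abs{u-1}$ instead of the paper's sandwich bounds on $a_N$ plus a case distinction, and you prove the inverse-moment monotonicity directly by a conditional-Jensen (reverse-martingale) argument where the paper defers to Lemma~\ref{lemI} and an external reference.
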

\begin{rem}
 One can replace the boundedness of the second inverse moment $i_{2,k}(x)$ for any $x\in G$ 
 by boundedness of a lower moment $i_{p,m}(x)$ for $p\in (0,2)$ with suitably adjusted $m\in \mathbb{N}$, 
 see Lemma~\ref{lemI} in Appendix \ref{sec: proof_Sec4}.
\end{rem}

\subsubsection{Inheritance of the Lyapunov condition}

If the second and inverse second moment are uniformly bounded, $\norm{s}_{\infty}<\infty$ as well as
$\norm{i_{2,N}}_{\infty}<\infty$, one can show that 
the Lyapunov condition of the 
MH transition kernel is inherited by the MCwM algorithm. In the following corollary, we prove this inheritance and state the resulting error bound for MCwM.

\begin{cor}\label{cor: infbounds_doubly_intr}
 For a distribution $m_0$ on $G$ let $m_n:=m_0 M_a^{n}$ and $m_{n,N}:=m_0 M_{a_N}^n$ be 
 the respective distributions of the MH and MCwM algorithms after $n$ steps.
 Let Assumption~\ref{ass: metro_V_unif} be satisfied and 
 for some $k\in \mathbb{N}$ let
 \[
  D:= 8L \norm{i_{2,k}}_\infty \norm{s}_{\infty} <\infty.
 \]
Further, define
$
 \delta_N:= \delta+D/\sqrt{N}$ and 
$
\beta_n:= n\max\{\delta_N,\alpha\}^{n-1}.
$
Then, for any 
\[
 N> \max\left\{k, \frac{D^2}{(1-\delta)^2}\right\}
\]
we have $\delta_N\in[0,1)$ and
\[
 \norm{m_n-m_{n,N}}_{\rm tv} \leq 
 \frac{DC}{\sqrt{N}} \left[m_0(V)\beta_n + \frac{L}{(1-\delta_N)(1-\alpha)} \right].
\]
\end{cor}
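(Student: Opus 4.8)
The plan is to recognize this corollary as a direct application of Theorem~\ref{thm: est2} to the pair $P = M_a$ (the ideal MH kernel) and $\widetilde P = M_{a_N}$ (the MCwM kernel), specialized to $W = V$, $r = 1$, and common initial law $p_0 = \widetilde p_0 = m_0$. With this choice the term $C\alpha^n \norm{\widetilde p_0 - p_0}_V$ vanishes, the factor $\varepsilon_{{\rm tv},V}^{1-r}$ becomes $1$, and the right-hand side collapses to $\varepsilon_V\, C\,[\,m_0(V)\,\beta_{n,1}(\widetilde\delta,\alpha) + \gamma/(1-\alpha)\,]$ with $\gamma = \widetilde L/(1-\widetilde\delta)$. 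Matching this to the claimed bound then reduces to two tasks: (i) verifying a Lyapunov condition for $M_{a_N}$ so as to identify $\widetilde\delta = \delta_N$ and $\widetilde L = L$, and (ii) establishing $\varepsilon_V \le D/\sqrt{N}$. The condition $N > \max\{k, D^2/(1-\delta)^2\}$ serves exactly to make these two steps valid: $N > k$ guarantees Lemma~\ref{lem: doubly_approx_err} applies, while $N > D^2/(1-\delta)^2$ forces $D/\sqrt{N} < 1-\delta$ and hence $\delta_N = \delta + D/\sqrt{N} \in [0,1)$.

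The inheritance of the Lyapunov condition is the heart of the argument and the step I expect to require the most care. I would start from the elementary identity for MH-type kernels, $M_a V(x) = V(x) + \int_G (V(z) - V(x))\,a(x,z)\,Q(x,\dint z)$, and the analogous one for $M_{a_N}$, so that $M_{a_N}V(x) - M_a V(x) = \int_G (V(z)-V(x))(a_N(x,z)-a(x,z))\,Q(x,\dint z)$. Bounding $\abs{V(z)-V(x)} \le V(z)+V(x)$ and inserting the pointwise estimate of Lemma~\ref{lem: doubly_approx_err}, together with the uniform moment bounds $\norm{s}_\infty, \norm{i_{2,k}}_\infty < \infty$, I would reduce matters to controlling $\int_G (V(z)+V(x))\,a(x,z)\,Q(x,\dint z)$. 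Since $a(x,z) \le 1$ and the rejection term is nonnegative, this integral is at most $M_a V(x) + V(x) \le (1+\delta)V(x) + L$, and using $V \ge 1$ and $L \ge 1$ one gets $(1+\delta)V(x) + L \le 4L\,V(x)$. Collecting the constants yields $\abs{M_{a_N}V(x) - M_a V(x)} \le (8L\norm{i_{2,k}}_\infty\norm{s}_\infty/\sqrt{N})\,V(x) = (D/\sqrt{N})\,V(x)$, whence $M_{a_N}V(x) \le \delta_N V(x) + L$. This is precisely the reason $D$ carries the factor $8L$.

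For the closeness estimate $\varepsilon_V = \varepsilon_{V,V}$, I would invoke Proposition~\ref{prop: accept_err} with $b = a$, $c = a_N$, and $B = G$. The bound of Lemma~\ref{lem: doubly_approx_err} fits the second alternative in \eqref{eq: accept_err1case} upon setting $\eta = s$ and $\xi = i_{2,k}/\sqrt{N}$, since $\abs{a(x,z)-a_N(x,z)} \le (s(x)+s(z))\,a(x,z)\,(i_{2,k}(z)/\sqrt{N})$. The hypothesis $\sup_x M_a V(x)/V(x) \le T$ holds with $T = 2L$ because $M_a V(x) \le \delta V(x) + L \le (1+L)V(x) \le 2L\,V(x)$. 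Proposition~\ref{prop: accept_err} then gives $\varepsilon_V \le 4T\,\norm{s}_\infty\,(\norm{i_{2,k}}_\infty/\sqrt{N}) \le 8L\norm{s}_\infty\norm{i_{2,k}}_\infty/\sqrt{N} = D/\sqrt{N}$, so the same constant $D$ controls both the drift increment and the perturbation error.

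Finally I would assemble the pieces: substituting $\varepsilon_V \le D/\sqrt{N}$, $\widetilde\delta = \delta_N$, $\widetilde L = L$, $\gamma = L/(1-\delta_N)$ into Theorem~\ref{thm: est2} (with $r=1$ and vanishing initial term) gives
\[
\norm{m_n - m_{n,N}}_{\rm tv} \le \frac{DC}{\sqrt{N}}\Bigl[\,m_0(V)\,\beta_{n,1}(\delta_N,\alpha) + \frac{L}{(1-\delta_N)(1-\alpha)}\,\Bigr],
\]
and replacing $\beta_{n,1}(\delta_N,\alpha)$ by its upper bound $n\,\max\{\delta_N,\alpha\}^{n-1} = \beta_n$ from Remark~\ref{rem: beta} yields the stated estimate. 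The only genuinely delicate point is the constant bookkeeping in the inheritance step, where one must keep $\widetilde L$ equal to $L$ (absorbing the stray additive $L$-contribution into the $V$-coefficient via $V \ge 1$) so that the final $\gamma$ has the clean form $L/(1-\delta_N)$; everything else is a routine specialization of the general perturbation theorem.
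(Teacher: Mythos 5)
Your proposal is correct and follows essentially the same route as the paper: Lemma~\ref{lem: doubly_approx_err} combined with Proposition~\ref{prop: accept_err} (with $B=G$, $\eta=s$, $\xi=i_{2,k}/\sqrt{N}$, $T=2L$) gives $\varepsilon_{V,V}\le D/\sqrt{N}$, the Lyapunov condition is inherited with $\wt\delta=\delta_N$ and $\wt L=L$, and Theorem~\ref{thm: est2} with $r=1$ together with Remark~\ref{rem: beta} yields the stated bound. The only cosmetic difference is the inheritance step: the paper obtains $M_{a_N}V(x)-M_aV(x)\le\norm{M_a(x,\cdot)-M_{a_N}(x,\cdot)}_V\le (D/\sqrt{N})\,V(x)$ in one line by reusing the already-established $\varepsilon_{V,V}$ bound, whereas you rederive the same estimate directly from the integral representation of the two kernels, arriving at the same constant.
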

\begin{proof}
 Assumption~\ref{ass: metro_V_unif} 
 implies $\sup_{x\in G}\frac{M_a V(x)}{V(x)} \leq 2L$.
 By Lemma~\ref{lem: doubly_approx_err}  
 and Proposition~\ref{prop: accept_err}, with $B=G$, we obtain
 \begin{equation*} 
    \varepsilon_{V,V} = \sup_{x\in G} \frac{\norm{M_a(x,\cdot)-M_{a_N}(x,\cdot)}_{V}}{V(x)}
    \leq \frac{D}{\sqrt{N}}.
 \end{equation*}
 Further, note that
 \[
  M_{a_N} V(x)-M_aV(x) \leq \norm{M_a(x,\cdot)-M_{a_N}(x,\cdot)}_V \leq \frac{D}{\sqrt{N}} V(x),
 \]
 which implies, by Assumption~\ref{ass: metro_V_unif}, that
 for $N>D^2/(1-\delta)^2$ we have $\delta_N \in [0,1)$ 
 and $M_{a_N}V(x) \leq \delta_N V(x)+L$. 
 By Theorem~\ref{thm: est2} and Remark~\ref{rem: beta} we obtain for $r=1$ the assertion. 
\end{proof}
Observe that the estimate is bounded in $n\in \mathbb{N}$ so that the 
difference of the distributions converges uniformly in $n$ to zero for $N\to\infty$. 
The constant $\delta_N$ decreases for increasing $N$, so that larger values of $N$ improve the bound.\\

\noindent
{\bf Log-normal example I.}
Let $G=\mathbb{R}$ and the target measure $\pi$ be the standard normal distribution.
We choose a Gaussian proposal kernel $Q(x,\cdot) = \mathcal{N}(x,\gamma^2)$ for some $\gamma^2>0$, where 
$\mathcal{N}(\mu,\sigma^2)$ denotes the normal distribution with mean $\mu$ and variance $\sigma^2$.
It is well known, see \cite[Theorem~4.1, Theorem~4.3 and Theorem~4.6]{JaHa00},  
that the MH transition kernel satisfies Assumption~\ref{ass: metro_V_unif}
for some numbers $\alpha$, $C$, $\delta$ and $L$
with $V(x)=\exp( x^2/4)$.

Let $g(y;\mu,\sigma^2)$ be the density of the log-normal distribution with parameters $\mu$
and  $\sigma$, i.e., $g$ is the density of $\exp(\mu+\sigma S)$ for a random variable $S \sim \mathcal{N}(0,1)$. 
Then, by the fact that $\int_0^\infty y\, g(y; -\sigma(x)^2/2,\sigma(x)^2)\dint y=1$
for all functions 
$\sigma\colon G\to (0,\infty)$,
we can write the (unnormalized) standard normal density as
\[
 \pi_u(x) = \exp(-x^2/2) = \frac{\exp(-x^2/2)}{\int_0^\infty y\, g(y; -\sigma(x)^2/2,\sigma(x)^2)\dint y}.
\]
Hence $\pi_u$ takes the form \eqref{eq: doubly_intractable} with
$\mathcal{Y}=[0,\infty)$, $\rho(x)=\exp(-x^2/2)$, $\widebar{\rho}(x,y)=y$ and $r_x$ being a log-normal distribution
with parameters $-\sigma(x)^2/2$ and $\sigma(x)^2$. Independent draws from this log-normal distribution are used in the MCwM algorithm to 
approximate the integral. 
We have
$\mathbb{E}[W_1(x)^p]=\exp(p(p-1)\sigma(x)^2/2)$ for all $x,p\in \mathbb{R}$ and, accordingly,
\begin{align*}
 s(x) & =(\exp(\sigma(x)^2)-1)^{1/2}\leq \exp(\sigma(x)^2/2)\\
 i_{p,1}(x)& =\exp((p+1)\sigma(x)^2/2).
\end{align*}
By Lemma~\ref{lemI} we conclude that
\[
 i_{2,k}(x)\leq i_{2/k,1}(x)=\exp\left(\left( \frac{1}{2}+\frac{1}{k} \right)\sigma(x)^2\right).
\]
Hence, $\norm{s}_\infty$ as well as $\norm{i_{2,k}}_\infty$ are bounded if for some constant $c>0$ we
have $\sigma(x)^2\leq c$ for all $x\in G$.
In that case Corollary~\ref{cor: infbounds_doubly_intr} is applicable
and provides estimates for the difference between the distributions of the MH and MCwM algorithms after 
$n$-steps. However, one might ask what happens if the function $\sigma(x)^2$ is not uniformly bounded, taking, for example, 
the form $\sigma(x)^2 = \abs{x}^q$ for some $q>0$.
In Figure~\ref{fig: no_trunc_s18} we illustrate the difference of 
the distribution of the target measure to a kernel density estimator based 
on a MCwM algorithm sample for $\sigma(x)^2=\abs{x}^{1.8}$. Even though 
$s(x)$ and $i_{p,1}(x)$
grows super-exponentially in $|x|$,  
the MCwM still works reasonably well in this case.
\begin{center}
\begin{figure}[htb] 
\includegraphics[width=4.3cm]{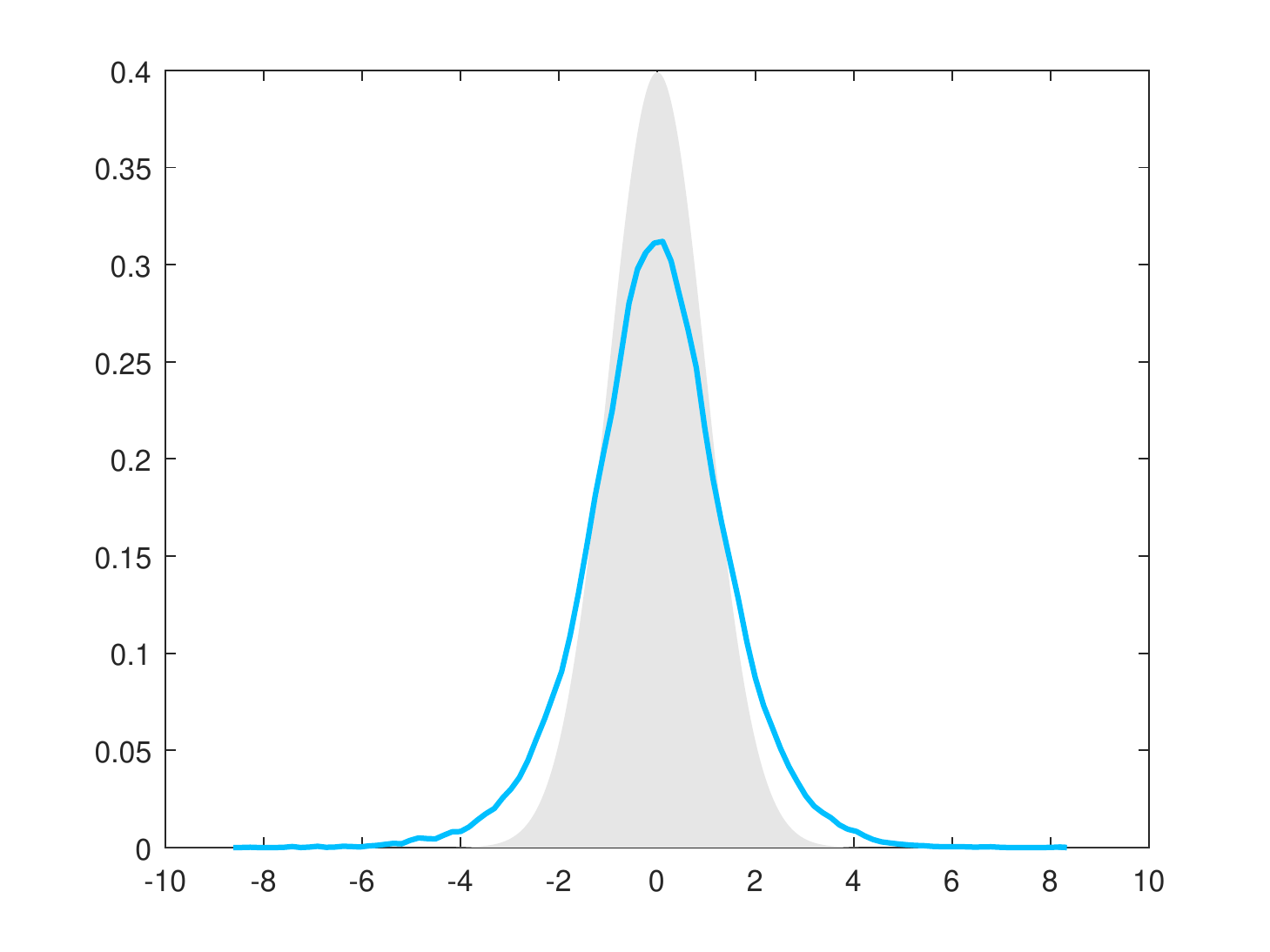}\hspace{-0.5cm}
\includegraphics[width=4.3cm]{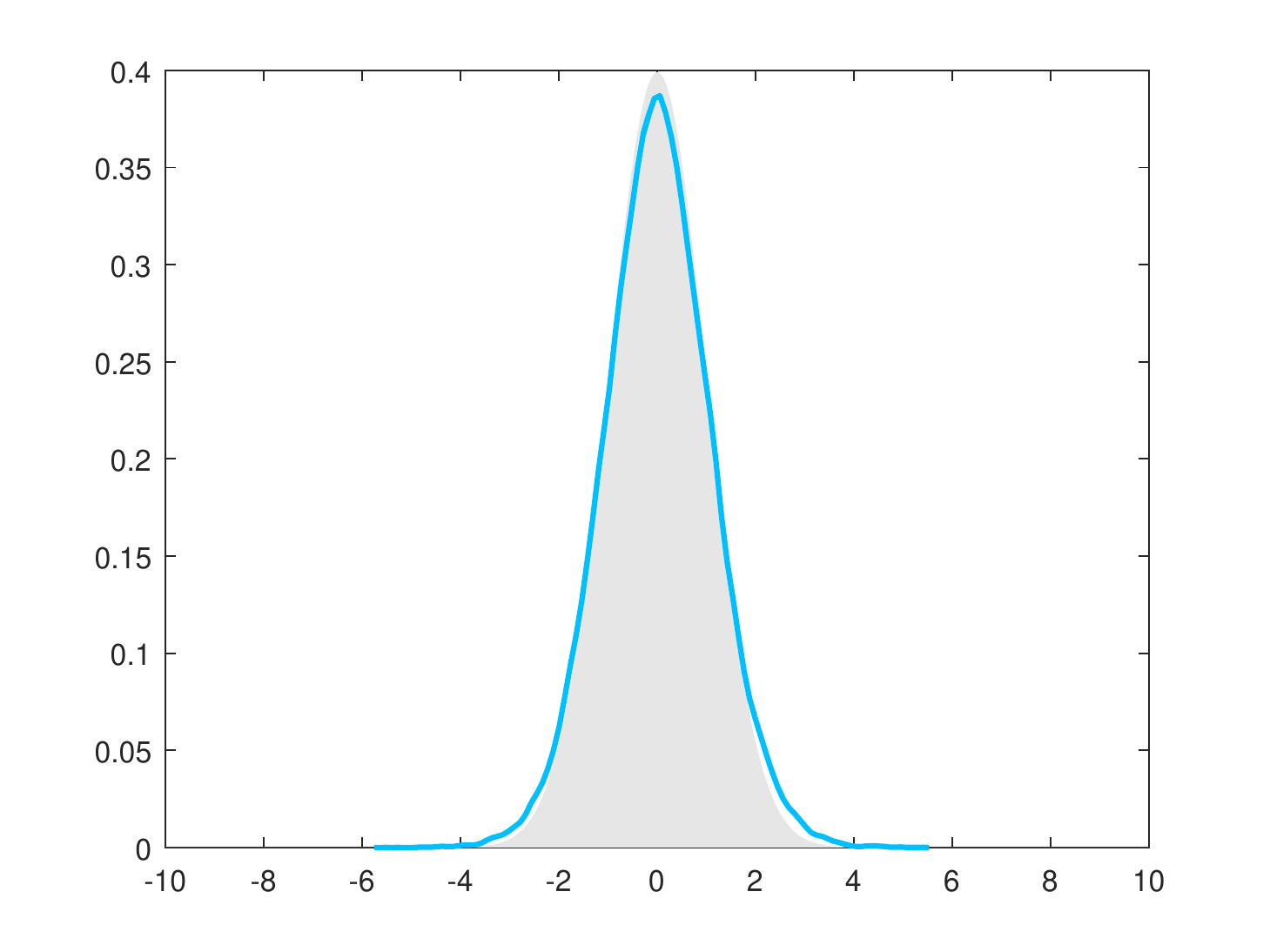}\hspace{-0.5cm}
\includegraphics[width=4.3cm]{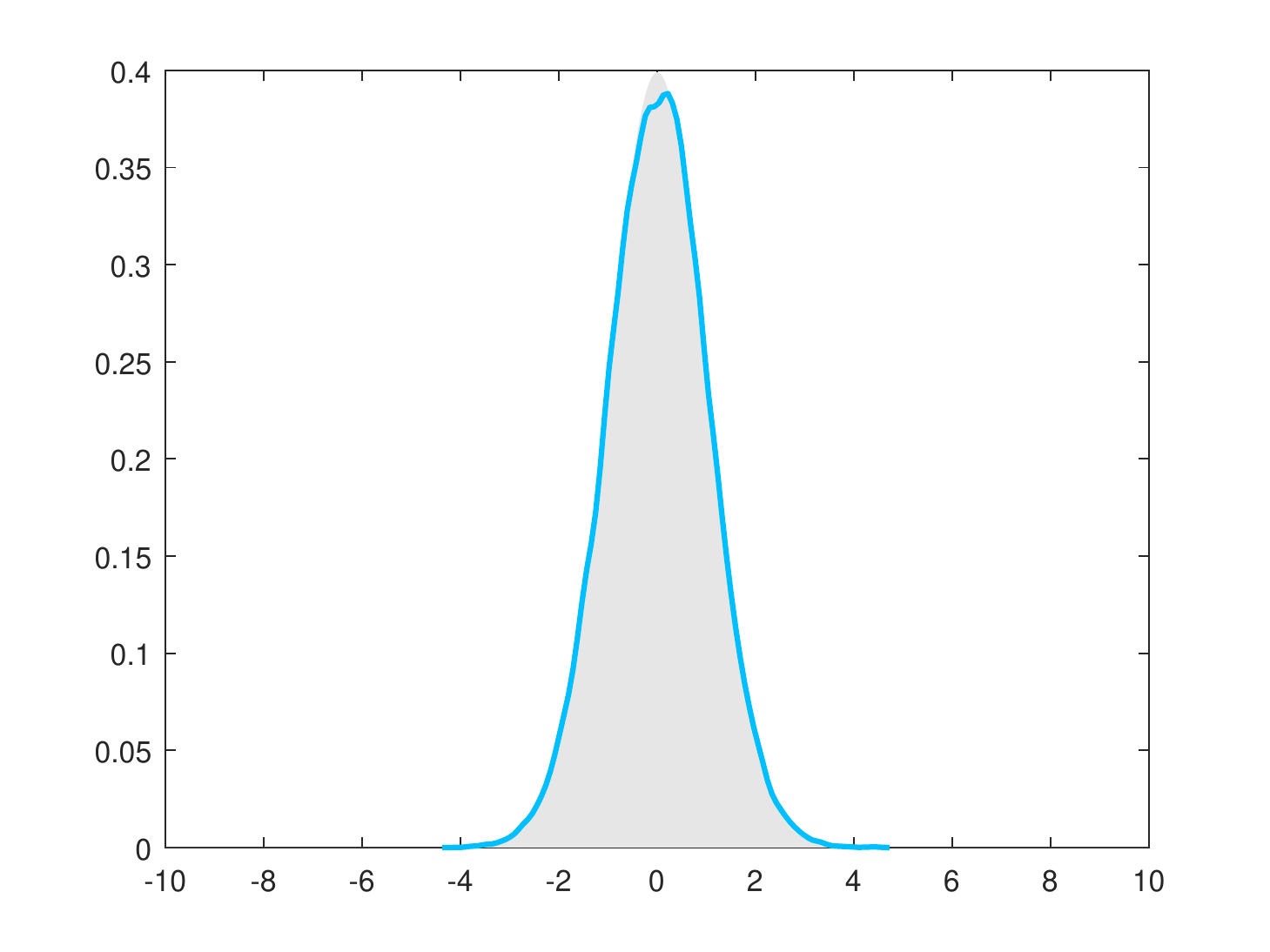}
\caption{
Here $\sigma(x)^2:=\abs{x}^{1.8}$ for $x\in \mathbb{R}$.
The target density (standard normal) is plotted in grey, a kernel density estimator based on $10^5$
steps of the MCwM algorithm with $N=10$ (left), $N=10^2$ (middle)  and 
$N=10^3$ (right) is plotted in blue.
} 
\label{fig: no_trunc_s18}
\end{figure}
\end{center}
However, in Figure~\ref{fig: no_trunc_s22} we consider 
the case where $\sigma(x)^2=\abs{x}^{2.2}$ and the behavior changes dramatically. Here the MCwM algorithm
does not seem to work at all. This motivates a modification
of the MCwM algorithm in terms of restricting the state space to 
the ``essential part'' determined by the Lyapunov condition.
\begin{center}
\begin{figure}[htb]  
\includegraphics[width=4.3cm]{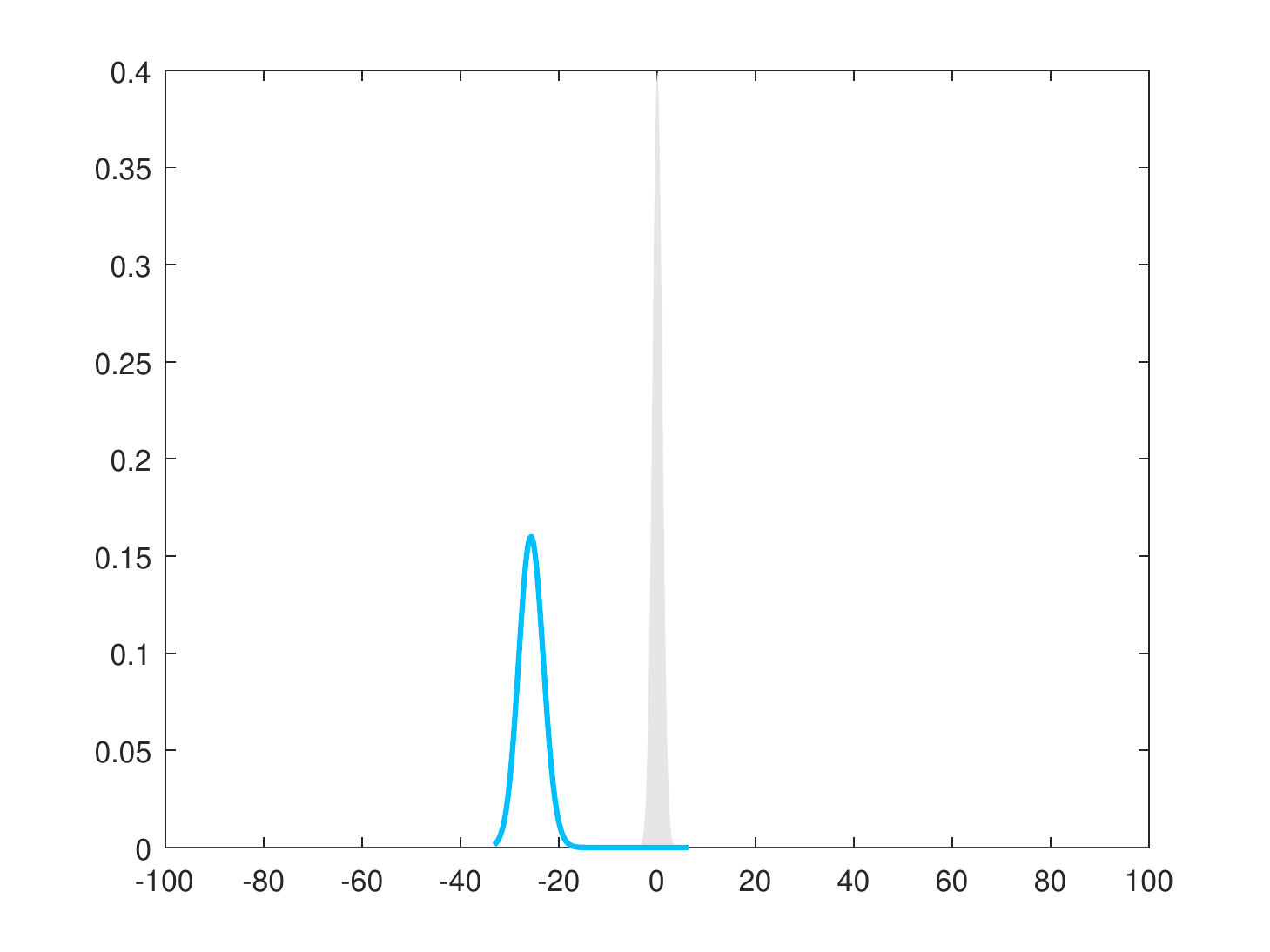}\hspace{-0.5cm}
\includegraphics[width=4.3cm]{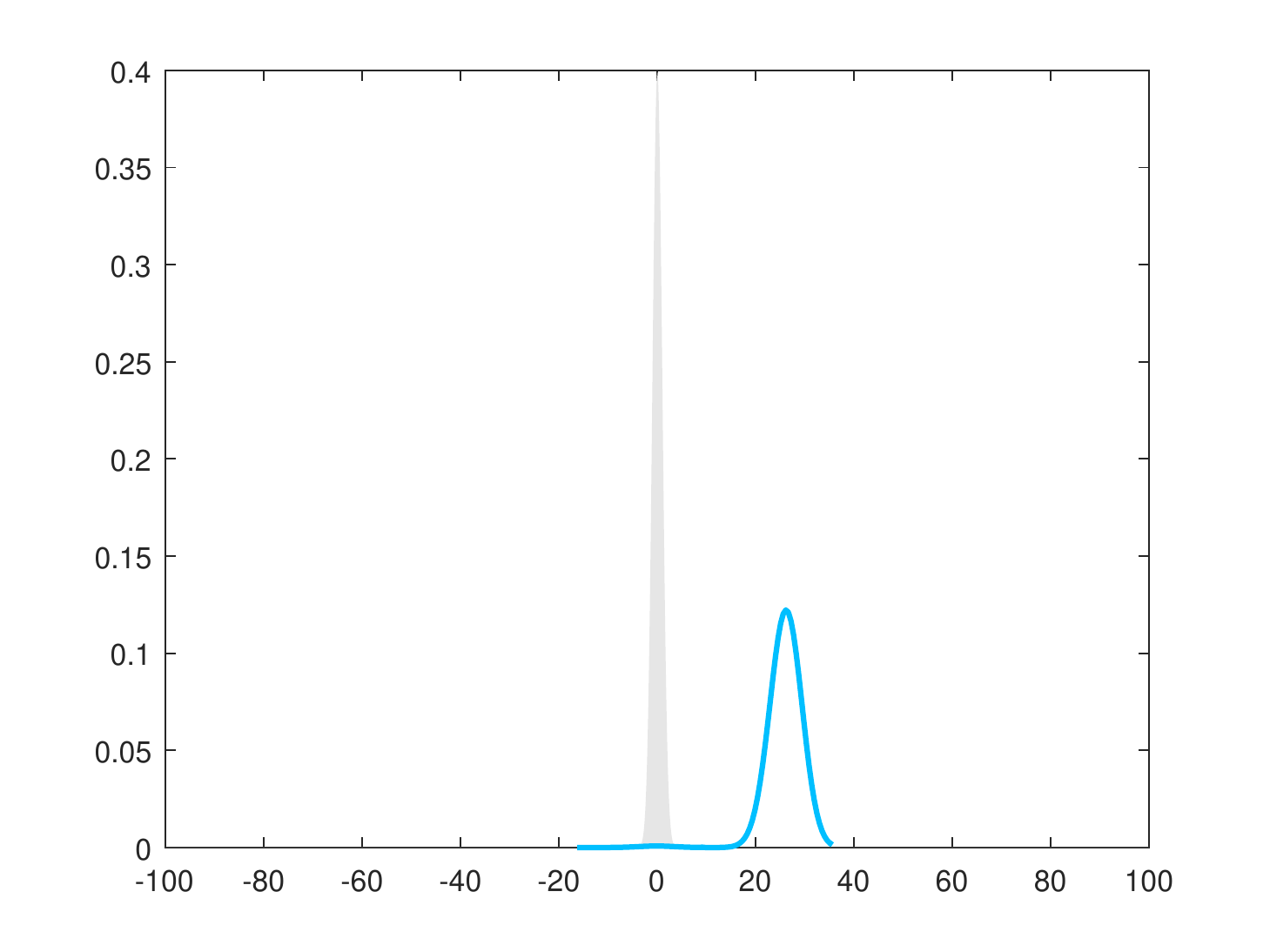}\hspace{-0.5cm}
\includegraphics[width=4.3cm]{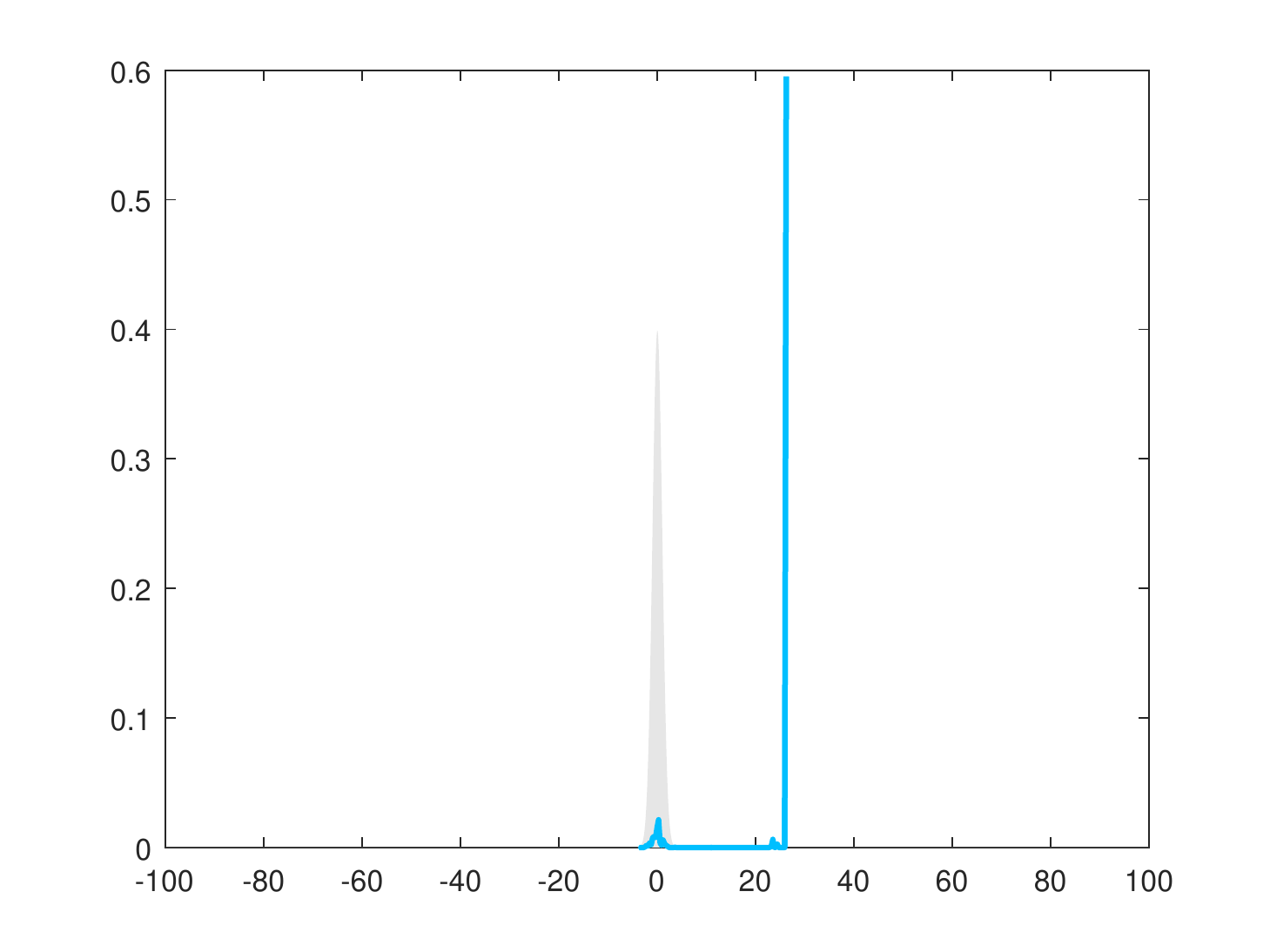}
\caption{
Here $\sigma(x)^2:=\abs{x}^{2.2}$ for $x\in \mathbb{R}$.
The target density (standard normal) is plotted in grey, a kernel density estimator based on $10^5$
steps of the MCwM algorithm with $N=10$ (left), $N=10^2$ (middle)  and 
$N=10^3$ (right) is plotted in blue.
} 
 \label{fig: no_trunc_s22}
\end{figure}
\end{center}

\subsubsection{Restricted MCwM approximation}
\label{sec: rest_mcwm_approx}

With the notation and definition from the previous section we consider
the case where the functions 
$i_{2,k}(x)$ and $s(x)$ are not uniformly bounded. 
Under Assumption~\ref{ass: metro_V_unif} there are two simultaneously used tools 
which help to control the difference of a 
transition of MH and MCwM:
\begin{enumerate}
 \item The Lyapunov condition leads
 to a weight function and eventually to a weighted norm, see Proposition~\ref{prop: accept_err}.
 \item By restricting the MCwM to the ``essential part'' of the state space we
 prevent that the approximating Markov chain deteriorates. Namely, for some $R\geq 1$ we restrict the
 MCwM to $B_R$, see Section~\ref{sec: restr_approx}.
\end{enumerate}
For $x,z\in G$
the {acceptance ratio $\widetilde r$ 
used in Algorithm~\ref{alg: MCwM_doubly_intract}
} is now modified to{
\[
 \mathbf{1}_{B_R}(z) \cdot \widetilde r(x,z,W_N(x),W_N(z))
\]}
which leads to the \emph{restricted MCwM algorithm}:
\begin{alg} \label{alg: MCwM_doubly_intract_restr}
For given $R\geq 1$ and a proposal transition kernel $Q$ a transition from $x$
to $y$ of the restricted MCwM algorithm works as follows.
\begin{enumerate}
 \item Draw $U\sim \text{Unif}[0,1]$ and a proposal $Z\sim Q(x,\cdot)$ independently, 
 call the result $u$ and $z$, respectively.
\item {Calculate $\widetilde r(x,z,W_N(x),W_N(z))$ based on independent samples for $W_N(x)$, $W_N(z)$, which are also independent from previous iterations.}
 \item If {$u< \mathbf{1}_{B_R}(z)\cdot \widetilde r(x,z,W_N(x),W_N(z))$}, then accept the proposal, and return $y:=z$, 
 otherwise reject the proposal and return $y:=x$. 
\end{enumerate}
\end{alg}
Given the current state $x\in G$ and a proposed state $z\in G$ 
the overall acceptance probability is{
\begin{align*}
 a^{(R)}_{N}(x,z)&:=\mathbb{E}\left[ \min\left\{1, \mathbf{1}_{B_R}(z) \cdot \widetilde r(x,z,W_N(x),W_N(z)) \right\} \right] = \mathbf{1}_{B_R}(z)\cdot a_N(x,z),
\end{align*}}
which leads to
the corresponding transition kernel of the form $M_{a^{(R)}_{N}}$, see \eqref{eq: MH_type}.
By using Theorem~\ref{thm: restricted_approx} and Proposition~\ref{prop: accept_err}
we obtain the following estimate. 
\begin{cor}  \label{cor: doub_restr_MCwM}
 Let Assumption~\ref{ass: metro_V_unif} be satisfied, i.e., $M_a$ is $V$-uniformly ergodic and
 the function $V$ as well as the constants $\alpha, C, \delta$ and $L$ are determined.
 For $\beta\in (0,1)$ and $R\geq 1$ let
 \begin{align*}
  B_R & := \left\{ x\in G\mid V(x)\leq R \right\},\\
  D_R & := 12\cdot L \norm{i_{2,k}\cdot\mathbf{1}_{B_R}}_{\infty,V^{1-\beta}} \norm{s\cdot\mathbf{1}_{B_R}}_{\infty,V^{\beta}}
	  <\infty.
 \end{align*}
 Let $m_0$ be a distribution on $B_R$ and $\kappa := \max\{ m_0(V),L/(1-\delta)\}$.
 Then, for
 \begin{equation} \label{eq: N_restr_doubly_intr}
  N\geq \max\left\{k,4 \left(\frac{R\cdot D_R}{1-\delta}\right)^2 \right\}
 \end{equation}
and $R\geq \exp(1)$ we have
 \[
  \norm{m_n-m^{(R)}_{n,N}}_{\rm tv} \leq \frac{33C(L+1) \kappa}{1-\alpha} \cdot
  \frac{\log R}{R},
 \]
 where $m^{(R)}_{n,N}:= m_0 M^n_{a_N^{(R)}}$ and $m_n:= m_0 M^n_{a}$ are the 
 distributions of the MH and restricted MCwM algorithm after $n$-steps.
\end{cor}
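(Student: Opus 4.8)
The plan is to apply Theorem~\ref{thm: restricted_approx} with the ideal kernel $P = M_a$ and the perturbed kernel $\wt P = M_{a_N^{(R)}}$. First I would observe that $M_{a_N^{(R)}}$ is genuinely restricted to $B_R$: since $a_N^{(R)}(x,z) = \mathbf{1}_{B_R}(z)\, a_N(x,z)$ vanishes whenever $z \notin B_R$, any proposal outside $B_R$ is rejected, so starting from $m_0$ (supported on $B_R$) the perturbed chain never leaves $B_R$. Assumption~\ref{ass: metro_V_unif} supplies exactly the $V$-uniform ergodicity and the Lyapunov condition $M_a V \le \delta V + L$ with $L \ge 1$ demanded by the theorem, and $\kappa$ is defined as there. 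Hence everything reduces to bounding $\Delta(R)$ and checking the smallness condition $R\cdot\Delta(R) \le (1-\delta)/2$.

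The key structural observation is that both kernels entering $\Delta(R)$ are Metropolis--Hastings kernels with the common proposal $Q$ that differ only through their acceptance functions. Writing out the restriction $(M_a)_R$ from Section~\ref{sec: restr_approx} for $x \in B_R$ shows that it coincides with the MH kernel $M_b$ built from $b(x,z) = \mathbf{1}_{B_R}(z)\, a(x,z)$, while $M_{a_N^{(R)}} = M_c$ with $c(x,z) = \mathbf{1}_{B_R}(z)\, a_N(x,z)$. This places us in the setting of Proposition~\ref{prop: accept_err} with $B = B_R$, so $\Delta(R)$ is controlled by the discrepancy of the two acceptance functions on $B_R$.

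Next I would feed Lemma~\ref{lem: doubly_approx_err} into Proposition~\ref{prop: accept_err}. Since $\abs{b(x,y) - c(x,y)} = \mathbf{1}_{B_R}(y)\,\abs{a(x,y) - a_N(x,y)}$, the lemma yields, for $N \ge k$, the bound $\abs{b(x,y) - c(x,y)} \le \mathbf{1}_{B_R}(y)\,(s(x) + s(y))\, b(x,y)\, i_{2,k}(y)/\sqrt{N}$, which is precisely the second alternative in \eqref{eq: accept_err1case} with $\eta = s$, $\xi = i_{2,k}/\sqrt{N}$ and $B = B_R$. The remaining hypothesis is the growth bound $\sup_{x \in G} M_b V(x)/V(x) \le T$; here one takes $T = 2L$, using that restriction can only decrease the drift, since a jump to a state $z \in B_R^c$ with $V(z) > R \ge V(x)$ is replaced by staying at $x$. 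This gives $M_b V(x) \le M_a V(x) \le \delta V(x) + L \le 2L\,V(x)$ for $x \in B_R$ and $M_b V(x) \le V(x)$ for $x \notin B_R$. Proposition~\ref{prop: accept_err} then yields $\Delta(R) \le 8L\,\norm{s\,\mathbf{1}_{B_R}}_{\infty,V^\beta}\,\norm{i_{2,k}\,\mathbf{1}_{B_R}}_{\infty,V^{1-\beta}}/\sqrt{N} \le D_R/\sqrt{N}$.

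Finally, the lower bound on $N$ closes the argument: $N \ge 4(R D_R/(1-\delta))^2$ forces $R\cdot\Delta(R) \le R D_R/\sqrt{N} \le (1-\delta)/2$, while $N \ge k$ legitimizes the use of Lemma~\ref{lem: doubly_approx_err}; Theorem~\ref{thm: restricted_approx} then delivers the stated estimate verbatim. I expect the main obstacle to be the bookkeeping identification of $(M_a)_R$ with the modified-acceptance kernel $M_b$ together with the verification that this restricted kernel still satisfies the drift bound $M_b V \le 2L\,V$; once these are in place, the combination of the two earlier results is essentially mechanical.
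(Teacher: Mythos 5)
Your proof is correct and follows essentially the same route as the paper's: identify the restriction $(M_a)_R$ with $M_{a^{(R)}}$, $a^{(R)}(x,z)=\mathbf{1}_{B_R}(z)a(x,z)$, feed Lemma~\ref{lem: doubly_approx_err} into Proposition~\ref{prop: accept_err} to get $\Delta(R)\le D_R/\sqrt{N}$, and invoke Theorem~\ref{thm: restricted_approx} under the condition \eqref{eq: N_restr_doubly_intr}. The only cosmetic differences are that the paper formally replaces $M_{a_N^{(R)}}$ outside $B_R$ by a point mass $\delta_{x_0}$ so that the hypothesis $\wt P(x,B_R)=1$ holds literally for all $x\in G$ (your observation that the chain started in $B_R$ never leaves it amounts to the same thing), and that the paper bounds the drift ratio of $M_{a^{(R)}}$ by $3L$ rather than your slightly sharper $2L$ --- both fit under the constant $12L$ in $D_R$.
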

\begin{proof}
 We apply Theorem~\ref{thm: restricted_approx} with $P(x,\cdot)=M_a(x,\cdot)$ and
 \[
  \widetilde P(x,\cdot)
  = \mathbf{1}_{B_{R}}(x)\, M_{a_N^{(R)}}(x,\cdot) + \mathbf{1}_{B_{R}^c}(x) \delta_{x_0}(\cdot), 
  \quad x\in G,
 \]
 for some $x_0\in B_{R}$. Note that $\wt P(x,B_{R})=1$ for any $x\in G$. Further $\wt P$
 and $M_{a_N^{(R)}}$ coincide on $B_{R}$,
thus we also have 
 $\wt P^n=M^n_{a_N^{(R)}}$ on $B_{R}$ for $n\in \mathbb{N}$.
 Observe also that the restriction of $P$ to $B_{R}$, denoted by $P_{R}$, satisfies $P_{R} = M_{a^{({R})}}$ 
 with $a^{(R)}(x,z) := \mathbf{1}_{B_R}(z)\, a(x,z)$.
 Hence
 \begin{align*}
  \Delta({R}) 
 & = \sup_{x\in B_{{R}}} \frac{\norm{M_{a^{(R)}}(x,\cdot)-M_{a_N^{(R)}}(x,\cdot)}_{\rm tv}}{V(x)}.
 \end{align*}
Moreover, we have by Lemma~\ref{lem: doubly_approx_err} that
 \begin{align*}
    \abs{a^{(R)}(x,z)-a_N^{(R)}(x,z)}
  = & \mathbf{1}_{B_R}(z) \abs{a(x,z)-a_N(x,z)}\\ 
  \leq & \mathbf{1}_{B_R}(z) \cdot a(x,z) \frac{1}{\sqrt{N}}\, i_{2,k}(z)(s(x)+s(z))\\
  = & a^{(R)}(x,z) \frac{1}{\sqrt{N}}\, i_{2,k}(z)(s(x)+s(z)).
 \end{align*}
 With Proposition~\ref{prop: accept_err} and
 \[
 \sup_{x\in G}\frac{M_{a^{(R)}}V(x)}{V(x)} \leq \sup_{x\in G}\frac{M_{a}V(x)}{V(x)}+1
 \underset{\rm Ass.~\ref{ass: metro_V_unif}}{\leq} 3L,
 \]
we have that
 $\Delta(R)\leq D_R/\sqrt{N}$. 
Then, by $N\geq 4 (RD_R/(1-\delta))^2$ we obtain
 \[
  R\cdot\Delta(R) \leq \frac{1-\delta}{2}
 \]
 such that all conditions of Theorem~\ref{thm: restricted_approx} are verified and the stated estimate follows.
\end{proof}
\begin{rem}
  The estimate depends crucially on the sample size $N$ as well as on the parameter $R$. 
  If the influence of $R$ in $D_R$ is explicitly known, then one can choose $R$ depending on $N$ in such away
  that the conditions of the corollary are satisfied and one eventually obtains an upper bound on the 
  total variation distance of the difference between the distributions depending only on $N$ and not on $R$ anymore. {For example, if we additionally assume that the function $g\colon(0,\infty)\to (0,\infty)$ given by $g(R) = R\cdot D_R$ is invertible, then for $N\geq k$ and the choice $R:= g^{-1}\left((1-\delta)\sqrt{N}/2\right)$ we have 
\[
\norm{m_n-m^{(R)}_{n,N}}_{\rm tv} \leq \frac{33C(L+1) \kappa}{1-\alpha} \cdot
\frac{\log\left(g^{-1}\left((1-\delta)\sqrt{N}/2\right)\right)}{g^{-1}\left((1-\delta)\sqrt{N}/2\right)}.
\]
 Thus, depending on whether and how fast $g^{-1}\left((1-\delta)\sqrt{N}/2\right) \to \infty$ for $N\to \infty$ determines the convergence of the upper bound of $\norm{m_n-m^{(R)}_{n,N}}_{\rm tv}$ to zero.} 
\end{rem}
\noindent
{\bf Log-normal example II.}
We continue with the log-normal example. In this setting we have
\begin{align*}
 B_R & = \{ x\in \mathbb{R} \mid  \abs{x}\leq2 \sqrt{ \log R}\},\\
 \norm{i_{2,k}\cdot \mathbf{1}_{B_R}}_{\infty, V^{1-\beta}} 
 & \leq \sup_{\abs{x}\leq 2 \sqrt{\log R}} \exp\left( \left(\frac{1}{2}+\frac{1}{k}\right)\sigma(x)^2 - 
	\frac{1-\beta}{4} x^2 \right),\\
 \norm{s\cdot \mathbf{1}_{B_R}}_{\infty, V^{\beta}} 
 & \leq \sup_{\abs{x}\leq 2 \sqrt{\log R}} \exp\left(\sigma(x)^2/2 - \beta x^2/4 \right).
\end{align*}
{Thus, $D_R$ is uniformly bounded in $R$ for $\sigma(x)^{2}\propto|x|^{q}$ with $q<2$ and not uniformly bounded for $q>2$. }
As in the numerical experiments in Figure~\ref{fig: no_trunc_s18} and Figure~\ref{fig: no_trunc_s22} 
let us consider the cases $\sigma(x)^2 = \abs{x}^{1.8}$ and $\sigma(x)^2 = \abs{x}^{2.2}$.
In Figure~\ref{fig: trunc_s18} we compare the normal target density
with a kernel density estimator based on the restricted MCwM on $B_R=[-10,10]$
and observe essentially the same reasonable behavior as in Figure~\ref{fig: no_trunc_s18}. 
In Figure~\ref{fig: trunc_s22} we consider the same scenario and observe that
the restriction indeed stabilizes. In contrast to Figure~\ref{fig: no_trunc_s22}, 
convergence to the true target distribution is visible but, in line with the theory, slower than for $\sigma(x)^2 = \abs{x}^{1.8}$. 
\begin{center}
\begin{figure}[htb] 
\includegraphics[width=4.3cm]{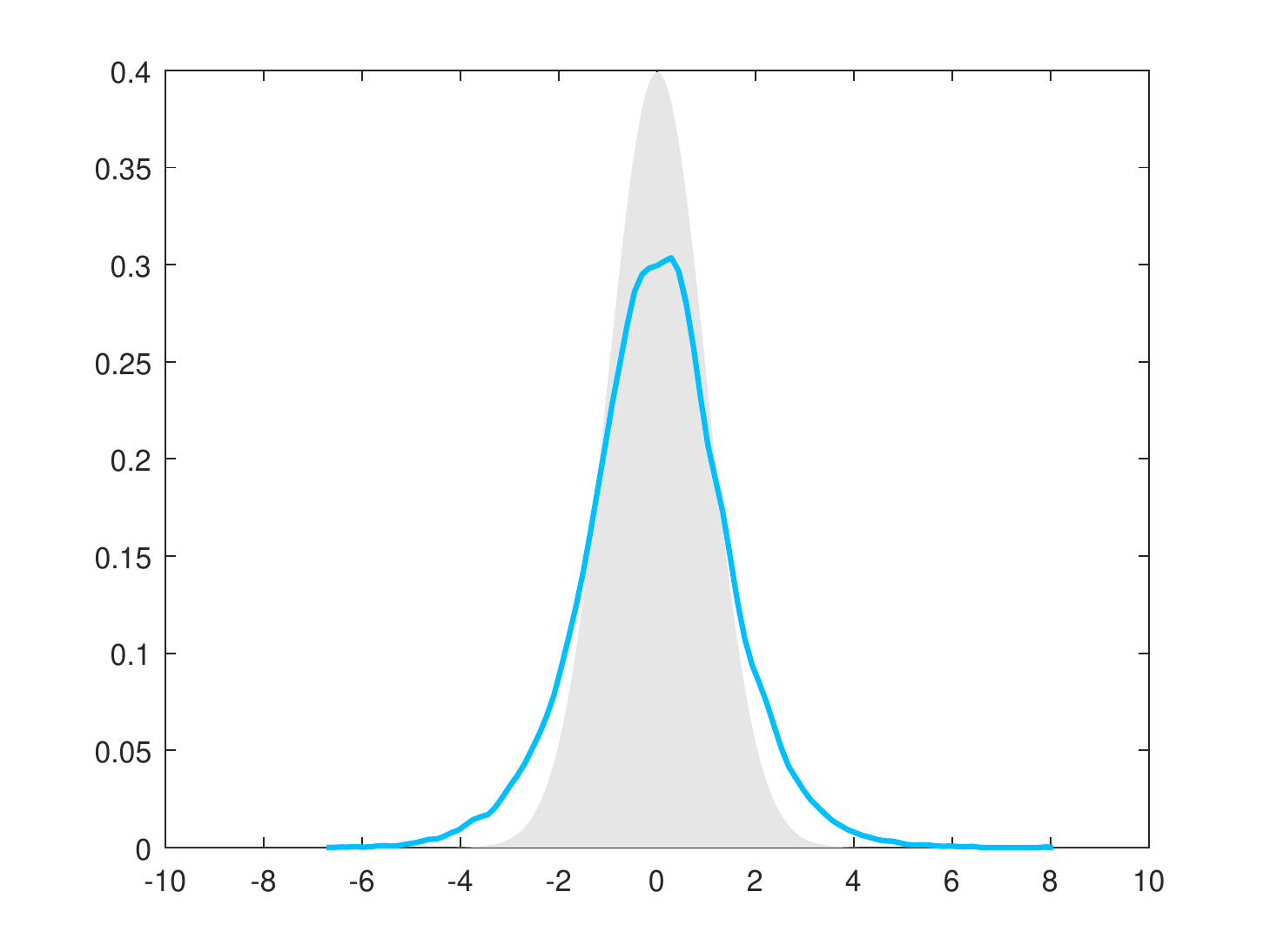}\hspace{-0.5cm}
\includegraphics[width=4.3cm]{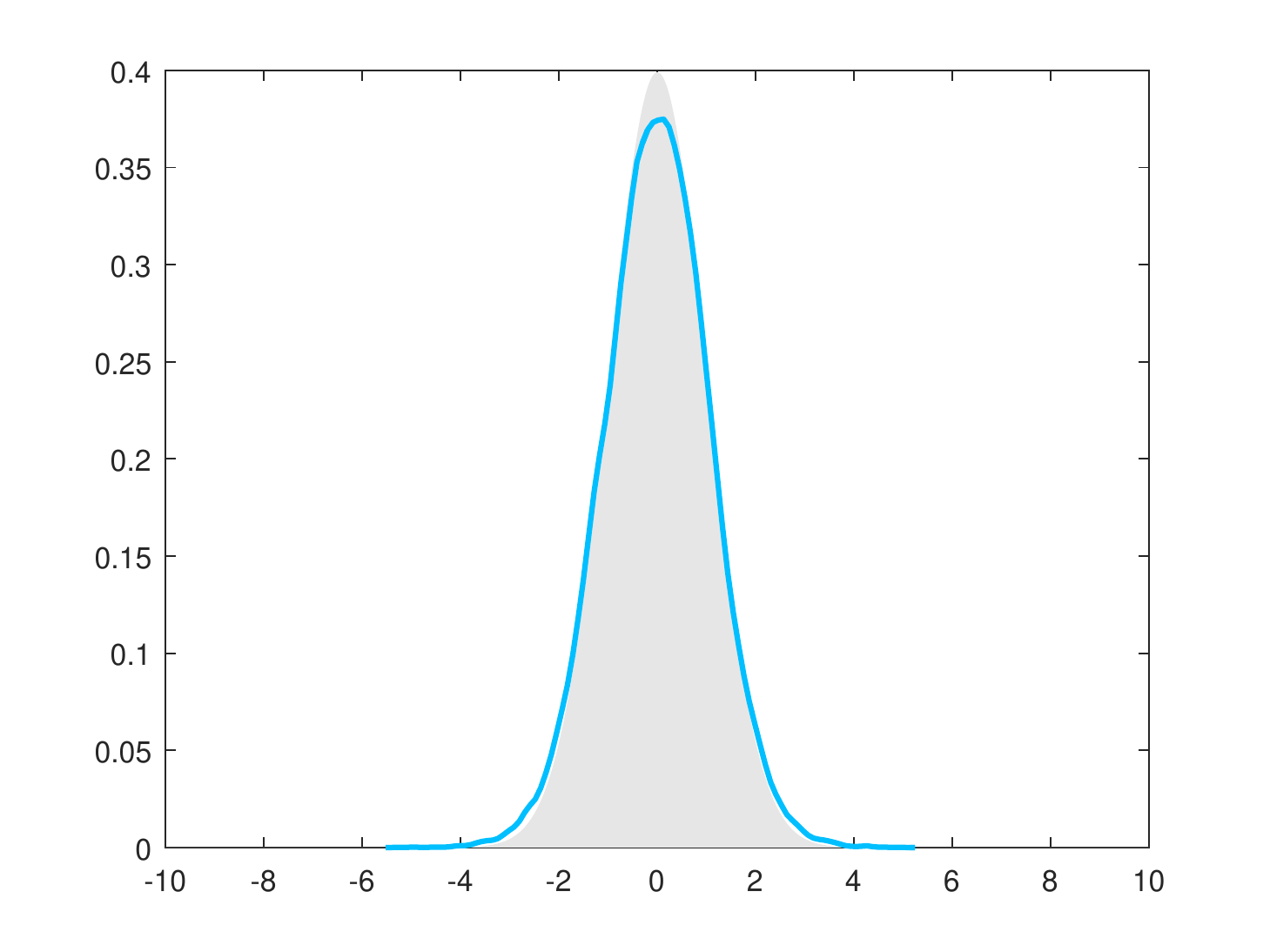}\hspace{-0.5cm}
\includegraphics[width=4.3cm]{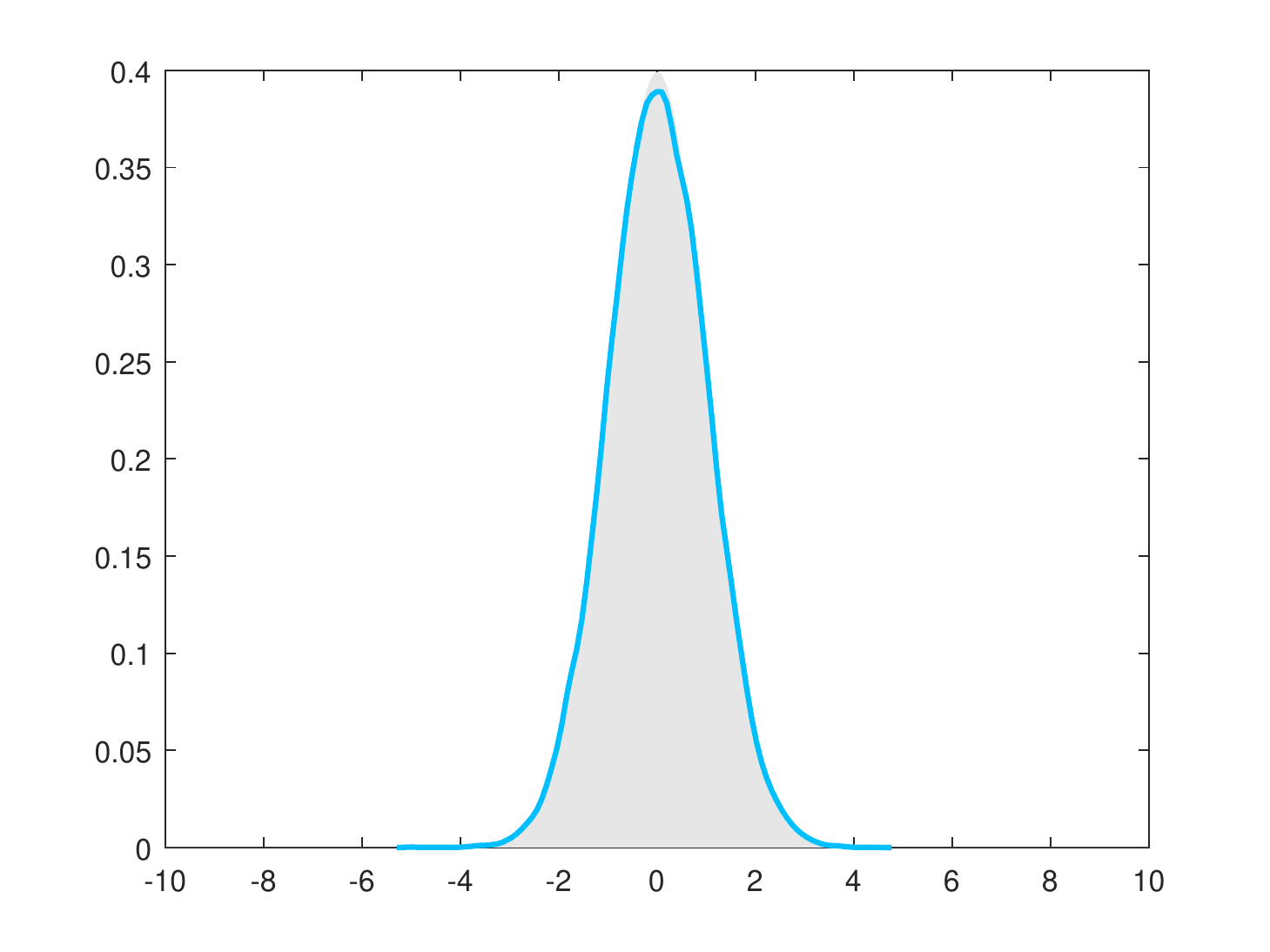}
\caption{
Here $\sigma(x)^2:=\abs{x}^{1.8}$ for $x\in \mathbb{R}$ and $B_R=[-10,10]$.
The target density (standard normal) is plotted in grey, a kernel density estimator based on $10^5$
steps of the MCwM algorithm with $N=10$ (left), $N=10^2$ (middle)  and 
$N=10^3$ (right) is plotted in blue.
} 
\label{fig: trunc_s18}
\end{figure}
\end{center}
\begin{center}
\begin{figure}[htb] 
\includegraphics[width=4.3cm]{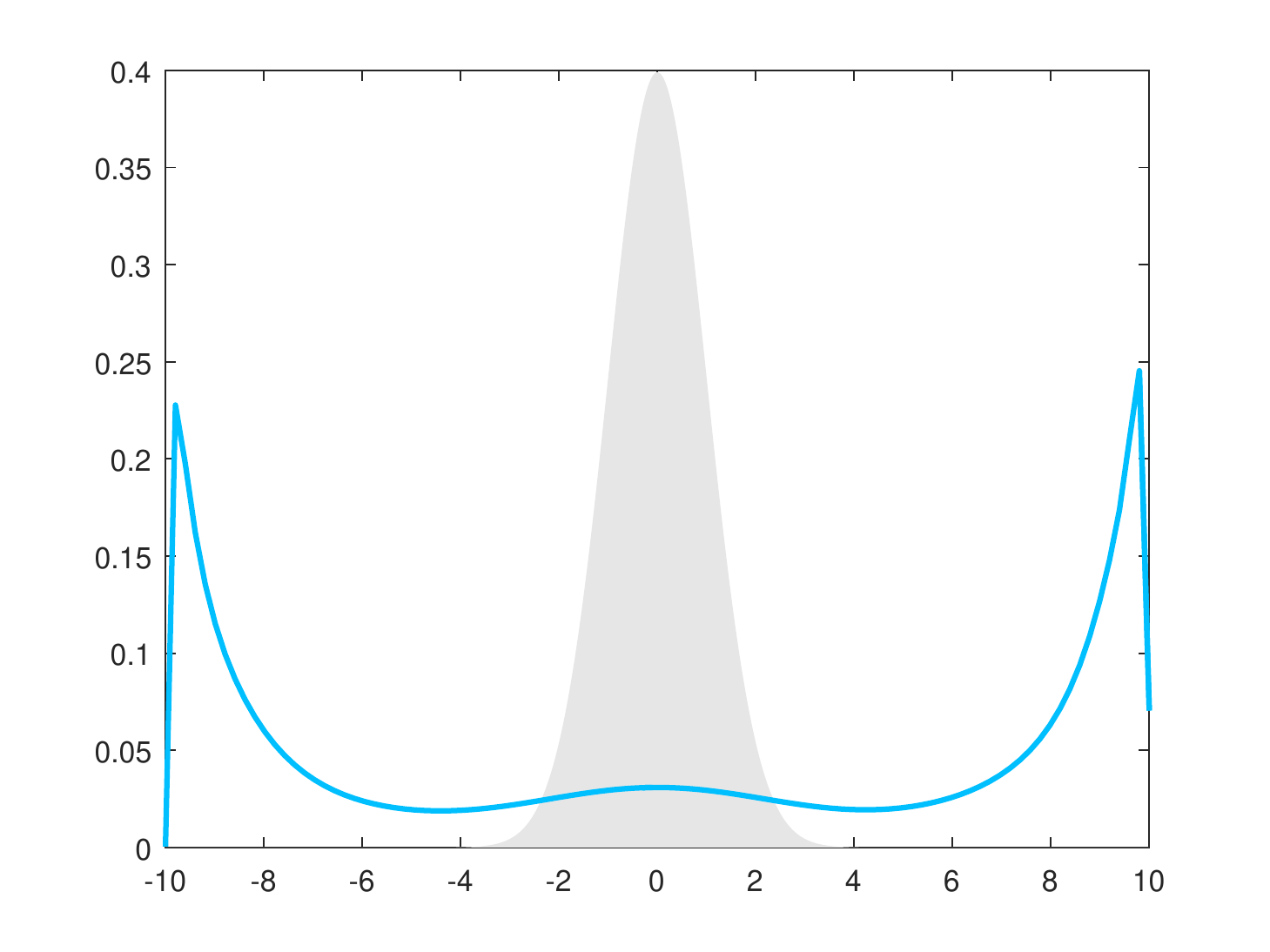}\hspace{-0.5cm}
\includegraphics[width=4.3cm]{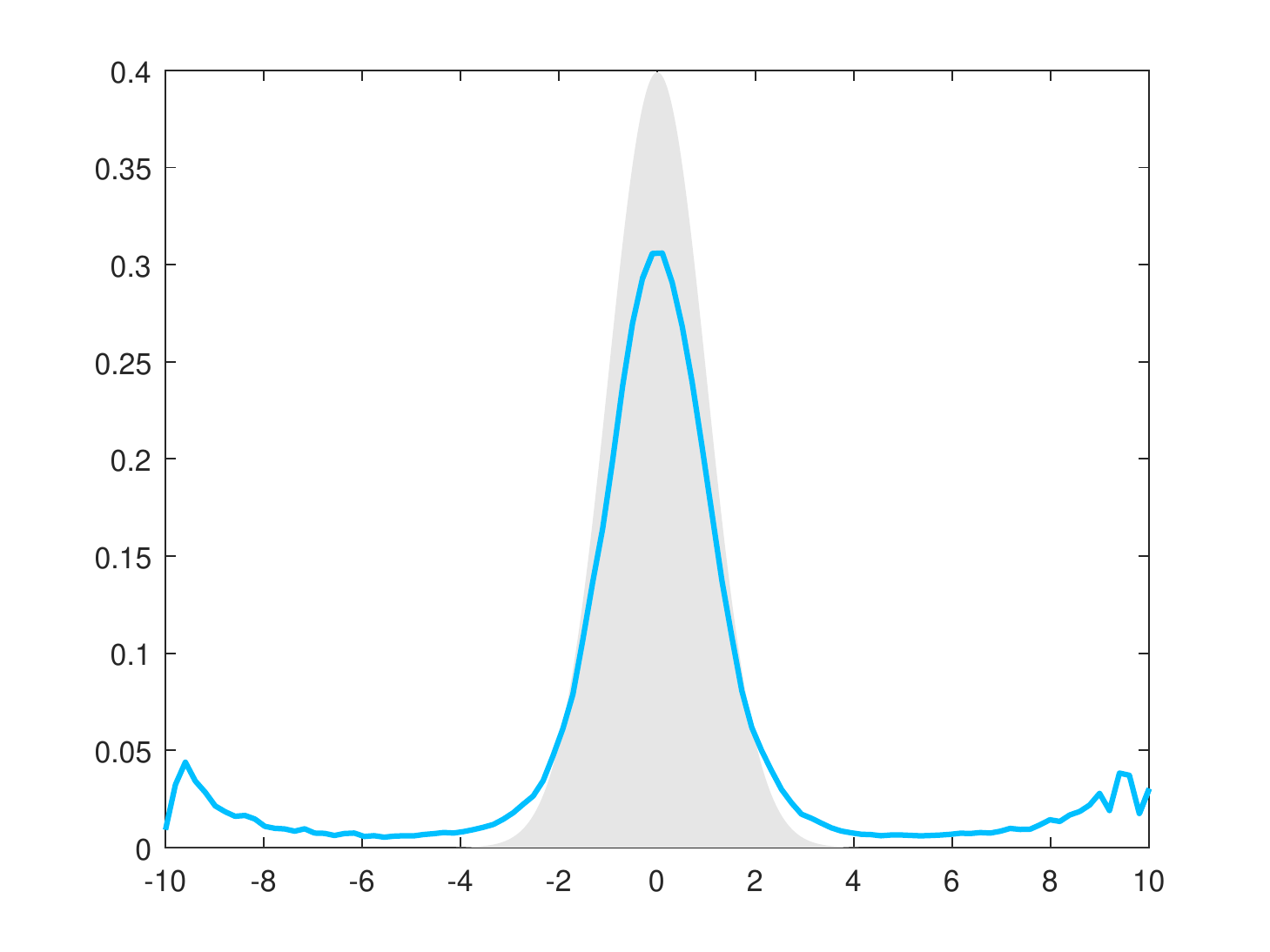}\hspace{-0.5cm}
\includegraphics[width=4.3cm]{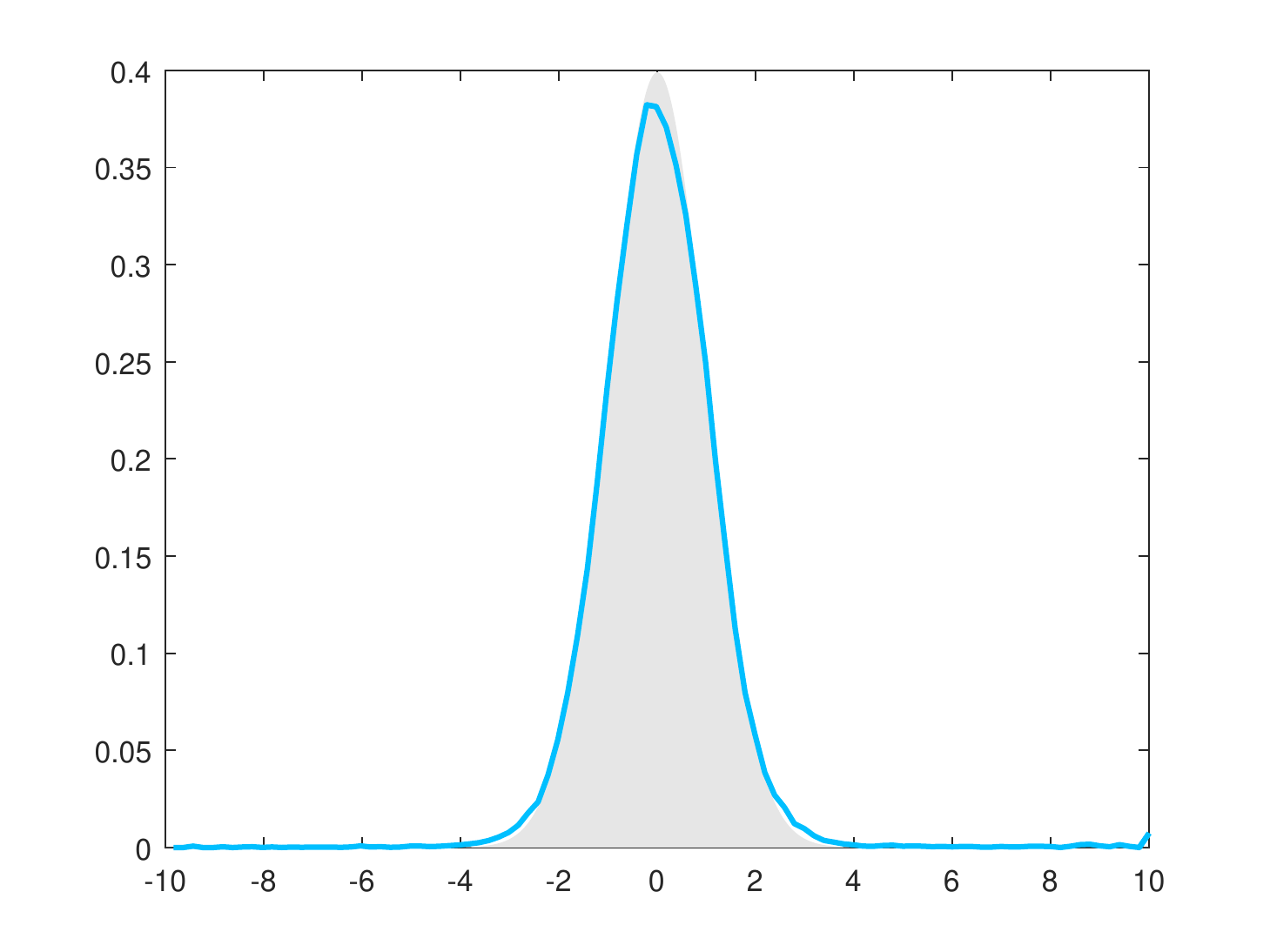}
\caption{
Here $\sigma(x)^2:=\abs{x}^{2.2}$ for $x\in \mathbb{R}$ and $B_R=[-10,10]$.
The target density (standard normal) is plotted in grey, a kernel density estimator based on $10^5$
steps of the MCwM algorithm with $N=10$ (left), $N=10^2$ (middle)  and 
$N=10^3$ (right) is plotted in blue.
} 
\label{fig: trunc_s22}
\end{figure}
\end{center}
Now we apply Corollary~\ref{cor: doub_restr_MCwM} in both cases {and note that by similar arguments as below one can also treat $\sigma(x)^2\propto\abs{x}^{q}$ with, respectively, $q<2$ or $q>2$.}

{\bf 1. Case $\sigma(x)^2=\abs{x}^{1.8}$.} For $k=100$ and $\beta=1/2$ one can easily see 
that $\norm{i_{2,100}\cdot \mathbf{1}_{B_R}}_{\infty,V^{1/2}}$ and $\norm{s\cdot \mathbf{1}_{B_R}}_{\infty,V^{1/2}}$
is bounded by $6000$, independent of $R$. Hence there is a constant $D\geq 1$ so that $D_R \leq D$. 
With this knowledge we choose
$R=\frac{(1-\delta)}{\sqrt{2}D} \sqrt{N}$ such that for 
$N\geq\max\left\{100,\frac{2 \exp(2) D^2}{(1-\delta)^2}\right\}$ condition 
\eqref{eq: N_restr_doubly_intr} and $R\geq\exp(1)$ is satisfied.
Then, Corollary~\ref{cor: doub_restr_MCwM} gives the existence of a constant $\widetilde C>0$, so that
\[
  \norm{m_n-m^{(R)}_{n,N}}_{\rm tv} \leq \widetilde C\;
  \frac{\log N}{\sqrt{N}}
\]
for any initial distribution $m_0$ on $B_R$. 

{\bf 2. Case $\sigma(x)^2=\abs{x}^{2.2}$.} 
For $k=100$ and $\beta=1/2$ we obtain
\begin{align*}
 \norm{i_{2,100}\cdot \mathbf{1}_{B_R}}_{\infty,V^{1/2}} 
 & \leq \exp\left( 2.5 \left( \log R\right)^{11/10}  \right),\\
 \norm{s\cdot \mathbf{1}_{B_R}}_{\infty,V^{1/2}} 
 & \leq \exp\left( 2.5 \left( \log R\right)^{11/10}  \right).
\end{align*}
Hence 
$
 D_R \leq 12 L \exp\left( 5 \left( \log R\right)^{11/10}  \right).
$
Eventually, for 
\[
N\geq \max\left\{ 100, \frac{24^2 \exp(2\cdot 6^{11/10}) L^2}{(1-\delta)^2} \right\}
\]
we have with
$R= \exp\left(  \frac{1}{6} \left[ \log\left( \frac{\sqrt{N}(1-\delta)}{24L} \right) \right]^{10/11} \right)$
that $R\geq \exp(1)$ and \eqref{eq: N_restr_doubly_intr} is satisfied.
Then, with
$ \widetilde C_1 := \frac{33C(L+1) \kappa}{1-\alpha}$,
$  \widetilde C_2 := \sqrt{\frac{1-\delta}{24L}}$
and Corollary~\ref{cor: doub_restr_MCwM} we have
\[
 \norm{m_n-m^{(R)}_{n,N}}_{\rm tv} 
 \leq \frac{\widetilde C_1  \cdot \frac{1}{6\cdot 2^{10/11}}\left[ \log\left( \widetilde C_2 N\right) \right]^{10/11}}
 {\exp\left(\frac{1}{6\cdot 2^{10/11}}\left[ \log\left( \widetilde C_2 N\right) \right]^{10/11} \right)}
 \leq \frac{\widetilde{C}_1 (k+1)!}{\left[ \log\left( \widetilde C_2 N \right)\right]^{10k/11}},\quad   
\]
for any initial distribution $m_0$ on $B_R$ and all $k\in\mathbb{N}$. Here
the last inequality follows by the fact that
$\exp(x)\geq \frac{x^{k+1}}{(k+1)!}$ for any $x\geq 0$ and $k\in\mathbb{N}$. 

To summarize, by suitably choosing $N$ and $R$ (possibly depending on $N$) sufficiently large 
the difference between the distributions of the restricted MCwM and the MH algorithms after $n$-steps can be made arbitrarily small.

\subsection{Latent variables}\label{sec: latvar}

In this section we consider $\pi_u$ of the form \eqref{eq: latent}. 
Here, as for doubly intractable distributions, 
the idea is to substitute $\pi_u(x)$ in the acceptance probability
of the MH algorithm by a Monte Carlo estimate
\[
 \widehat \rho_N(x) = \frac{1}{N} \sum_{i=1}^N 
  \widebar{\rho}(x,Y^{(x)}_i)
\]
where we assume that we have access to an iid 
sequence of random variables $(Y^{(x)}_i)_{1\leq i\leq N}$ 
where each $Y^{(x)}_i$
has distribution $r_x$. Define
a function $W_N\colon G\to \mathbb{R}$
by $W_{N}(x) := \widehat{\rho}_N(x)/\pi_u(x)$ 
and note that $\mathbb{E}[W_{N}(x)]=1$. 
Then, the acceptance probability given $W_N(x)$, $W_N(z)$
modifies to{
\[
 a_N (x,z) :=\mathbb{E} \left[ \min \left\{1,r(x,z)\cdot\frac{W_N(z)}{W_N(x)} \right\} \right]
\]
where} $W_N(x)$, $W_N(z)$ are assumed to be independent random variables.
Note that all the objects which depend on {$a_N$, such
as $M_{a_N}, a_{N}^{(R)}, M_{a_{N}^{(R)}}$}, 
that appear in this section are defined just as in Section~\ref{sec: doubly_intractable}. The only difference
is that the order of the {variables $W_N(x)$ and $W_N(z)$ in the ratio $\widetilde r$ at  \eqref{eq: doubl_intr_a_N} has been reversed}.
Thus, this leads to a MCwM algorithm as stated in Algorithm~\ref{alg: MCwM_doubly_intract}, where
the transition kernel is given by $M_{a_N}$.

Also as in Section~\ref{sec: doubly_intractable} we define 
$s(x) := \left( \mathbb{E}\abs{W_1(x)-1}^2 \right)^{1/2}$ 
and $ i_{p,N}(x) := (\mathbb{E}W_N(x)^{-p})^{1/p}$ for all $x\in G$ and $p>0$.
With those quantities we obtain the following estimate of the difference of the acceptance
probabilities of $M_a$ and $M_{a_N}$ proved in \ref{sec: proof_Sec4}.
\begin{lem} 
\label{lem: diff_acc_prob1}
Assume that there exists $k\in \mathbb{N}$ 
such that $i_{2,k}(x)$ and $s(x)$ are finite for all $x \in G$. 
Then, for all $x,z \in G$ and $N\geq k$ we have 
\begin{align}  \label{eq: MCWM_relative}
 \abs{a(x,z)- a_N(x,z) }
& \leq  a(x,z)\,
 \frac{1}{\sqrt{N}} \, i_{2,k}(x) (s(x)+s(z)).
\end{align}
\end{lem}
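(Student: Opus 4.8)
The plan is to mirror the argument behind Lemma~\ref{lem: doubly_approx_err} with the roles of $x$ and $z$ interchanged: the latent-variable acceptance ratio multiplies $r(x,z)$ by $W_N(z)/W_N(x)$ rather than by $W_N(x)/W_N(z)$, so now $W_N(x)$ sits in the denominator and its inverse moment $i_{2,k}(x)$ governs the bound. Writing $\phi(t):=\min\{1,t\}$ and $W:=W_N(z)/W_N(x)$, the quantity $a(x,z)=\phi(r(x,z))$ is deterministic while $a_N(x,z)=\mathbb{E}[\phi(r(x,z)\,W)]$, so that $\abs{\mathbb{E}[\,\cdot\,]}\le\mathbb{E}[\abs{\,\cdot\,}]$ gives
\[
 \abs{a(x,z)-a_N(x,z)} \le \mathbb{E}\bigl[\,\abs{\phi(r(x,z))-\phi(r(x,z)\,W)}\,\bigr].
\]

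The step I expect to be the main obstacle is the sharp multiplicative Lipschitz estimate
\[
 \abs{\phi(r)-\phi(rw)} \le \phi(r)\,\abs{w-1}, \qquad r,w\ge 0,
\]
which is what produces the factor $a(x,z)=\phi(r(x,z))$ out front; it is strictly stronger than the plain $1$-Lipschitz bound $\abs{\phi(r)-\phi(rw)}\le r\abs{w-1}$ exactly in the regime $r>1$, where $\phi(r)=1<r$. I would establish it by a short case analysis over $r\le 1$ versus $r>1$ and $rw\le 1$ versus $rw>1$. Together with the previous display this reduces the whole claim to bounding the single moment $\mathbb{E}\bigl[\,\abs{W_N(z)/W_N(x)-1}\,\bigr]$.

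For that moment I would use the triangle-inequality split
\[
 \abs{\frac{W_N(z)}{W_N(x)}-1} \le \frac{\abs{W_N(z)-1}}{W_N(x)} + \frac{\abs{W_N(x)-1}}{W_N(x)},
\]
take expectations, and apply Cauchy--Schwarz to each term, bounding it by $\bigl(\mathbb{E}\abs{W_N(\cdot)-1}^2\bigr)^{1/2}\bigl(\mathbb{E}W_N(x)^{-2}\bigr)^{1/2}$. By \eqref{eq: MSE} the first factors equal $s(z)/\sqrt{N}$ and $s(x)/\sqrt{N}$, while the second factor is $i_{2,N}(x)$, giving $\mathbb{E}[\abs{W_N(z)/W_N(x)-1}]\le N^{-1/2}\,i_{2,N}(x)\,(s(x)+s(z))$. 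The finiteness hypotheses on $i_{2,k}$ and $s$ make every application of Cauchy--Schwarz legitimate. It remains to replace $i_{2,N}(x)$ by $i_{2,k}(x)$; this follows from the monotonicity of the inverse moments in $N$ provided by Lemma~\ref{lemI} together with the hypothesis $N\ge k$. Chaining the three estimates yields \eqref{eq: MCWM_relative}.
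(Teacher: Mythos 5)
Your proof is correct and follows essentially the same route as the paper: extract the factor $a(x,z)$, reduce to bounding $\mathbb{E}\abs{W_N(z)/W_N(x)-1}$ via Cauchy--Schwarz against the inverse second moment of $W_N(x)$, and invoke Lemma~\ref{lemI} to pass from $i_{2,N}$ to $i_{2,k}$. The only (cosmetic) difference is that your pointwise inequality $\abs{\phi(r)-\phi(rw)}\le\phi(r)\abs{w-1}$ replaces the paper's sandwich bounds on $a_N$ and the ensuing case analysis on the sign of $a-a_N$, and you apply the triangle inequality before rather than after Cauchy--Schwarz; both yield the identical estimate.
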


If $\norm{s}_{\infty}$ and $\norm{i_{2,k}}_{\infty}$ are finite for some $k\in \mathbb{N}$, then the 
same statement as formulated in Corollary~\ref{cor: infbounds_doubly_intr} holds. 
The proof works exactly as stated there. 
Examples which satisfy this condition are for instance presented in \cite{MeLeRo17}.
However, there are cases where the functions $s$ and $i_{2,k}$ are unbounded. 
In this setting, as in Section~\ref{sec: rest_mcwm_approx}, we consider the restricted MCwM
algorithm with transition kernel $M_{a_N^{(R)}}$.
Here again the same statement and proof as formulated in Corollary~\ref{cor: doub_restr_MCwM} hold.
We next provide an application of this corollary in the latent variable setting.\\

\noindent
{\bf Normal-normal model.}
Let $G=\mathbb{R}$ and the function $\varphi_{\mu,\sigma^2}$ be the density of $\mathcal{N}(\mu,\sigma^2)$.
For some $z\in \mathbb{R}$ and (precision) parameters $\gamma_Z,\gamma_Y>0$ define
\[
  \pi_u(x) := \int_\mathbb{R} \varphi_{z,\gamma_Z^{-1}}(y)\, \varphi_{0,\gamma_Y^{-1}}(x-y) \dint y, 
\]
that is, $\mathcal{Y}=\mathbb{R}$, $\widebar \rho(x,y) = \varphi_{z,\gamma_Z^{-1}}(y)$ and $r_x = \mathcal{N}(x,\gamma_Y^{-1})$.
By the convolution of two normals the target distribution $\pi$ satisfies
\begin{equation} \label{eq: normal_normal_pi_u}
 \pi_u(x) = \varphi_{z,\gamma_{Z,Y}^{-1}}(x), 
 \quad \text{with} \quad
 \gamma_{Z,Y}^{-1} := \gamma_Z^{-1}+\gamma_Y^{-1}.
\end{equation}
Note that, for real-valued random variables $Y,Z$ the probability measure $\pi$ is the posterior 
distribution given an observation $Z=z$ within the model
\begin{align*}
Z|Y=y & \sim\mathcal{N}\left(y,\gamma_{Z}^{-1}\right),\qquad 
Y|x\sim\mathcal{N}\left(x,\gamma_{Y}^{-1}\right),
\end{align*}
with the improper Lebesgue prior imposed on $x$.

Pretending that we do not know $\pi_u(x)$ we compute 
\[
 \widehat{\rho}_N(x) = \frac{1}{N} \sum_{i=1}^N \varphi_{z,\gamma_Z^{-1}}(Y_i^{(x)}),
\]
where $(Y_i^{(x)})_{1\leq i \leq N}$ is a sequence of iid random variables with $Y_1^{(x)}\sim \mathcal{N}(x,\gamma_Y^{-1})$.
Hence
\begin{align*}
W_{N}(x) 
=\frac{1}{N}\sum_{i=1}^{N}\frac{\varphi_{z,\gamma_Z^{-1}}(Y_i^{(x)})}{\varphi_{z,\gamma_{Z,Y}^{-1}}(x)}
& =\frac{1}{N}\left(\frac{\gamma_{Z}}{\gamma_{Z,Y}}\right)^{1/2}
\sum_{i=1}^{N}
\frac{\varphi_{0,1}(\sqrt{\gamma_{Z}}(z-Y_{i}^{(x)}))}
     {\varphi_{0,1}(\sqrt{\gamma_{Z,Y}}(z-x))}.
\end{align*}
By using a random variable $\xi \sim \mathcal{N}(0,1)$ we have
for $p>-\gamma_Y/\gamma_Z$ that
\begin{align}
\mathbb{E}\left[W_{1}(x)^{p}\right] & =\left(\frac{\gamma_{Z}}{\gamma_{Z,Y}}\right)^{p/2}
\mathbb{E}\left[\exp\left(\frac{p}{2}\gamma_{Z,Y}\left(z-x\right)^{2} 
-\frac{p}{2}\,\frac{\gamma_{Z}}{\gamma_{Y}}(\gamma_{Y}^{1/2}(z-x)-\xi)^{2}\right) \right]\nonumber \\
 & \propto\exp\left(\frac{\gamma_{Z}\,\gamma_{Z,Y}\,p\left(p-1\right)}{2
 \left(\gamma_{Y}+p\gamma_{Z}\right)}\left(z-x\right)^{2}\right) .\label{eq:moment_W_example}
\end{align}
Here $\propto$ means equal up to a constant independent of $x$.
As a consequence, $\left\Vert s\right\Vert _{\infty}=\infty$ and therefore 
Corollary~\ref{cor: infbounds_doubly_intr} (which is also true in the latent variable setting) cannot be applied. 
Nevertheless, we can obtain bounds for the restricted MCwM in this example using the statement of 
Corollary~\ref{cor: doub_restr_MCwM}
by controlling $s$ and $i_{2,k}$ using a Lyapunov function $V$. 
The following result, proved in \ref{sec: proof_Sec4}, verifies the necessary moment conditions under some additional restrictions on the model parameters. 
\begin{prop}\label{prop: normnorm}
Assume that $\gamma_Y>\sqrt{2} \gamma_Z$, the unnormalized density $\pi_u$ is given as in \eqref{eq: normal_normal_pi_u} and
let the proposal transition kernel $Q$ be a Gaussian random walk, that is, $Q(x,\cdot)=\mathcal{N}(x,\sigma^2)$
for some $\sigma>0$. Then, there is a Lyapunov function $V\colon G\to [1,\infty)$
for $M_a$, such that $M_a$ is $V$-uniformly ergodic, i.e., Assumption~\ref{ass: metro_V_unif} is satisfied, 
and there are $\beta\in (0,1)$ as well as $k\in\mathbb{N}$ such that
\[
 \norm{i_{2,k}}_{\infty,V^{1-\beta}} <\infty
 \qquad \text{and} \qquad
  \norm{s}_{\infty,V^{\beta}} <\infty.
\]
\end{prop}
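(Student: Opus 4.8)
The plan is to use a Gaussian-type Lyapunov function $V(x)=\exp(c(x-z)^2)$ and to balance the admissible range of the scale $c$ against the tail growth rates of $s$ and $i_{2,k}$, which can be read off directly from the moment formula \eqref{eq:moment_W_example}.

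First I would settle the $V$-uniform ergodicity. Since $\pi=\mathcal{N}(z,\gamma_{Z,Y}^{-1})$ has density proportional to $\exp(-\gamma_{Z,Y}(x-z)^2/2)$ and $Q$ is a Gaussian random walk, the results of \cite{JaHa00} already invoked in Log-normal example~I show that $M_a$ is $V$-uniformly ergodic and satisfies a Lyapunov condition with $V(x)=\exp(c(x-z)^2)$ for every $c$ in the open window $(0,\gamma_{Z,Y}/2)$; this is the analogue of the choice $V(x)=\exp(x^2/4)$ used for the standard normal. The precise value of $c\in(0,\gamma_{Z,Y}/2)$ would be fixed only at the end, once the moment constraints are known.

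Next I would extract the tail growth of the two error functions. With $\mathbb{E}[W_1(x)]=1$ and \eqref{eq:moment_W_example} at $p=2$ one gets $s(x)^2=\mathbb{E}[W_1(x)^2]-1\le c_2\exp\!\big(\tfrac{\gamma_Z\gamma_{Z,Y}}{\gamma_Y+2\gamma_Z}(x-z)^2\big)$, hence $s(x)\le \sqrt{c_2}\,\exp(A(x-z)^2)$ with $A:=\frac{\gamma_Z\gamma_{Z,Y}}{2(\gamma_Y+2\gamma_Z)}$. Applying \eqref{eq:moment_W_example} with exponent $-2/k$ together with Lemma~\ref{lemI} gives $i_{2,k}(x)\le i_{2/k,1}(x)\le c_k\exp(B_k(x-z)^2)$, where $B_k:=\frac{\gamma_Z\gamma_{Z,Y}(2/k+1)}{2(\gamma_Y-(2/k)\gamma_Z)}$ is finite once $2/k<\gamma_Y/\gamma_Z$ and decreases to $B:=\frac{\gamma_Z\gamma_{Z,Y}}{2\gamma_Y}$ as $k\to\infty$. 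Then the two desired conditions become statements about exponents only: because the competing Gaussian factors cancel, $\norm{s}_{\infty,V^{\beta}}<\infty$ holds as soon as $A\le\beta c$, and $\norm{i_{2,k}}_{\infty,V^{1-\beta}}<\infty$ holds as soon as $B_k\le(1-\beta)c$. A weight $\beta\in(0,1)$ meeting both, namely any $\beta\in[A/c,\,1-B_k/c]$, exists precisely when $A+B_k\le c$, and letting $k\to\infty$ this reduces to the strict bound $c>A+B$. Using $\gamma_{Z,Y}=\gamma_Z\gamma_Y/(\gamma_Z+\gamma_Y)$ the threshold simplifies to $A+B=\gamma_Z^2/(\gamma_Y+2\gamma_Z)$, and clearing denominators shows that $A+B<\gamma_{Z,Y}/2$ is equivalent to $\gamma_Y^2>2\gamma_Z^2$, i.e.\ to the hypothesis $\gamma_Y>\sqrt{2}\,\gamma_Z$. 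Hence I can pick $c\in(A+B,\gamma_{Z,Y}/2)$, then $k$ so large that $B_k<c-A$, and finally any $\beta\in[A/c,\,1-B_k/c]$, which delivers all claimed finiteness statements.

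I expect the first step to be the main obstacle: one must pin down the \emph{exact} upper endpoint $\gamma_{Z,Y}/2$ of the admissible window for $c$ in the drift/$V$-uniform ergodicity statement for the Gaussian random-walk Metropolis chain. The subsequent moment bookkeeping is routine given \eqref{eq:moment_W_example} and Lemma~\ref{lemI}, but it is essential that this endpoint be exactly $\gamma_{Z,Y}/2$, since that is what makes the feasibility condition $A+B<\gamma_{Z,Y}/2$ coincide with $\gamma_Y>\sqrt{2}\,\gamma_Z$; a coarser drift analysis yielding a smaller window would force a strictly stronger assumption on the precisions.
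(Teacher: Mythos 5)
Your proposal is correct and follows essentially the same route as the paper's proof: the paper also takes $V_t\propto\pi_u^{-t}\propto\exp(t\,\gamma_{Z,Y}(z-x)^2/2)$ for $t\in(0,1)$ from \cite{JaHa00} (your $c=t\gamma_{Z,Y}/2\in(0,\gamma_{Z,Y}/2)$ is just a reparametrization), bounds $s$ and $i_{2,k}$ via \eqref{eq:moment_W_example} and Lemma~\ref{lemI}~\ref{en: 3_inv_moment} with the same exponent rates, and arrives at the identical feasibility condition $t>\gamma_Z/\gamma_Y+\gamma_Z/(\gamma_Y+2\gamma_Z)$, i.e.\ $c>A+B$, which is attainable precisely when $\gamma_Y>\sqrt{2}\,\gamma_Z$. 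The only presentational difference is that the paper fixes $t\beta$ exactly equal to the rate of $s$ rather than optimizing over an interval of admissible $\beta$.
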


The previous proposition implies that there is a constant $D<\infty$, such that
$D_R$ from Corollary~\ref{cor: doub_restr_MCwM} is bounded by $D$ independent of $R$.
Hence there are numbers $\widetilde C_1,\widetilde C_2>0$ such that with $R=\widetilde C_1 \sqrt{N}$ 
and for $N$ sufficiently large we have
\[
  \norm{m_n-m^{(R)}_{n,N}}_{\rm tv} \leq \widetilde C_2\;
  \frac{\log N}{\sqrt{N}}
\]
for any initial distribution $m_0$ on $B_R$.\\[2ex]

\noindent
{\bf Acknowledgments.}
Daniel Rudolf gratefully acknowledges support of the 
Felix-Bernstein-Institute for Mathematical Statistics in the Biosciences 
(Volkswagen Foundation), the Campus laboratory AIMS and the DFG within the project 389483880. Felipe Medina-Aguayo was supported by BBSRC grant BB/N00874X/1 and thanks Richard Everitt for useful discussions.

\begin{appendix}
\section{Technical proofs}

\subsection{Proofs of Section~\ref{sec: quant_est}} 
\label{sec: proof_sec3}

Before we come to the proofs of Section~\ref{sec: quant_est} let us recall a relation between
geometric ergodicity and an ergodicity coefficient. 
Let $V\colon G \to [1,\infty]$ be a measurable, $\pi$-a.e. finite function,
then, define the \emph{ergodicity coefficient $\tau_V(P)$} as
\[
 \tau_V(P) := \sup_{x,y\in G} \frac{\norm{P(x,\cdot)-P(y,\cdot)}_V}{V(x)+V(y)}.
\]
The next lemma provides a relation between the ergodicity coefficient and
$V$-uniform ergodicity.
\begin{lem}  \label{lem: contr_coeff}
 If \eqref{eq: V_unif_erg_ratio} is satisfied, then $\tau_V(P^n)\leq C \alpha^n$.
\end{lem}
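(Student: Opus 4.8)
The plan is to reduce the statement to a single application of the triangle inequality for the weighted norm $\norm{\cdot}_V$, using the stationary distribution $\pi$ as an intermediate point. The key observation is that $\norm{\cdot}_V$, being defined as $\norm{\mu-\nu}_V = \sup_{\abs{f}\leq V}\abs{\mu(f)-\nu(f)}$, satisfies the triangle inequality: for any probability measures $\mu,\nu,\lambda$ and any $f$ with $\abs{f}\leq V$ we have $\abs{\mu(f)-\lambda(f)} \leq \abs{\mu(f)-\nu(f)} + \abs{\nu(f)-\lambda(f)} \leq \norm{\mu-\nu}_V + \norm{\nu-\lambda}_V$, and taking the supremum over admissible $f$ gives $\norm{\mu-\lambda}_V \leq \norm{\mu-\nu}_V + \norm{\nu-\lambda}_V$.

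First I would fix arbitrary $x,y\in G$ and insert $\pi$, which is a probability measure (so that both $\norm{P^n(x,\cdot)-\pi}_V$ and $\norm{P^n(y,\cdot)-\pi}_V$ are exactly the quantities controlled by the hypothesis). This yields
\[
 \norm{P^n(x,\cdot)-P^n(y,\cdot)}_V \leq \norm{P^n(x,\cdot)-\pi}_V + \norm{\pi - P^n(y,\cdot)}_V.
\]

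Next I would invoke the assumption \eqref{eq: V_unif_erg_ratio}, which states precisely that $\norm{P^n(z,\cdot)-\pi}_V \leq C\alpha^n V(z)$ for every $z\in G$. Applying this with $z=x$ and $z=y$ bounds the right-hand side by $C\alpha^n V(x) + C\alpha^n V(y) = C\alpha^n\bigl(V(x)+V(y)\bigr)$. Dividing through by $V(x)+V(y)$ (which is strictly positive since $V\geq 1$) gives
\[
 \frac{\norm{P^n(x,\cdot)-P^n(y,\cdot)}_V}{V(x)+V(y)} \leq C\alpha^n.
\]

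Finally I would take the supremum over all $x,y\in G$, which by the definition of $\tau_V$ yields $\tau_V(P^n)\leq C\alpha^n$, as claimed. There is no real obstacle here: the entire argument is the triangle inequality together with the hypothesis, and the only point deserving a word of care is the symmetry $\norm{\pi-P^n(y,\cdot)}_V = \norm{P^n(y,\cdot)-\pi}_V$, which is immediate from the definition of the norm.
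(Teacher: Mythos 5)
Your argument is correct: the weighted norm $\norm{\cdot}_V$ is a supremum of absolute differences of integrals, so it satisfies the triangle inequality, and inserting $\pi$ between $P^n(x,\cdot)$ and $P^n(y,\cdot)$ together with the hypothesis $\norm{P^n(z,\cdot)-\pi}_V\leq C\alpha^n V(z)$ immediately yields the bound $C\alpha^n\bigl(V(x)+V(y)\bigr)$ on the numerator of $\tau_V(P^n)$. The paper itself does not prove this lemma; it only points to \cite{MaZhZh13} and \cite[Lemma~3.2]{RuSc15} and remarks that those references rely on an observation of Hairer and Mattingly. Your route is therefore more self-contained and more elementary than what the citation suggests: the Hairer--Mattingly machinery is really needed for the \emph{contraction} property $\norm{(\mu-\nu)P}_V\leq\tau_V(P)\norm{\mu-\nu}_V$ of Proposition~\ref{prop: prop_erg_coeff} (i.e., for showing that a small $\tau_V$ is actually useful), not for the implication from $V$-uniform ergodicity to a bound on $\tau_V(P^n)$, which is exactly the triangle-inequality statement you prove. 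The only point worth a footnote is the degenerate case where $V$ is allowed to take the value $+\infty$ (as in the appendix's definition of $\tau_V$); on the set where $V$ is finite --- which is all of $G$ under Assumption~\ref{ass: erg} --- your division by $V(x)+V(y)$ is unproblematic.
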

A proof of this fact is implicitly contained in \cite{MaZhZh13} and 
can also be found in \cite[Lemma~3.2]{RuSc15}. Both references crucially use an observation 
of Hairer and Mattingly \cite{HaMa11}.

To summarize, 
if the transition kernel $P$ is geometrically ergodic, then, 
by Theorem~\ref{thm: equi} there exist a function $V\colon G \to [1,\infty)$,
$\alpha\in [0,1)$
and $C\in (0,\infty)$ such that, by Lemma~\ref{lem: contr_coeff}, $\tau_V(P^n )\leq C \alpha^n$. 
The next proposition states two further useful 
properties (submultiplicativity and contractivity) of the ergodicity coefficient. 
For a proof of the corresponding inequalities 
see for example \cite[Proposition~2.1]{MaZhZh13}.

\begin{prop} \label{prop: prop_erg_coeff}
  Assume $P,Q$ are transition kernels and $\mu,\nu$ are probability measures on $G$. 
  Then
  \begin{align*}
      \tau_V(P Q) & \leq \tau_V(P)\;\tau_V(Q),\qquad  \text{(submultiplicativity)}\\
   \norm{(\mu-\nu)P}_V & \leq \tau_V(P) \norm{\mu-\nu}_V. \qquad  \text{(contractivity)}
  \end{align*}
\end{prop}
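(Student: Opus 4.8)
The plan is to establish the contractivity inequality first and then obtain submultiplicativity from it, since the latter becomes a one-line consequence once contractivity is available for differences of point-mass pushforwards.

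The preliminary ingredient I would record is the duality identity $\norm{q}_V = \abs{q}(V)$, valid for any finite signed measure $q$, where $\abs{q}=q^+ + q^-$ is the total variation measure from the Jordan decomposition $q = q^+ - q^-$. The upper bound $\abs{q(f)} \le \int \abs{f}\,\d\abs{q} \le \abs{q}(V)$ is immediate whenever $\abs{f}\le V$, and the matching lower bound follows by testing against $f = V(\mathbf{1}_A - \mathbf{1}_{A^c})$, where $A$ is the positive set of a Hahn decomposition: since $V\ge 1$ this $f$ satisfies $\abs{f}=V\le V$, and it yields $q(f) = q^+(V)+q^-(V) = \abs{q}(V)$, so the supremum is attained.

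For contractivity set $q := \mu-\nu$, which has total mass $q(G)=0$. If $q=0$ there is nothing to prove; otherwise let $m := q^+(G) = q^-(G) > 0$ and define the probability measures $\bar\mu := q^+/m$, $\bar\nu := q^-/m$, so that $q = m(\bar\mu-\bar\nu)$. For a test function $f$ with $\abs{f}\le V$, I would use $\bar\mu(G)=\bar\nu(G)=1$ to symmetrize the difference into a double integral:
\[
(qP)(f) = q(Pf) = m\bigl(\bar\mu(Pf)-\bar\nu(Pf)\bigr)
= m \int_G\!\int_G \bigl(P(x,\cdot)-P(y,\cdot)\bigr)(f)\,\d\bar\mu(x)\,\d\bar\nu(y).
\]
Bounding the integrand by $\norm{P(x,\cdot)-P(y,\cdot)}_V \le \tau_V(P)\,(V(x)+V(y))$ gives $\abs{(qP)(f)} \le \tau_V(P)\, m\bigl(\bar\mu(V)+\bar\nu(V)\bigr) = \tau_V(P)\bigl(q^+(V)+q^-(V)\bigr) = \tau_V(P)\,\abs{q}(V)$. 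Taking the supremum over $\abs{f}\le V$ and invoking the duality identity yields $\norm{qP}_V \le \tau_V(P)\norm{q}_V$, which is contractivity.

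For submultiplicativity I would fix $x,y\in G$ and apply contractivity to $P(x,\cdot)-P(y,\cdot)$, which is a difference of two probability measures. Using $PQ(x,\cdot)-PQ(y,\cdot) = \bigl(P(x,\cdot)-P(y,\cdot)\bigr)Q$ and contractivity with the kernel $Q$,
\[
\norm{PQ(x,\cdot)-PQ(y,\cdot)}_V \le \tau_V(Q)\,\norm{P(x,\cdot)-P(y,\cdot)}_V \le \tau_V(P)\,\tau_V(Q)\,(V(x)+V(y)),
\]
and dividing by $V(x)+V(y)$ and taking the supremum over $x,y$ gives $\tau_V(PQ)\le\tau_V(P)\tau_V(Q)$. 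The only step requiring genuine care is the symmetrization in the contractivity proof: it is precisely the total mass zero property $q(G)=0$ that lets one rewrite the difference $\bar\mu(Pf)-\bar\nu(Pf)$ as an integral of $\bigl(P(x,\cdot)-P(y,\cdot)\bigr)(f)$ against the product of the two probability measures, after which the defining bound for $\tau_V(P)$ applies pointwise. Everything else is a routine estimate.
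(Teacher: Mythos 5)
Your proof is correct. Note that the paper does not actually prove this proposition; it defers to \cite[Proposition~2.1]{MaZhZh13}, so there is no internal argument to compare against. Your route is the standard one behind such generalized Dobrushin coefficients: the duality identity $\norm{q}_V=\abs{q}(V)$ via the Hahn decomposition, the normalization $q=m(\bar\mu-\bar\nu)$ using $q(G)=0$, and the symmetrization of $\bar\mu(Pf)-\bar\nu(Pf)$ into a double integral against $\bar\mu\otimes\bar\nu$ where the defining bound for $\tau_V(P)$ applies pointwise; submultiplicativity then follows by applying contractivity to the signed measure $P(x,\cdot)-P(y,\cdot)$. All steps check out, including the attainment of the supremum by $f=V(\mathbf{1}_A-\mathbf{1}_{A^c})$. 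The only implicit assumption is that quantities such as $PV(x)$ are finite so that $Pf$ and the Fubini step are well defined; this is harmless since both inequalities are trivially true when $\tau_V(P)=\infty$ or $\norm{\mu-\nu}_V=\infty$, and is glossed over in the cited reference as well.
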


Now we prove Lemma~\ref{lem: aux}.

\begin{proof}[Proof of Lemma~\ref{lem: aux}]
As in the proof of \cite[Theorem~3.1]{Mi05} we use
 \[
  \wt p_n-p_n = (\wt p_0-p_0)P^n + \sum_{i=0}^{n-1} \wt p_i (\wt P-P)P^{n-i-1},
 \]
 which can be shown by induction over $n\in\mathbb{N}$.
 Then
 \begin{equation}  \label{eq: teleskop}
  \norm{\wt p_n  - p_n}_{\rm tv} 
  \leq 
  \norm{(\wt p_0 - p_0)P^n}_{\rm tv}
  + \sum_{i=0}^{n-1} \norm{\wt p_i(\wt P-P)P^{n-i-1}}_{\rm tv}.
  \end{equation}
 With
 Proposition~\ref{prop: prop_erg_coeff} and Lemma~\ref{lem: contr_coeff}
 we estimate the first term of the previous inequality by
 \begin{align*}
  \norm{(\wt p_0 - p_0)P^n}_{\rm tv} 
  \leq  \norm{(\wt p_0 - p_0)P^n}_{V} \leq \tau_V(P^n) \norm{\wt p_0 -p_0}_{V}
  \leq C \alpha^n \norm{\wt p_0 -p_0}_{V}.
 \end{align*}
 For the terms which appear in the sum of \eqref{eq: teleskop} 
 we can use two types of estimates.
 Note that $\tau_1(P) \leq 1$ 
 (here the subscript indicates that $V=1$) which leads 
 by Proposition~\ref{prop: prop_erg_coeff} to 
 \begin{align*}
  &  \norm{\wt p_i(\wt P - P)P^{n-i-1}}_{\rm tv} 
  \leq \norm{\wt p_i(\wt P-P)}_{\rm tv} \tau_1(P^{n-i-1})
  \leq \norm{\wt p_i(\wt P-P)}_{\rm tv} \\
  & = \sup_{\abs{f}\leq 1} \abs{\int_G f(x)\, \wt p_i(\wt P-P)(\dint x)}
    = \sup_{\abs{f}\leq 1} \abs{\int_G (\wt P - P)f(x) \, \wt p_i(\dint x)}\\
  & \leq \int_G \norm{\wt P(x,\cdot)-P(x,\cdot)}_{\rm tv} \, \wt p_i(\dint x) 
  \leq \varepsilon_{{\rm tv},W} \;
  \wt p_i(W).
  \end{align*}
 On the other hand
 \begin{align*}
  &  \norm{\wt p_i(\wt P - P)P^{n-i-1}}_{\rm tv} 
  \leq   \norm{\wt p_i(\wt P - P)P^{n-i-1}}_{V}
  \leq    \norm{\wt p_i(\wt P - P)}_{V}
	  \tau_V(P^{n-i-1})\\
  &\leq	C\alpha^{n-i-1}\norm{\wt p_i(\wt P - P)}_{V}
   \leq C\alpha^{n-i-1} \int_G \norm{\wt P(x,\cdot) - P(x,\cdot)}_{V} \wt p_i(\dint x)\\
  &\leq C\alpha^{n-i-1} \varepsilon_{V,W} \;
  \wt p_i(W).
	   \end{align*}
 Thus, for any $r\in (0,1]$ we obtain
 \begin{align*} 
    \norm{\wt p_i(\wt P - P)P^{n-i-1}}_{\rm tv}
  & \leq \norm{\wt p_i(\wt P - P)P^{n-i-1}}_{\rm tv}^{1-r}
  \cdot \norm{\wt p_i(\wt P - P)P^{n-i-1}}_{\rm tv}^r \\
  & \leq  \varepsilon_{{\rm tv},W}^{1-r}\; \varepsilon_{V,W}^{r} \;
  C^r \; \wt p_i(W)\; \alpha^{(n-i-1)r},
\end{align*}
 which gives by \eqref{eq: teleskop} the final estimate.
\end{proof}

 Next we prove Theorem~\ref{thm: restricted_approx}.

\begin{proof}[Proof Theorem~\ref{thm: restricted_approx}]
 Locally 
 for $x\in B_R$
 we have
 $
  P_R V(x) \leq PV(x)
  \leq \delta V(x) + L,
 $
 and, eventually,
 \begin{align}
    \wt P V(x)  
& \leq P_R V(x) + \abs{\wt P V(x) - P_R V(x)} \notag \\
& \leq \delta V(x) + R \norm{\wt P(x,\cdot) - P_R(x,\cdot)}_{\rm tv} + L \notag \\
& \leq (\delta + R\cdot \Delta(R)) V(x) + L. \label{al: drift_innen}
 \end{align}
 We write $B_R^c$ for $G\setminus B_R$ and obtain for $x\in B_R^c$ that
\begin{equation} \label{eq: drift_ausserhalb}
     \wt P V(x) = \int_{B_R} V(y) \wt P(x,\dint y) \leq V(x).                               
\end{equation}
Denote $\widetilde \delta := \delta + R \cdot \Delta(R) \leq 1/2+\delta/2 <1$. For 
$i\geq 2$ we obtain 
by \eqref{al: drift_innen},
\eqref{eq: drift_ausserhalb} and $(1-\wt \delta^i) \leq 2(1-\wt \delta^{i-1})$ that
\begin{align*}
 \wt p_i(V) & \leq \wt \delta^i \, \int_{B_R} V(x) p_0(\dint x) 
 + (1-\wt \delta^i) \frac{L}{1-\wt \delta} \\
&\qquad + \wt \delta^{i-1} \int_{B_R^c} 
\wt P V(x) p_0(\dint x) + (1-\wt \delta^{i-1}) \frac{L}{1-\wt \delta} \\
& \leq \wt \delta^{i-1} p_0(V) + (1-\wt \delta^{i-1})\frac{3L}{1-\wt \delta}
\leq 6\kappa.
\end{align*}
Furthermore, $p_0(V)\leq \kappa$ and $\wt p_1(V)\leq 2 \kappa$.
Now it is easily seen that
\[
 \sum_{i=0}^{n-1}\wt p_i(V)\alpha^{(n-i-1)r} \leq   \frac{6\kappa}{r(1-\alpha)}.
\]

For $\varepsilon_{{\rm tv},V}$ we have
\[
 \varepsilon_{{\rm tv},V} \leq 
 \max\left\{ \sup_{x\in B_R} \frac{\norm{P(x,\cdot)-\wt P(x,\cdot)}_{\rm tv}}{V(x)},
 \sup_{x\in B_R^c} \frac{\norm{P(x,\cdot)-\wt P(x,\cdot)}_{\rm tv}}{V(x)}
 \right\}.
\]
The second term in the maximum is bounded by $2/R$. 
For $x\in B_R$ we have
\begin{align*}
  \norm{P(x,\cdot)-\wt P(x,\cdot)}_{\rm tv}
 &\leq  \norm{P(x,\cdot)-P_R(x,\cdot)}_{\rm tv}
 + \norm{P_R(x,\cdot)-\wt P(x,\cdot)}_{\rm tv}\\
 &\leq 2P(x,B_R^c)+ \norm{P_R(x,\cdot)-\wt P(x,\cdot)}_{\rm tv}  
\end{align*}
so that the first term in the maximum satisfies
\[
\sup_{x \in B_R} \frac{\norm{P(x,\cdot) - \wt P(x,\cdot)}_{\rm tv} }{V(x)} \leq \Delta(R) 
+ 2 \sup_{x \in B_R} \frac{P(x,B_R^c)}{V(x)}.
\]
Consider a random variable  $X_1^x$ with distribution $P(x,\cdot)$, $x\in B_R$. 
Applying Markov's inequality to the random variable $V(X_1^x)$ leads to
\[
PV(x) =\mathbb{E}[V(X_1^x)] \geq R \cdot \mathbb{P}(V(X_1^x) > R) = R \cdot P(x,B_R^c),
\]
and thus
\[
\sup_{x \in B_R} \frac{P(x,B_R^c)}{V(x)} 
\leq \sup_{x \in B_R} \frac{PV(x)}{R\cdot V(x)}\leq \frac{\delta+L}{R}.
\]
Finally, $ R\cdot \Delta(R)<1-\delta$ and $L\geq 1$ imply
$
 \varepsilon_{{\rm tv},V} \leq \frac{2(L+1)}{R}.
$

We obtain $\varepsilon_{V,V} \leq 2(L+1)$
by the use of
\[
 \norm{P(x,\cdot)-\wt P(x,\cdot)}_{V} 
 \leq PV(x) + \wt PV(x),
\]
the fact that $\sup_{x\in G}\frac{PV(x)}{V(x)} \leq \delta + L$ and
\begin{align*}
\sup_{x\in G}\frac{\wt PV(x)}{V(x)} 
& \leq \max\left\{ \sup_{x\in B_R} \frac{\wt PV(x)}{V(x)}, 
\sup_{x\in B_R^c} \frac{\wt PV(x)}{V(x)} \right\}\\ 
& \underset{ \eqref{al: drift_innen}, \eqref{eq: drift_ausserhalb}}{\leq}
\max\left\{ \wt \delta + L,1 \right\} 
 \leq L+1.
\end{align*}
Then, by Lemma~\ref{lem: aux} for $r\in(0,1]$,
\[
 \norm{p_n-\wt p_n}_{\rm tv} 
 \leq \frac{12 C^{r}(L+1) \kappa}{r\cdot R^{1-r}(1-\alpha)} 
 \leq \frac{12 C(L+1) \kappa}{r\cdot R^{1-r}(1-\alpha)}.
\]
By minimizing over $r$ we obtain for $R\geq \exp(1)$ that
\[
  \norm{p_n-\wt p_n}_{\rm tv} 
   \leq \frac{12 C(L+1) \kappa}{1-\alpha} \cdot \frac{R^{1/\log(R)}\log(R)}{R}.
\]
{Finally by the fact that $R^{1/\log{R}}=\exp(1)<33/12$ the assertion follows}.
\end{proof}

\subsection{Proofs of Section~\ref{sec: approx_MH}} 
\label{sec: proof_Sec4}

We start with the proof of Proposition~\ref{prop: accept_err}.
\begin{proof}[Proof of Proposition~\ref{prop: accept_err}.]
 For any $f\colon G\to \mathbb{R}$ we have
 \begin{align*}
  M_bf(x)-M_cf(x)
  & = \int_G f(y)(b(x,y)-c(x,y))Q(x,\dint y)\\
  & \qquad +f(x)\int_G (c(x,y)-b(x,y)) Q(x,\dint y).
 \end{align*}
In the first case of \eqref{eq: accept_err1case}, we have for all $x\in B$ that
\begin{align*}
 & \norm{M_b(x,\cdot)-M_c(x,\cdot)}_{\rm tv}
   \leq 2 \int_G \abs{b(x,y)-c(x,y)} Q(x,\dint y)\\
 & \leq 2 \int_B b(x,y)\xi(x)(\eta(x)+\eta(y)) Q(x,\dint y)
   \leq 2 \xi(x) (\eta(x) + M_b (\eta \cdot \mathbf{1}_B)(x))\\
 & \leq 2 \xi(x) (\eta(x)+ M_b V^{\beta}(x) \norm{\eta\cdot \mathbf{1}_B}_{\infty,V^\beta} )\\
&\leq 4T \norm{\xi\cdot \mathbf{1}_B}_{\infty,V^{1-\beta}} \norm{\eta\cdot \mathbf{1}_B}_{\infty,V^\beta} V(x),
\end{align*}
where we used that  $\sup_{x\in G}\frac{M_b V(x)}{V(x)} \leq T$ implies  $\sup_{x\in G}\frac{M_b V(x)^\beta}{V(x)^\beta} \leq T^\beta$ by Jensen's inequality. 
Moreover, for any $x\in B$ we obtain
\begin{align*}
 &\norm{M_b(x,\cdot)-M_c(x,\cdot)}_{V}
 \leq \sup_{\abs{f}\leq V} \Big|\int_G f(y) (b(x,y)-c(x,y)) Q(x,\dint y)\\
   &\qquad \qquad 
   + f(x)\left( \int_G (c(x,y)-b(x,y)) Q(x,\dint y) \right) \Big|\\
 &\leq \int_G V(y) \abs{b(x,y)-c(x,y)} Q(x,\dint y)   
 + V(x) \int_G \abs{b(x,y)-c(x,y)} Q(x,\dint y) \\
 &\leq \int_B V(y) b(x,y)\xi(x)(\eta(x)+\eta(y)) Q(x,\dint y) \\  
 & \qquad + V(x) \int_B b(x,y)\xi(x)(\eta(x)+\eta(y)) Q(x,\dint y) \\
 &\leq 2 \norm{\eta\cdot \mathbf{1}_B}_\infty \norm{\xi \cdot \mathbf{1}_B}_\infty (M_bV(x)+V(x)),
\end{align*}
which implies the assertion in that case. 
In the second case of \eqref{eq: accept_err1case}, we have similarly 
for any $x\in B$ that
\begin{align*}
 \quad&  \norm{M_b(x,\cdot)-M_c(x,\cdot)}_{\rm tv}
  \leq 2 \eta(x) M_b (\xi \cdot \mathbf{1}_B) + 2 M_b( \xi\cdot \eta \cdot \mathbf{1}_B)\\
 & \leq 2 \eta(x) \norm{\xi\cdot \mathbf{1}_B}_{\infty, V^{1-\beta}} M_b(V^{1-\beta})(x)
  + 2 \norm{\xi\cdot\eta\cdot \mathbf{1}_B}_{\infty, V} M_b V(x)\\
 & \leq 4 T \norm{\eta\cdot \mathbf{1}_B}_{\infty,V^\beta} \norm{\xi\cdot \mathbf{1}_B}_{\infty,V^{1-\beta}} V(x) 
\end{align*}
and
\begin{align*}
 \norm{M_b(x,\cdot)-M_c(x,\cdot)}_{V}
 &\leq \int_B V(y) b(x,y) \xi(y)(\eta(x)+\eta(y)) Q(x,\dint y)\\   
 &\qquad 
 + V(x) \int_B b(x,y) \xi(y)(\eta(x)+\eta(y)) Q(x,\dint y) \\
 & \leq 2 \norm{\xi\cdot \mathbf{1}_B}_\infty \norm{\eta\cdot \mathbf{1}_B}_\infty (M_bV(x)+V(x)),
\end{align*}
which finishes the proof.
\end{proof}

Before we come to further proofs of Section~\ref{sec: approx_MH} 
we provide some properties of inverse moments of averages of non-negative 
real-valued iid random variables $(S_i)_{i\in \mathbb{N}}$.  
In this setting, the $p$th inverse moment, for $p>0$, is defined by
\[
 j_{p,r} := \left(\mathbb{E}\left( \frac{1}{r} \sum_{i=1}^r S_i \right)^{-p}\right)^{1/p}.
\]
\begin{lem}\label{lemI}
Assume that $j_{p,r}<\infty$ for some $r\in \mathbb{N}$ and $p>0$.
Then
\begin{enumerate}[label=\roman*)]
 \item \label{en: 1_inv_moment}
 $j_{p,s}\leq j_{p,r}$ for $s\in \mathbb{N}$ with $s\geq r$;
 \item \label{en: 2_inv_moment}
  $j_{q,r}\leq j_{p,r}$ for $0<q<p$;
 \item \label{en: 3_inv_moment}
 $j_{k\cdot p, k\cdot r} \leq j_{p,r}$ for any $k\in \mathbb{N}$.
\end{enumerate}

\end{lem}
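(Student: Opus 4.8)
The plan is to write $\bar S_r := \frac{1}{r}\sum_{i=1}^r S_i$ and treat the three claims separately, each reducing to a single convexity or rearrangement inequality applied to the convex function $\phi(t) = t^{-p}$ on $(0,\infty)$. Throughout, the hypothesis $j_{p,r}<\infty$ ensures that $\bar S_r>0$ almost surely and that all expectations appearing below are finite, so no integrability subtleties arise.

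For \ref{en: 2_inv_moment}, this is just the standard monotonicity of $L^q$-norms: since $p/q > 1$, Jensen's inequality applied to the convex map $t \mapsto t^{p/q}$ and the random variable $\bar S_r^{-q}$ gives $(\mathbb{E}\,\bar S_r^{-q})^{p/q} \leq \mathbb{E}\,\bar S_r^{-p}$, and raising both sides to the power $1/p$ yields $j_{q,r} \leq j_{p,r}$.

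For \ref{en: 1_inv_moment}, the key observation --- and the step I expect to require the most care --- is the algebraic identity
\[
\bar S_s = \frac{1}{\binom{s}{r}} \sum_{|A| = r} \frac{1}{r}\sum_{i\in A} S_i,
\]
where the sum runs over all $r$-element subsets $A \subseteq \{1,\dots,s\}$. This follows from counting that each index $i$ lies in exactly $\binom{s-1}{r-1}$ such subsets, so that the coefficient of each $S_i$ is $\frac{\binom{s-1}{r-1}}{r\binom{s}{r}} = \frac{1}{s}$. Writing $\bar S_A := \frac{1}{r}\sum_{i\in A} S_i$, the identity exhibits $\bar S_s$ as a convex average of the $\bar S_A$, so the convexity of $\phi$ and Jensen's inequality give $\bar S_s^{-p} \leq \frac{1}{\binom{s}{r}}\sum_{|A|=r}\bar S_A^{-p}$ pointwise. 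Taking expectations and using that each $\bar S_A$ is distributed like $\bar S_r$ (by the iid assumption) yields $\mathbb{E}\,\bar S_s^{-p} \leq \mathbb{E}\,\bar S_r^{-p}$, i.e. $j_{p,s} \leq j_{p,r}$.

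For \ref{en: 3_inv_moment}, I would group the $kr$ summands into $k$ consecutive blocks of length $r$ and set $T_j := \frac{1}{r}\sum_{i=(j-1)r+1}^{jr} S_i$, so that $T_1,\dots,T_k$ are iid copies of $\bar S_r$ and $\bar S_{kr} = \frac{1}{k}\sum_{j=1}^k T_j$. The arithmetic--geometric mean inequality gives $\bar S_{kr} \geq \bigl(\prod_{j=1}^k T_j\bigr)^{1/k}$, hence $\bar S_{kr}^{-kp} \leq \prod_{j=1}^k T_j^{-p}$; taking expectations and using independence of the $T_j$ factorizes the right-hand side into $(\mathbb{E}\,T_1^{-p})^k = j_{p,r}^{kp}$, and raising to the power $1/(kp)$ gives $j_{kp,kr}\leq j_{p,r}$.
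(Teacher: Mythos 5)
Your proof is correct, and for part \ref{en: 3_inv_moment} --- the only part the paper actually proves --- it is essentially the paper's argument: group into $k$ batch means distributed as $\bar S_r$, apply the geometric--harmonic (equivalently arithmetic--geometric) mean inequality pointwise, and factorize the expectation by independence. For parts \ref{en: 1_inv_moment} and \ref{en: 2_inv_moment} the paper merely cites an external lemma, whereas you supply self-contained and valid arguments (the subset-averaging identity with Jensen for \ref{en: 1_inv_moment}, and $L^q$-monotonicity for \ref{en: 2_inv_moment}), so your write-up is, if anything, more complete than the paper's.
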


\begin{proof}
Properties \ref{en: 1_inv_moment} and \ref{en: 2_inv_moment}
follow as in \cite[Lemma~3.5]{MeLeRo15}. For proving \ref{en: 3_inv_moment}  
we have to show that
\[
\E\left[
\left(
\frac{1}{k\cdot r} \sum_{i=1}^{k\cdot r} S_i
\right)^{-p\cdot k}
\right]
\leq 
\E\left[
\left(
\frac{1}{r} \sum_{i=1}^{r} S_i
\right)^{-p}
\right]^k.
\]
To this end, observe first that we can write
\[
\frac{1}{k\cdot r} \sum_{i=1}^{k\cdot r} S_i = \frac{1}{k} \sum_{i=1}^{k} V_i
\]
where the ``batch-means'' $V_1,\dots,V_k$ are non-negative, real-valued iid  random variables
which have the same distribution as $\frac{1}{r} \sum_{i=1}^{r} S_i$. With $Z_i=V_i^{-1}$ 
we obtain
\[
\E\left[
\left(\frac1{
\frac{1}{k\cdot r} \sum_{i=1}^{k\cdot r} S_i}
\right)^{p\cdot k}
\right]
=\E\left[
\left(\frac1{
\frac{1}{k} \sum_{i=1}^{k} \frac1{Z_i}}
\right)^{p\cdot k}
\right]
\]
which is a moment of the harmonic mean of $Z_1,\ldots,Z_k$. 
Using the inequality between geometric and harmonic means 
as well as the independence we find that
\[
\E\left[
\left(\frac1{
\frac{1}{k} \sum_{i=1}^{k} \frac1{Z_i}}
\right)^{p\cdot k}
\right]
\leq \E\left[
\prod_{i=1}^k
Z_i^p
\right]
= \E\left[Z_1^p\right]^k
=\E\left[
\left(\frac1{
\frac{1}{r} \sum_{i=1}^{r} S_i}
\right)^p
\right]^k.
\qedhere
\]
\end{proof}

The previous lemma shows that when inverse moments of 
some positive order are finite, then so are inverse moments of all higher 
and lower orders if the sample size is adjusted accordingly.

\begin{proof}[Proof of Lemma~\ref{lem: doubly_approx_err}]
 It is easily seen that 
 \[
         a(x,z) \mathbb{E}\left[ \min\left\{1,\frac{W_N(x)}{W_N(z)}\right\} \right] \leq a_N(x,z)
 \] for any $x,z\in G$.
 By virtue of Jensen's inequality and $\mathbb{E}[W_N(z)] = 1$ we have $\mathbb{E}[W_N(z)^{-1}]\geq 1$
 as well as 
 \[
  a_N(x,z) \leq \min\left\{ 1, r(x,z) \cdot \mathbb{E}\left[\frac{W_N(x)}{W_N(z)}\right] \right\}
  \leq a(x,z)
 \]
 where we also used the independence of $W_N(x)$ and $W_N(z)$ in the last inequality.
 (The previous arguments are similar to those in \cite[Lemma~3.3 and the proof of Lemma~3.2]{MeLeRo15}.)
 Note that $i_{2,N}(x) \leq i_{2,k}(x)$ for $N\geq k$ by Lemma~\ref{lemI}.
 Hence, one can conclude that
 \begin{align}
&  \abs{a(x,z)- a_N(x,z)}
 \leq a(x,z)
\begin{cases}
 \mathbb{E}\left[\max\left\{0,1-\frac{W_{N}(x)}{W_{N}(z)}\right\}\right] & a(x,z)\geq  a_N(x,z) \\
 \mathbb{E}\left[\frac{W_{N}(x)}{W_{N}(z)}-1\right] & a(x,z)<  a_N(x,z)
\end{cases} \nonumber
\\
 &\qquad \leq 
   a(x,z)\mathbb{E}\abs{1-\frac{W_{N}(x)}{W_{N}(z)}}    
 \leq a(x,z)\, i_{2,N}(z)
      \left( \mathbb{E}\abs{W_{N}(x)-W_{N}(z)}^2 \right)^{1/2} \nonumber\\
    &\qquad \leq a(x,z) i_{2,N}(z) \left[ \left(\mathbb{E}\abs{W_{N}(x)-1}^2\right)^{1/2}	
    + \left(\mathbb{E}\abs{W_{N}(z)-1}^2\right)^{1/2}
    \right] \nonumber \\
		    & \qquad \leq a(x,z) \frac{ i_{2,k}(z)}{\sqrt{N}}(s(x)+s(z)). 
    \qedhere
 \end{align}
\end{proof}

\begin{proof}[Proof of Lemma~\ref{lem: diff_acc_prob1}]
As in the previous proof or 
from \cite[Lemma~3.3 and the proof of Lemma~3.2]{MeLeRo15} an immediate consequence is
 \[
  a(x,z) \mathbb{E}\left[\min\left\{1,\frac{W_{N}(z)}{W_{N}(x)}\right\}\right] \leq
  a_N(x,z) \leq a(x,z)\, \mathbb{E}\left[\frac{W_{N}(z)}{W_{N}(x)}\right].
 \]
 Note that $i_{2,N} \leq i_{2,k}$ for $N\geq k$,
 see Lemma~\ref{lemI}. The rest of the lemma follows as in the previous proof, only the ratio 
 $W_N(x)/W_N(z)$ is reversed.
%
%
\end{proof}
\begin{proof}[Proof of Proposition~\ref{prop: normnorm}]
For random-walk-based Metropolis chains (in particular for $Q$ as assumed in the statement)
by \cite[Theorem~4.1 and the first sentence after the proof of the theorem, as well as, Theorem~4.3, Theorem~4.6]
{JaHa00} we have that $M_a$ is $V_t$-uniformly ergodic with
\begin{align*}
V_{t}(x) & \propto\pi_u(x)^{-t}\propto\exp\left(t\,\frac{\gamma_{Z,Y}}{2}\left(z-x\right)^{2}\right),
\end{align*}
for any $t\in\left(0,1\right)$. 
Hence, Assumption~\ref{ass: metro_V_unif} is satisfied and 
we need to find $t\in(0,1)$ as well as $\beta\in(0,1)$
such that $\left\Vert i_{2,k}\right\Vert _{\infty,V_{t}^{1-\beta}}<\infty$
and $\left\Vert s\right\Vert _{\infty,V_{t}^{\beta}}<\infty$ for some $k\in\mathbb{N}$. 
For showing $\left\Vert s\right\Vert _{\infty,V_{t}^{\beta}}<\infty$
we use \eqref{eq:moment_W_example} to see that
\[
s\left(x\right)\leq \widetilde C
\exp\left( \left(\frac{\gamma_{Z}}{\gamma_{Y}+2\gamma_{Z}}\right)\frac{\gamma_{Z,Y}}{2}\left(z-x\right)^{2}\right),
\]
for some $\widetilde C<\infty$. Hence
\begin{align*}
\frac{s(x)}{V_t(x)^{\beta}} 
& \leq \widetilde C 
\exp\left( \left(\frac{\gamma_{Z}}{\gamma_{Y}+2\gamma_{Z}}-t\beta\right)\frac{\gamma_{Z,Y}}{2}(z-x)^{2}\right) ,
\end{align*}
and choosing $\beta\in(0,1)$ such that 
\begin{equation}
t\beta=\frac{\gamma_{Z}}{\gamma_{Y}+2\gamma_{Z}}\label{eq:NNexamp_condn}
\end{equation}
 leads to $\left\Vert s\right\Vert _{\infty,V_{t}^{\beta}}<\infty$. 
 In order to show $\left\Vert i_{2,k}\right\Vert _{\infty,V_{t}^{1-\beta}}<\infty$,
we first use Lemma~\ref{lemI}~\ref{en: 3_inv_moment} and obtain for any $x\in G$ and any $k \in \mathbb{N}$
\[
i_{2,k}(x)=\mathbb{E}[W_{k}(x)^{-2}]^{\frac{1}{2}}\leq\mathbb{E}\left[W_{1}(x)^{-\frac{2}{k}}\right]^{\frac{k}{2}}.
\]
Then, for $k>2\gamma_{Z}/\gamma_{Y}$ by \eqref{eq:moment_W_example} we have 
\begin{align*}
\mathbb{E}\left[W_{1}(x)^{-\frac{2}{k}}\right]^{\frac{k}{2}} 
 & \propto \exp\left( \left(\frac{\gamma_{Z}\left(1+\frac{2}{k}\right)}{\gamma_{Y}-\frac{2}{k}\gamma_{Z}}\right)
 \frac{\gamma_{Z,Y}}{2}\left(z-x\right)^{2}\right).
\end{align*}
Therefore, there is a constant $\widetilde C<\infty$ such that
\begin{align*}
\frac{i_{2,k}(x)}{V_t(x)^{1-\beta}} & 
\leq \widetilde C
\exp\left(\left(\frac{\gamma_{Z}\left(1+\frac{2}{k}\right)}{\gamma_{Y}-\frac{2}{k}\gamma_{Z}}-t(1-\beta)\right)\frac{\gamma_{Z,Y}}{2}(z-x)^{2}\right).
\end{align*}
We have $\left\Vert i_{2,k}\right\Vert _{\infty,V_{t}^{1-\beta}}<\infty$ if
$\frac{\gamma_{Z}\left(1+\frac{2}{k}\right)}{\gamma_{Y}-\frac{2}{k}\gamma_{Z}} 
 \leq t(1-\beta).$
The latter condition holds whenever
\begin{align*}
k\geq & \frac{2\gamma_{Z}\left(1+t(1-\beta)\right)}{\gamma_{Y}t(1-\beta)-\gamma_{Z}},
\end{align*}
provided that $t(1-\beta)>\gamma_{Z}/\gamma_{Y}$. This implies, by \eqref{eq:NNexamp_condn},
that $t$ should be chosen such that
\begin{align}\label{tcond}
t & >\frac{\gamma_{Z}}{\gamma_{Y}}+\frac{\gamma_{Z}}{\gamma_{Y}+2\gamma_{Z}}.
\end{align}
Choosing $t$ such that it satisfies \eqref{tcond} is feasible whenever the right-hand side of \eqref{tcond} 
is smaller than $1$. This is the case if $\gamma_{Y}>\sqrt{2}\gamma_{Z}$.
\end{proof}

\end{appendix}
\providecommand{\bysame}{\leavevmode\hbox to3em{\hrulefill}\thinspace}
\providecommand{\MR}{\relax\ifhmode\unskip\space\fi MR }
\providecommand{\MRhref}[2]{%
  \href{http://www.ams.org/mathscinet-getitem?mr=#1}{#2}
}
\providecommand{\href}[2]{#2}

\end{document}